\newcommand{\newblock}{}
\let\csname equation*\endcsname\relax
\let\csname endequation*\endcsname\relax
\pgfplotsset{compat=1.17}
\newtheorem{theorem}{Theorem}[section]
\newtheorem{definition}[theorem]{Definition}
\newtheorem{lemma}[theorem]{Lemma}%lemma
\newtheorem{proposition}[theorem]{Proposition}
\newtheorem{remark}[theorem]{Remark}
\newtheorem{example}[theorem]{Example}
\newcommand{\SU}{\operatorname{SU}}
\newcommand{\su}{\mathfrak{su}}
\newcommand{\CNOT}{\operatorname{CNOT}}
\begin{document}

\title{Scaling of symmetry-restricted Lie groups}
%\author{Authors removed for peer review}
\author{Maximilian Balthasar Mansky$^1$, Miguel Armayor Martinez$^1$, Alejandro Bravo de la Serna$^1$, Santiago Londoño Castillo$^1$, Dimitra Nikolaidou$^1$, Gautham Sathish$^1$, Zhihao Wang$^1$, Sebastian Wölckert$^1$ and Claudia Linnhoff-Popien$^1$}
\address{$^1$Department of Informatics, LMU Munich, Oettingenstrasse 67, 80538 Munich,. Germany}
\ead{maximilian-balthasar.mansky@ifi.lmu.de}

\vspace{10pt}
\begin{indented}
\item[]February 2025
\end{indented}

\begin{abstract}

Symmetries are omnipresent in physics and have been used to reduce the number of degrees of freedom of systems. In this work, we investigate the properties of $\mathcal{M}\SU(N)$, $\mathcal{M}$-invariant subspaces of the special unitary Lie group $\SU(N)$. This group is relevant to quantum computing and quantum systems in general.  We demonstrate that for certain choices of $\mathcal{M}$, the subset $\mathcal{M}\SU(N)$ inherits many topological and group properties from $\SU(N)$. We then present a combinatorial method for computing the dimension of such subspaces when $\mathcal{M}$ is a representation of a permutation group acting on qubits $\left(G\SU(N)\right)$, or a Hamiltonian $\left(H^{(N)}\SU(N)\right)$. The Kronecker product of $\su(2)$ matrices is employed to construct the Lie algebras associated with different permutation-invariant groups $G\SU(N)$. Numerical results on the number of dimensions support the developed theory. %\cite{ragone2023unified}.
\end{abstract}
%\keywords{Lie groups, Symmetry}
\submitto{\jpa}

\maketitle
%\tableofcontents

%\listoftodos

\section{Introduction}

Invariant systems are common across many areas of science. They appear in the rotational symmetry of a classic mechanical problem \cite{landau_mechanics_1976}, position-free wave ansatzes in electro-dynamics \cite{landau_classical_1975} or the treatment of quantum-mechanical structures \cite{landau_quantum_1977}. The reduction of problem size as its number of dimensions allows for an easier treatment of the problem at hand, down to analytical solutions. 
Symmetries also appear in the context of questions in computer science, for example the properties of graphs. Some global properties of graphs are invariant under transformations, for example the graph connectedness property under a permutation of nodes.

The solution to a problem is a transformation from the problem to the solution is located somewhere in a computational space. It can be found via algorithms, deep thought, luck or machine learning. Machine learning provides a framework to reason about solution spaces and their sizes. (Classical) Machine learning works by taking a feature vector and transforming it with successive non-linear transformations. This structure has been proven to be able to approximate any continuous function, known as Cybenko's theorem \cite{cybenko_approximation_1989}. There is a plethora of different architectures suitable to different tasks \cite{zhang_new_2010, vaswani_attention_2017, dhillon_convolutional_2020}. In this work, we focus on a particular subfield, quantum machine learning \cite{biamonte_quantum_2017, cerezo_variational_2021}. 

Quantum computers form a new computational paradigm, where the information is no longer processed in discrete bits but rather represented on quantum systems, with distinctively quantum effects, such as superposition and entanglement, now part of the calculation. In a quantum computer, quantum circuits perform calculations and can be represented as a series of one- and two-qubit gates \cite{nielsen_quantum_2010}. In a variational quantum circuit, some of the gates contain parameters that can be adjusted to fit input to expected output, much like in a machine learning system \cite{biamonte_quantum_2017, cerezo_variational_2021}. Whether quantum machine learning provides an advantage over classical approaches remains a topic of debate \cite{schuld_effect_2021, bowles_better_2024}. Here we do not evaluate quantum over classical, but rather focus on the properties of quantum circuits and in particular cases where there is an external symmetry posed on them.

%The larger concept of a symmetry can be included into a computational ansatz. This is true for classical computing as well as for quantum computing. The latter is a new computing paradigm based on quantum mechanics to transform states that represent computational information into each other. Rather than letting the system evolve according to its physical interaction, the evolution is restricted with control elements called gates.

The restriction of the computational space to that of the problem can have a positive impact on the accuracy and training performance of the computation \cite{larocca_group-invariant_2022, meyer_exploiting_2023, schatzki_theoretical_2024, ragone_unified_2024, mansky_permutation-invariant_2023}. Related to the performance improvement is the size of available solution space, as defined as the number of parameters that can influence the result of the calculation \cite{ragone_unified_2024, cerezo_does_2023}. We provide a framework for calculating the number of parameters in the quantum circuit ansatz depending on the underlying problem symmetry. Construction of symmetry-invariant quantum circuits is straightforward in many cases. While not the focus of this work, we explain how to construct a quantum circuit invariant under any arbitrary discrete symmetry that can be expressed as a permutation. Construction methods for symmetries based on other structures exist \cite{nguyen_theory_2024}.

In this work, we examine the mathematical spaces related to these restrictions and prove that they are well-behaved, in particular that they are a connected Lie subgroup. The number of dimensions can be explicitly computed using combinatorics. Our results confirm and extend the results existing in literature \cite{schatzki_theoretical_2024, mansky_permutation-invariant_2023, barut_dynamical_1965, wiersema_classification_2023}. Quantum circuits can be explicitly constructed for permutation-based symmetries. For restrictions based on a set of matrices, the number of dimensions can be explicitly determined. We do not consider continuous symmetries, in contrast to the literature \cite{schatzki_theoretical_2024}, as the restricted space is not always well-behaved. In particular, the subgroup is no longer connected for high-dimensional continuous groups set as the restriction. We explore a numerical evaluation of  quantum circuits that are part of these symmetry-restricted Lie groups and find that their number of dimensions is proportional to the size of the symmetry-restricted space to which they relate. This indicates that the symmetry restriction is dominates the size of the quantum circuit. The behaviour is independent of the exact construction of the quantum circuit.

After a literature review in section \ref{sec:related-work}, we provide a mathematical framework for the treatment of symmetry invariant finite dimensional subgroups of the special unitary group $\SU$. The subgroups are well-behaved under weak assumptions. The growth of the number of dimensions of these subspaces is detailed in section \ref{sec:scaling}, with numerical results confirming the theoretical findings. The conclusion in section \ref{sec:discussion} closes our work and details the impact of our results to quantum computing.

\section{Related work}\label{sec:related-work}

Within the realm of physics, invariant subgroups are necessary for the treatment of restricted structures. The permutation invariance appears in the context of the Bose-Einstein condensate \cite{ziff_ideal_1977}. The requirement also has impact on the entanglement properties of the involved particles \cite{markham_entanglement_2011, yadin_thermodynamics_2023}. The properties of symmetry-restricted states has received some more attention than the corresponding operators \cite{toth_entanglement_2009, arrasmith_effect_2021, holmes_connecting_2022}. Although the two topics of states and operators are related, the results cannot be readily transferred, as the relevant mathematical spaces are different. 

For operators, the research on barren plateaus in quantum computing has explored the subalgebras and symmetry-restricted groups of $SU(2^n)$ to a limited extend \cite{larocca_group-invariant_2022, meyer_exploiting_2023, ragone_unified_2024, 
mansky_permutation-invariant_2023, nguyen_theory_2024,  marvian_restrictions_2022}. The treatment of the subgroups and subalgebras is accidental rather than methodical, with key concepts such as connectivity left untreated. For the treatment of barren plateaus as an inherent result of exponentially scaling subgroups it is sufficient, but it leaves a gap in the full treatment. We fill this gap for symmetry-based and Hamiltonian-based subgroups. These symmetry-restricted subspaces can also be found in classical machine learning, in theory \cite{bronstein_geometric_2021, klus_symmetric_2021} and application, for example in graph neural networks \cite{zhou_graph_2020}. The results are not immediately transferable, as the structure of the covered space is different, with quantum machine learning working on the closed compact $SU(2^n)$ and classical machine learning on matrices in $\mathbb{R}^{n\times m}$ with a complex geometry induced by the activation functions.

On the side of mathematics, the exploration of symmetry-restricted subgroups happens on an ad-hoc basis. The  subject receives some treatment in \cite{dungey_analysis_2013}. A  treatment of symmetric states as the counterpart to the operations considered here can be found in \cite{harrow_church_2013}.

\section{Symmetry Invariant Subgroups of $\SU(n)$}\label{sec:theoretical-framework}

    The framework of quantum computing relies on the same mathematics as that of quantum mechanics, where quantum states can be described as rays $|\phi\rangle$ on a complex vector space and operations between states as matrices in the special unitary group. We provide a self-contained introduction to some of the properties of $SU(n)$, in particular the $\SU(2^N)$ group relevant for quantum computing, for a more detailed account of the subject we refer the reader to \cite{nielsen_quantum_2010}.

We note that the terms invariance and equivariance are used somewhat interchangeable in the literature. In the following, we call objects invariant if they are unchanged under the symmetry operation – A subgroup is invariant under a symmetry operation. Individual elements within the subgroup are equivariant if they change, but are still elements of that subgroup.

\subsection{The Lie group $\SU(n)$ and its associated Lie algebra $\su(n)$}
\label{sec:propertiesSU}
    % Definition of SU(N)
    \begin{definition}
        The special unitary group $\SU(n)$ is a Lie group equipped with the matrix product, whose elements are $n\times n$ unitary matrices with determinant one.
    \end{definition}
    
    Topologically $\SU(n)$ has the structure of a smooth real manifold of (real) dimension $n^2-1$, which is compact and path connected, i.e. for any $A,\,B \in \SU{(N)}$ there exists a continuous function $f:[0,1]\rightarrow \SU(n)$ such that $f(0)=A$, $f(1)=B$, and it is thus a connected manifold.
\begin{definition}
    \label{def:su(N)}
        The real matrix Lie algebra $\su(n)$ is a real vector space of dimension $n^2-1$ with elements $n \times n$ matrices with complex entries which satisfy the properties of:
    \begin{enumerate}
        \item Skew hermiticity
        \item Tracelessness
    \end{enumerate}
    and whose Lie bracket $[\,\cdot\,,\,\cdot\,]$ is given by the matrix commutator.
\end{definition}  

\begin{lemma}
\label{lemma:Tao}
    If $G$ is a compact connected Lie group with Lie algebra $\mathfrak{g}$, then the exponential map $\exp\colon\mathfrak{g}\to G$ is surjective.
\end{lemma}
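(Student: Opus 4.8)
The plan is to give the result a Riemannian-geometric proof, exploiting compactness to produce a bi-invariant metric and then invoking the Hopf--Rinow theorem. First I would equip $G$ with a bi-invariant Riemannian metric. Starting from an arbitrary inner product on the tangent space $\mathfrak{g}=T_eG$ at the identity, one extends it to a left-invariant metric by translation, and then averages over $G$ against the normalised Haar measure. Compactness of $G$ is exactly what guarantees that this average converges and yields an inner product on $\mathfrak{g}$ invariant under the adjoint action $\operatorname{Ad}$; such an $\operatorname{Ad}$-invariant inner product is equivalent to a bi-invariant metric on $G$.

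The key structural fact I would then establish is that, for a bi-invariant metric, the geodesics through the identity are precisely the one-parameter subgroups $t\mapsto\exp(tX)$ with $X\in\mathfrak{g}$. This follows from the standard computation that the Levi-Civita connection of a bi-invariant metric acts on left-invariant vector fields by $\nabla_XY=\tfrac12[X,Y]$, so that $\nabla_XX=0$; hence the integral curve of the left-invariant field determined by $X$ is a geodesic, and the curve issuing from $e$ is exactly $t\mapsto\exp(tX)$.

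With these two ingredients in hand, surjectivity is immediate. Since $G$ is compact it is complete as a metric space, and by the Hopf--Rinow theorem a connected, complete Riemannian manifold is geodesically complete and has the property that any two of its points are joined by a minimising geodesic. Given an arbitrary $g\in G$, connectedness of $G$ lets me apply this to the pair $(e,g)$: there is a geodesic from $e$ to $g$, which by the previous step has the form $t\mapsto\exp(tX)$ for some $X\in\mathfrak{g}$. Evaluating at the endpoint gives $g=\exp(X)$, so $\exp$ is onto.

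The main obstacle I anticipate is the middle step, verifying that one-parameter subgroups are the geodesics of the bi-invariant metric, since this is where genuine Lie-theoretic and Riemannian content enters, whereas the construction of the metric and the appeal to Hopf--Rinow are comparatively routine once compactness and connectedness are invoked. An alternative route avoiding Riemannian geometry would combine the conjugacy of maximal tori with the obvious surjectivity of $\exp$ onto a torus $\mathbb{R}^k/\mathbb{Z}^k$; I would regard that as a backup, since it rests on the heavier structure theory of compact Lie groups.
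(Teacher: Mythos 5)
Your proof is correct, but it takes a genuinely different route from the one the paper relies on. The paper does not argue the lemma itself; it defers to Corollary 11.10 of Hall's \emph{Lie Groups, Lie Algebras, and Representations}, whose proof runs through the structure theory of compact groups: every element of a compact connected Lie group lies in some maximal torus, all maximal tori are conjugate, and $\exp$ is manifestly surjective onto a torus $\mathbb{R}^k/\mathbb{Z}^k$ --- the approach you explicitly set aside as a backup. Your argument instead imports Riemannian geometry: Haar-averaging produces an $\operatorname{Ad}$-invariant inner product and hence a bi-invariant metric, the computation $\nabla_X Y = \tfrac12[X,Y]$ on left-invariant fields gives $\nabla_X X = 0$, so by uniqueness of geodesics the geodesics through $e$ are exactly $t\mapsto\exp(tX)$, and Hopf--Rinow (compact $\Rightarrow$ complete, connected $\Rightarrow$ any two points joined by a minimizing geodesic) finishes the argument. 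All three steps are sound, and you correctly identify where each hypothesis enters: compactness both for the Haar average and for completeness, connectedness for Hopf--Rinow. The trade-off is real: your route avoids the torus theorem entirely but pays with the Levi-Civita computation and a nontrivial theorem of Riemannian geometry, whereas the cited route is purely Lie-theoretic and yields strictly more (conjugacy of every element into a fixed torus, which is useful elsewhere), at the cost of the heavier structure theory. For the purposes of this paper --- where the lemma is only used to conclude that $\mathcal{M}\SU(2^N)$, once shown compact and connected, is covered by $\exp$ --- either proof serves equally well.
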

\begin{proof}
  See corollary 11.10 in \cite{Hall2015} for details.
\end{proof}

The above lemma guarantees that every group element of $\SU(n)$ can be expressed as the exponential of some element of its Lie algebra $\su(n)$. It is not self-evident that subspaces of $\SU(n)$ inherit this property. In this section, we provide the mathematical formalism to identify invariant subspaces of $\SU(n)$ that inherit properties of $\SU(n)$ such as connectedness.\\

Throughout the paper we will deal with a special case of the special unitary group, namely $\SU(2^N)$ for $1\leq N\in \mathbb{N}$. The dimension of the corresponding real Lie algebra $\mathfrak{su}(2^N)$ is $4^N-1$. This means that if we want to construct a basis of $\mathfrak{su}(2^N)$, we have to find a total of $4^N-1$ matrices in $\text{M}_{2^N}(\mathbb{C})$ that are $\mathbb{R}-$linearly independent, traceless and skew-Hermitian, as follows from the definition \ref{def:su(N)}. To do this, it will be useful to use the Pauli matrices and the notion of the tensor product of vector spaces, both of which we define in the following:
    
    \begin{definition}
        We define the set of \textbf{extended Pauli matrices} $\{\sigma_i\}_{i=0}^3$ as:

        \begin{equation}
        \sigma_{0}=\left(\begin{array}{ll}1 & 0 \\ 0 & 1\end{array}\right), \quad \sigma_{1}=\left(\begin{array}{ll}0 & 1 \\ 1 & 0\end{array}\right), \sigma_{2}=\left(\begin{array}{cc}0 & -i \\ i & 0\end{array}\right), \quad \sigma_{3}=\left(\begin{array}{cc}1 & 0 \\ 0 & -1\end{array}\right),
        \end{equation}

        \noindent and we will refer to them simply as \textbf{Pauli matrices}.
    \end{definition}

\begin{definition}
    The tensor product of Pauli matrices is a Pauli string,
    \begin{equation}
        \bigotimes_{i=1}^n \sigma^{(i)} := \sigma^{(1)} \otimes \sigma^{(2)} \otimes \cdots \otimes \sigma^{(n)},
    \end{equation}
    \noindent where each $\sigma^{(i)}$ with $i\in\{1,\ldots, n\}$ in the string can be one of $\{\sigma_0, \sigma_1, \sigma_2, \sigma_3\}$.
\end{definition}

These Pauli strings provide a basis of $\mathfrak{su}(2^N)$, upon the choice of lexicographic ordered tensor product. Further details can be found in appendix \ref{sec:lexicographic-order}.

\begin{remark}
    The Pauli strings form a real basis of $\mathfrak{su}(2^N)$ by setting the tensor product $\otimes$ to be the Kronecker product:
    
    \begin{equation}
        \su(2^N) = \operatorname{span}_{\mathbb{R}} \left\{ i\bigotimes_{i=1}^n \sigma^{(i)}\right\} \backslash \{i\sigma_0^{\otimes N}\}.
    \end{equation}

    \noindent where $\sigma_0^{\otimes N}$ is the tensor product of $\sigma_0$ with itself a total of $N$ times.
\end{remark}

\subsection{Restricted subgroups and their corresponding algebras}\label{ssec:restricted-subgroups}

    The unitary group $\SU(2^N)$ has been studied extensively in the context of quantum computing since the action of quantum gates acting on $N$ qubits can be represented as $\SU(2^N)$ matrices acting on $2^N$ dimensional complex unit-vectors, see \cite{nielsen_quantum_2010} for details. However, as discussed in the previous section, one needs a total of $4^N-1$ matrices to form a basis for its Lie algebra (which suffice to fully describe the group), which makes this group computationally expensive to work with or simulate. A common technique in physics to reduce the number of degrees of freedom of a problem is to exploit intrinsic symmetries of the system. In this section we will study subsets of $\SU(2^N)$ that are invariant under the action of certain set of matrices $\mathcal{M}\subseteq \operatorname{M}_{2^N}(\mathbb{C})$. Before arriving at that point, we need to define the subset of $\SU(2^N)$ that is invariant under a set of matrices $\mathcal{M}\subseteq \operatorname{M}_{2^N}(\mathbb{C})$, denoted by $\mathcal{M}\SU(2^N)$:
    
    \begin{equation}
    \label{eq:definvariantSU}
        \mathcal{M}\SU(2^N):= \left\{ A \in \SU(2^N): A \circ B=B \circ A, \forall B \in \mathcal{M} \right\}.
    \end{equation}
    
    The operation denoted by $\circ$ is the usual matrix product, which we will denote in the sequel by juxtaposition as $AB$. Now, we will show that $\mathcal{M}\SU(2^N)$ is a Lie subgroup of $\SU(2^N)$ using Cartan's closed subgroup theorem. 
    
    \begin{theorem}[Closed subgroup theorem]
        Let $G$ be a Lie group and $H \subseteq G$ be an algebraic subgroup that is also a closed subset of $G$. Then $H$ is an embedded Lie subgroup.
    \end{theorem}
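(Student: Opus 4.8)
The plan is to follow the classical proof of Cartan's theorem: build the Lie algebra of $H$ directly out of the exponential map and then show that $\exp$ furnishes an embedded submanifold chart at the identity. Writing $\mathfrak{g}$ for the Lie algebra of $G$, I would first define the candidate subalgebra
\begin{equation}
    \mathfrak{h} := \left\{ X \in \mathfrak{g} : \exp(tX) \in H \text{ for all } t \in \mathbb{R} \right\}.
\end{equation}
The first task is to verify that $\mathfrak{h}$ is a Lie subalgebra of $\mathfrak{g}$. Closure under scalar multiplication is immediate from the definition. Closure under addition follows from the Lie product formula $\exp(t(X+Y)) = \lim_{n\to\infty}\left(\exp(tX/n)\exp(tY/n)\right)^n$, since each factor lies in the subgroup $H$ and $H$ is closed, so the limit lies in $H$ as well. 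Closure under the bracket follows in the same manner from the commutator limit formula for $\exp(t^2[X,Y])$.

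The central step, and the main obstacle, is to show that $\exp$ maps a neighbourhood of $0$ in $\mathfrak{h}$ onto a neighbourhood of the identity in $H$; equivalently, that $H$ does not accumulate at the identity along directions transverse to $\mathfrak{h}$. I would fix a complementary subspace $\mathfrak{m}$ with $\mathfrak{g} = \mathfrak{h}\oplus\mathfrak{m}$ and argue by contradiction: if every transverse neighbourhood met $H$, there would be a sequence $Y_n\to 0$ in $\mathfrak{m}\setminus\{0\}$ with $\exp(Y_n)\in H$. Setting $c_n = |Y_n|$ and passing to a subsequence, the unit vectors $Y_n/c_n$ converge to some $Y\in\mathfrak{m}$ with $|Y|=1$. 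For any fixed $t$, choosing integers $k_n = \lfloor t/c_n\rfloor$ gives $k_n c_n\to t$, hence $\exp(Y_n)^{k_n} = \exp(k_n c_n\, Y_n/c_n)\to\exp(tY)$; as each power lies in $H$ and $H$ is closed, $\exp(tY)\in H$ for all $t$, so $Y\in\mathfrak{h}\cap\mathfrak{m}=\{0\}$, contradicting $|Y|=1$. This is precisely where closedness of $H$ is indispensable, and it is the most delicate part of the argument.

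With the transverse directions controlled, the map $(X,Y)\mapsto\exp(X)\exp(Y)$ on $\mathfrak{h}\oplus\mathfrak{m}$ is a local diffeomorphism at the origin whose restriction to $\mathfrak{h}$ parametrizes $H$ near the identity, yielding an embedded submanifold chart at $e$. I would then propagate this chart to an arbitrary $g\in H$ by left translation $L_g$, which is a diffeomorphism of $G$ carrying $H$ to itself, producing compatible charts at every point of $H$. Since $H$ is assumed to be an abstract subgroup, combining the subgroup structure with these slice charts shows that $H$ is an embedded Lie subgroup, and the smoothness of multiplication and inversion is inherited from $G$ by restriction. Note that $G$ is only assumed to be a Lie group, so no connectivity of $H$ or $G$ is used; the construction is purely local and then translated.
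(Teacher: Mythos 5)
Your proof is correct and follows essentially the same route as the paper, which does not argue the theorem itself but cites Theorem 20.12 of Lee (2012), whose proof is exactly your argument: define $\mathfrak{h}=\{X\in\mathfrak{g}:\exp(tX)\in H \text{ for all } t\in\mathbb{R}\}$, use closedness of $H$ in the rescaled-sequence lemma ($Y_n/|Y_n|\to Y$, $\exp(Y_n)^{k_n}\to\exp(tY)\in H$) to rule out transverse accumulation, and convert the resulting slice chart at the identity into charts everywhere by left translation. The only cosmetic looseness is in how you extract the sequence $Y_n\in\mathfrak{m}\setminus\{0\}$ — one should first write nearby elements of $H$ as $\exp(X_n)\exp(Y_n)$ via the product chart and multiply by $\exp(-X_n)\in H$ — but you introduce that chart in the next step, so the argument is complete in substance.
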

    \begin{proof}
        See Theorem 20.12 in \cite{Lee2012} for details.
    \end{proof}
    
    \begin{lemma}
    \label{lemma:subgroup}
        $\mathcal{M}\SU(2^N)$ is an algebraic subgroup of $\SU(2^N)$ .
    \end{lemma}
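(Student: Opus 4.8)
The plan is to observe that $\mathcal{M}\SU(2^N)$ is exactly the centralizer of the set $\mathcal{M}$ inside $\SU(2^N)$, and then to verify the three defining axioms of an abstract (algebraic) subgroup: it contains the identity, it is closed under the matrix product, and it is closed under inversion. Because $\SU(2^N)$ is already a group under multiplication, at each step I only need to confirm two facts about the candidate element — that it remains in $\SU(2^N)$, which is inherited for free, and that it commutes with every $B \in \mathcal{M}$.

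First I would check the identity: $\mathbbm{1}_{2^N}$ is unitary with unit determinant, hence lies in $\SU(2^N)$, and it commutes with every matrix, so $\mathbbm{1}_{2^N} \in \mathcal{M}\SU(2^N)$ and the set is nonempty. For closure under products, take $A_1, A_2 \in \mathcal{M}\SU(2^N)$; their product is again special unitary, and for any $B \in \mathcal{M}$ the two commutation relations give
\begin{equation}
(A_1 A_2) B = A_1 (A_2 B) = A_1 (B A_2) = (A_1 B) A_2 = (B A_1) A_2 = B (A_1 A_2),
\end{equation}
so $A_1 A_2 \in \mathcal{M}\SU(2^N)$.

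For closure under inverses, let $A \in \mathcal{M}\SU(2^N)$; its inverse exists and stays in $\SU(2^N)$, and multiplying the relation $AB = BA$ on both the left and the right by $A^{-1}$ yields $A^{-1} B = B A^{-1}$ for every $B \in \mathcal{M}$, so $A^{-1} \in \mathcal{M}\SU(2^N)$.

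I do not anticipate a substantive obstacle, as this is the standard fact that a centralizer forms a subgroup; the only point requiring attention is to track that products and inverses remain special unitary in addition to preserving commutation with all of $\mathcal{M}$. It is worth emphasizing that this lemma establishes only the algebraic subgroup structure — the topological closedness required to apply the closed subgroup theorem and conclude that $\mathcal{M}\SU(2^N)$ is an embedded Lie subgroup must be handled in a separate step.
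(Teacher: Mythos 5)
Your proof is correct, and it actually differs from the paper's in a small but meaningful way. The paper verifies closure by conjugating with elements of $\mathcal{M}$: it writes $C(AB)C^{-1} = (CAC^{-1})(CBC^{-1}) = AB$ and $CA^{-1}C^{-1} = A^{-1}$, which tacitly assumes every $C \in \mathcal{M}$ is invertible — something not guaranteed by the hypothesis $\mathcal{M} \subseteq M_{2^N}(\mathbb{C})$, and in fact false for the paper's own later application, since the Hamiltonian $H^{(N)}$ of section \ref{sec:EnergyInvariance} is singular (it annihilates $|0\cdots 0\rangle$). Your argument sidesteps this entirely: for products you chain the commutation relations $(A_1A_2)B = A_1(BA_2) = B(A_1A_2)$ with no inversion of $B$, and for inverses you multiply $AB = BA$ on both sides by $A^{-1}$, using only the invertibility of $A \in \SU(2^N)$, which always holds. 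So your route proves the lemma at the stated level of generality, whereas the paper's proof as written needs either the extra hypothesis that $\mathcal{M}$ consists of invertible matrices or a rewriting along your lines. Your closing remark correctly delimits the scope: topological closedness is a separate step (the paper's lemma \ref{lemma:topoclosed}), and you do not conflate the two.
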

    \begin{proof}
        Since $\mathcal{M}\SU(2^N)$ is a non-empty subset of $\SU(2^N)$ (note that the identity matrix commutes with any element), it only remains to show that it is closed under the group operation, which in this case is matrix multiplication. Let $A,B \in \mathcal{M}\SU(2^N)$ and $C \in \mathcal{M}$, then:
            \begin{equation}
                C  \left(A   B \right) C^{-1}= C  A  \left( C C^{-1} \right)  B C^{-1}
            \end{equation} 
            \begin{equation}
                = \left( C A  C^{-1}\right)  \left( C B  C^{-1}\right)= A B,
            \end{equation}
        and 
        \begin{equation}
            C A^{-1}C^{-1}= C A^{-1} C^{-1} \left( A A^{-1}\right)= C A^{-1} C^{-1} \left(C A C^{-1}\right) A^{-1}= A^{-1}.
        \end{equation}
        Thus, $\mathcal{M}\SU(2^N)$ is closed under products and inverses and is an algebraic subgroup of $\SU(2^N)$.
    \end{proof}
    
    \begin{lemma}
        \label{lemma:topoclosed}
         $\mathcal{M}\SU(2^N)$ is a topologically closed subset of $\SU(2^N)$. 
    \end{lemma}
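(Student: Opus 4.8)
The plan is to show that $\mathcal{M}\SU(2^N)$ is the preimage of a closed set under a continuous map, hence closed in $\SU(2^N)$. For each fixed $B \in \mathcal{M}$, define the commutator map $\Phi_B \colon \SU(2^N) \to \operatorname{M}_{2^N}(\mathbb{C})$ by $\Phi_B(A) = AB - BA$. Since matrix multiplication and subtraction are continuous (indeed polynomial in the matrix entries), each $\Phi_B$ is continuous. The condition $AB = BA$ is precisely $\Phi_B(A) = 0$, so the set of $A$ commuting with that particular $B$ is $\Phi_B^{-1}(\{0\})$, the preimage of the closed singleton $\{0\}$, and is therefore closed in $\SU(2^N)$.

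Next I would write $\mathcal{M}\SU(2^N)$ as the intersection over all $B \in \mathcal{M}$ of these individual closed sets:
\begin{equation}
    \mathcal{M}\SU(2^N) = \bigcap_{B \in \mathcal{M}} \Phi_B^{-1}(\{0\}).
\end{equation}
An arbitrary intersection of closed sets is closed, so $\mathcal{M}\SU(2^N)$ is closed as a subset of $\SU(2^N)$. This works regardless of whether $\mathcal{M}$ is finite or infinite, which is convenient since the paper allows $\mathcal{M}$ to be any subset of $\operatorname{M}_{2^N}(\mathbb{C})$.

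I do not expect any serious obstacle here; the argument is essentially topological bookkeeping. The only points requiring a little care are to note that the singleton $\{0\}$ is closed in the target space $\operatorname{M}_{2^N}(\mathbb{C}) \cong \mathbb{C}^{4^N}$ (which holds because that space is Hausdorff), and to confirm that closedness is taken relative to the subspace topology on $\SU(2^N)$ rather than in the ambient matrix space — the continuity of $\Phi_B$ restricted to $\SU(2^N)$ and the intersection argument both respect this. Combined with Lemma \ref{lemma:subgroup} and the Closed subgroup theorem, this immediately yields that $\mathcal{M}\SU(2^N)$ is an embedded Lie subgroup, which is presumably the next corollary.
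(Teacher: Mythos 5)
Your proof is correct and follows essentially the same route as the paper: both express $\mathcal{M}\SU(2^N)$ as an (arbitrary) intersection of preimages of closed sets under continuous commutator maps $A \mapsto AB - BA$. The only cosmetic difference is that you work directly in the subspace topology of $\SU(2^N)$, so the commutant condition alone suffices, whereas the paper works in $\operatorname{GL}(2^N,\mathbb{C})$ and therefore also intersects with $f_1^{-1}(\{\mathbbm{1}\})$ and $\det^{-1}(\{1\})$ to cut out $\SU(2^N)$ itself --- both are valid instances of the same topological argument.
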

    \begin{proof}
        Consider the usual determinant map 
         \begin{equation}
         	\det\colon \operatorname{GL}(2^N,\mathbb{C})\to \mathbb{C},
        \end{equation}
         and the two maps $f_1$ and $f_U$ defined as
         \begin{align}
         f_1\colon\operatorname{GL}(2^N,\mathbb{C})&\to\operatorname{GL}(2^N,\mathbb{C})\nonumber\\
        A &\to AA^\dagger,\\
         f_C \colon \operatorname{GL}(2^N,\mathbb{C})&\to\operatorname{GL}(2^N,\mathbb{C})\nonumber\\
        A &\to CA - AC,
         \end{align}
        for every  $C \in \mathcal{M}$. Then, we may view $\mathcal{M}\SU(2^N)$ as
            
            \begin{equation}
                 \mathcal{M}\SU(2^N) = \bigcap_{C \in \mathcal{M}}f_U^{-1}(\{\textbf{0}\}) \;\bigcap\; f_1^{-1}(\{\mathbbm{1}\}) \;\bigcap\; \text{det}^{-1}(\{1\}),
            \end{equation}
        
        \noindent and since the pre-image under a continuous map of a closed set is closed and the intersection of (arbitrarily many) closed sets is closed, the claim follows.
    \end{proof}

    \begin{theorem}
        $\mathcal{M}\SU(2^N)$ is a compact Lie subgroup of $\SU(2^N)$, i.e. an embedded compact sub-manifold.
    \end{theorem}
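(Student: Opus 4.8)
The plan is to assemble the result directly from the two preceding lemmas together with the closed subgroup theorem, and then to upgrade the embedded-submanifold conclusion to a compact one by a standard point-set argument. There is essentially no new computation to perform: the substantive work has already been carried out in Lemmas \ref{lemma:subgroup} and \ref{lemma:topoclosed}, and this theorem is really a synthesis of them.

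First I would invoke Lemma \ref{lemma:subgroup}, which establishes that $\mathcal{M}\SU(2^N)$ is an algebraic subgroup of $\SU(2^N)$, and Lemma \ref{lemma:topoclosed}, which establishes that it is a topologically closed subset. Since $\SU(2^N)$ is a Lie group, these are precisely the two hypotheses required by the closed subgroup theorem, and so I would conclude immediately that $\mathcal{M}\SU(2^N)$ is an embedded Lie subgroup of $\SU(2^N)$, that is, an embedded submanifold which is simultaneously a group under matrix multiplication.

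It then remains to verify compactness. Here I would use the fact, recorded in Section \ref{sec:propertiesSU}, that $\SU(2^N)$ is itself compact, together with the closedness from Lemma \ref{lemma:topoclosed}. Since a closed subset of a compact topological space is compact, $\mathcal{M}\SU(2^N)$ is compact. Combining this with the previous step yields that $\mathcal{M}\SU(2^N)$ is a compact embedded Lie subgroup, which is exactly the claim.

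The one point worth checking carefully — and the closest thing to an obstacle in an otherwise routine deduction — is that the two notions of ``closed'' are compatible. The proof of Lemma \ref{lemma:topoclosed} exhibits $\mathcal{M}\SU(2^N)$ as an intersection of preimages of closed sets under continuous maps defined on $\operatorname{GL}(2^N,\mathbb{C})$, so a priori it is closed in $\operatorname{GL}(2^N,\mathbb{C})$ rather than in $\SU(2^N)$. However, because $\mathcal{M}\SU(2^N)\subseteq\SU(2^N)$ and the topology on $\SU(2^N)$ is the subspace topology inherited from $\operatorname{GL}(2^N,\mathbb{C})$, closedness in the ambient group restricts to closedness in $\SU(2^N)$, which is what both the closed subgroup theorem and the compactness argument require. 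With this compatibility observed, the theorem follows with no further work.
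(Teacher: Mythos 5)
Your proposal is correct and follows essentially the same route as the paper: combine Lemmas \ref{lemma:subgroup} and \ref{lemma:topoclosed}, apply the closed subgroup theorem, and deduce compactness from closedness in the compact group $\SU(2^N)$. Your extra remark that closedness in $\operatorname{GL}(2^N,\mathbb{C})$ restricts to closedness in the subspace topology of $\SU(2^N)$ is a small point of care the paper leaves implicit, but it does not change the argument.
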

    \begin{proof}
        From lemmas \ref{lemma:subgroup} and \ref{lemma:topoclosed}, it follows that $\mathcal{M}\SU(2^N)$ is a closed subgroup of $\SU(2^N)$ and thus, by the closed subgroup theorem, $\mathcal{M}\SU(2^N)$ is a Lie subgroup. Moreover, since $\mathcal{M}\SU(2^N)$ is a closed subgroup of the compact Lie group $SU(2^N)$, it is compact.
    \end{proof}

    \begin{remark}
        One may take $\mathcal{M} = \SU(2^N)$, so that $\mathcal{M}\SU(2^N) = Z(\SU(2^N))$ is the center of $\SU(2^N)$, which is a finite group and thus a compact Lie group (it is a zero dimensional manifold). We point out that it is nevertheless not connected.
    \end{remark}

    Being a Lie subgroup is a strong condition since it requires the subspace to be an embedded sub-manifold with a differentiable structure. In particular, this guarantees the existence of a tangent space at every point on the manifold. The tangent space of a Lie group at the identity corresponds to its Lie algebra. We are now interested in the Lie algebra $\mathfrak{M}\su(2^N)$corresponding to the Lie group $\mathcal{M}\SU(2^N)$. Let us consider a parameterized curve $A(\cdot)\colon [0,1]\to \mathcal{M}\SU(2^{N})$ through the identity element  such that $A(0)= \mathbbm{1}$. Linearising at $t=0$ yields:

    \begin{equation}
        A(t)= A(0)+ t \left. \frac{dA(t)}{dt} \right|_{t=0}+ \mathcal{O}(t^2),
    \end{equation}
    
    where by definition, $\left. \frac{dA(t)}{dt} \right|_{t=0}\in \mathfrak{M} \su(2^N)$. Imposing the $ \mathcal{M}$-invariance condition on both sides of the above equation leads to the condition that the elements of the Lie algebra $\mathfrak{M}\su(2^N)$ must also verify:
    
    \begin{align}
    \label{eq:definvariantalgebra}
        \mathfrak{M}\su(2^N) &= \left\{ a \in \su(2^N):  Ba = aB, \quad \forall B \in  \mathcal{M} \right\} = \nonumber\\ &= \left\{ a \in \su(2^N):  B aB^{-1} = a, \quad \forall B \in  \mathcal{M} \right\}:= \mathcal{M}\su(2^N)
    \end{align}
    
    We have chosen the same notation as for the Lie group $\mathcal{M}\SU(2^N)$, since the definition is completely analogous. It is straightforward to verify that $\mathcal{M}\su(2^N)$ is a Lie subalgebra of $\su(2^N)$, i.e. closed under the Lie bracket.

    \begin{proposition}
        The Lie Algebra $\mathcal{M}\mathfrak{su}(2^{N})$ as defined above is a Lie subalgebra of $\su (2^N)$, i.e. it is algebraically closed under the Lie bracket of $\mathfrak{su}(2^{N})$.
    \end{proposition}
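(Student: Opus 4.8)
The plan is to exhibit $\mathcal{M}\su(2^N)$ as the intersection of two objects, each already known to be closed under the commutator, so that closure of the intersection follows immediately. First I would note that the defining condition $Ba = aB$ for all $B\in\mathcal{M}$ is $\mathbb{R}$-linear in $a$, so that $\mathcal{M}\su(2^N)$ is a real vector subspace of $\su(2^N)$; this settles the vector-space part and leaves only bracket-closure to verify.

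Next I would introduce the centralizer (commutant) of $\mathcal{M}$ inside the full matrix algebra,
\[
\mathcal{M}' := \{ X \in \operatorname{M}_{2^N}(\mathbb{C}) : XB = BX \ \; \forall B \in \mathcal{M}\},
\]
and observe that $\mathcal{M}'$ is an associative subalgebra of $\operatorname{M}_{2^N}(\mathbb{C})$: it is closed under real linear combinations and under matrix multiplication, since if $X,Y$ each commute with every $B$ then so does $XY$. Being an associative algebra, $\mathcal{M}'$ is automatically closed under the commutator $[X,Y]=XY-YX$, hence is a Lie algebra. By construction $\mathcal{M}\su(2^N) = \mathcal{M}' \cap \su(2^N)$, and since $\su(2^N)$ is itself closed under the bracket (it is a Lie algebra by Definition \ref{def:su(N)}), the intersection of two bracket-closed subspaces is again bracket-closed. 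This concludes the proof.

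For a fully self-contained alternative I would instead take $a_1, a_2 \in \mathcal{M}\su(2^N)$ and $B \in \mathcal{M}$ and slide $B$ through the commutator, using $Ba_i = a_iB$ at each step:
\[
B[a_1,a_2] = Ba_1a_2 - Ba_2a_1 = a_1Ba_2 - a_2Ba_1 = a_1a_2B - a_2a_1B = [a_1,a_2]B .
\]
Separately I would confirm that $[a_1,a_2]$ stays in $\su(2^N)$, i.e. that it is skew-Hermitian, $[a_1,a_2]^\dagger = -[a_1,a_2]$ (which follows from $a_i^\dagger = -a_i$), and traceless, $\operatorname{tr}[a_1,a_2]=0$ (true for any matrices).

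I do not anticipate a genuine obstacle here, as the content is entirely formal. The only point that deserves care is recognizing that the bracket must be shown to remain in $\mathcal{M}\su(2^N)$ in two respects simultaneously — that it still commutes with all of $\mathcal{M}$, and that it is still skew-Hermitian and traceless — and that both follow from standard closure properties rather than from anything specific to $\mathcal{M}$. Phrasing the argument through the intersection $\mathcal{M}'\cap\su(2^N)$ makes both aspects transparent at once and avoids re-deriving the Lie-algebra axioms for $\su(2^N)$.
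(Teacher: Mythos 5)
Your proof is correct, and your second, self-contained computation is essentially the paper's own argument: the paper inserts $C^{-1}C$ between the factors to show $C[a,b]C^{-1}=[a,b]$, which is your ``slide $B$ through the commutator'' step in conjugated form. Your centralizer packaging $\mathcal{M}\su(2^N)=\mathcal{M}'\cap\su(2^N)$ is a clean structural rephrasing of the same idea rather than a genuinely different route. One point in your favour deserves highlighting: by working with the commutation relation $Ba=aB$ directly, you never need $B^{-1}$, whereas the paper's proof conjugates by $C^{-1}$ and thus tacitly assumes every element of $\mathcal{M}$ is invertible. Since $\mathcal{M}\subseteq \operatorname{M}_{2^N}(\mathbb{C})$ is an arbitrary set of matrices --- and in the paper's own energy-invariance application one takes $\mathcal{M}=\{H^{(N)}\}$ with $H^{(N)}$ singular (it has eigenvalue $0$) --- your inverse-free version is the one that actually covers all cases the paper uses, so your formulation is strictly more general and patches a small implicit gap in the published proof. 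Your remark that $[a_1,a_2]$ must also be checked to remain skew-Hermitian and traceless is correct but follows automatically from $\su(2^N)$ being a Lie algebra, as you note via the intersection framing.
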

    \begin{proof}
        % Dados A, B in Gsu, y dada alpha cualquiera en S_N, quiero ver
        % que rho(alpha, [A,B]) in Gsu
        %
        %
        Consider two elements $a, b \in \mathcal{M}\mathfrak{su}(2^N) $ and fix any $C \in \mathcal{M}$:
        \begin{align}
            C [a,b]C^{-1} &= CabC^{-1}-CbaC^{-1}\\
            &= Ca \left( C^{-1}C\right)bC^{-1}-Cb \left( C^{-1}C\right)aC^{-1}\\
             &= \left( C a C^{-1}\right) \left( C b C^{-1}\right)-\left( C b C^{-1}\right) \left( C a C^{-1}\right)\\
             &= ab-ba=[a,b],
        \end{align}
    since this holds for any $C \in \mathcal{M}$, $\mathcal{M}\mathfrak{su}(2^N)$ is a Lie subalgebra of $\su (2^N)$.
    \end{proof}
    
    It is however not guaranteed that any element of $\mathcal{M}\SU(2^N)$ can be written as the exponential of an element in the $\mathcal{M}\su(2^N)$ Lie algebra. This is because if $\mathcal{M}\SU(2^N)$ is not connected, the Lie algebra $\mathcal{M}\su(2^N)$ can only generate the connected component of $\mathcal{M}\SU(2^N)$ containing the identity. Below, we provide a theorem indicating for which sets of matrices $\mathcal{M}$, $\mathcal{M}\SU(2^N)$ is connected and thus, the exponential map from the Lie subalgebra $\mathcal{M}\su(2^n)$ to the Lie subgroup $\mathcal{M}\SU(2^N)$ is surjective. 
    
\begin{theorem}
\label{thm:connectivitySU}
Given a subset $\mathcal{M}\subseteq M_{2^N}(\mathbb{C})$, the $\mathcal{M}$-invariant Lie subgroup $\mathcal{M}\SU(2^N)$ defined in equation \eqref{eq:definvariantSU} is connected if:
    \begin{enumerate}
        \item $\mathcal{M}$ is a (not necessarily finite \emph{\cite{Conrad2014-tc}}) set of commuting and diagonalizable matrices.
        \item $\mathcal{M}$ is a set of matrices of the representation of a permutation group acting on $N$ qubits.
    \end{enumerate}
\end{theorem}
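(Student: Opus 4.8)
The plan is to analyze $\mathcal{M}\SU(2^N)$ through the commutant of $\mathcal{M}$ inside $M_{2^N}(\mathbb{C})$ and to reduce connectivity to the intersection of a product of unitary groups with the determinant-one condition. In both cases the unitary part of the commutant $\mathcal{M}' = \{A : AB = BA\ \forall B\in\mathcal{M}\}$ turns out to be a finite product of full unitary groups $\prod_k U(n_k)$, and $\mathcal{M}\SU(2^N)$ is exactly the subgroup cut out by $\det = 1$. Since each $U(n_k)$ is connected, the only possible source of disconnectedness is the determinant constraint, so the crux in either case is to show that the relevant determinant-one subgroup is connected.

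For the first case I would invoke simultaneous diagonalization: a commuting family of (unitarily) diagonalizable matrices admits a single unitary $W$ with $W^\dagger B W$ diagonal for every $B\in\mathcal{M}$ \cite{Conrad2014-tc}. Grouping basis vectors by their common eigenvalue tuples decomposes $\mathbb{C}^{2^N} = \bigoplus_j V_j$ into orthogonal common eigenspaces, and a unitary $A$ commutes with every $B$ precisely when it preserves each $V_j$. Hence $W^\dagger \mathcal{M}\SU(2^N) W = \{\bigoplus_j U_j : U_j\in U(d_j),\ \prod_j \det U_j = 1\}$ with $d_j = \dim V_j$. This determinant-one subgroup of $\prod_j U(d_j)$ is connected: each factor splits into an $\SU(d_j)$ part (connected) times a central phase, and the admissible phase tuples $\{(\theta_j): \sum_j\theta_j\equiv 0 \bmod 2\pi\}$ form a connected subtorus, so the whole group is path connected to the identity.

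For the second case the matrices $\rho(g)$ do not commute, so I would instead use representation theory of the finite permutation group $G$. The qubit-permutation action is unitary and self-adjoint as a set ($\rho(g)^\dagger = \rho(g^{-1})$), so its commutant is a finite-dimensional $C^*$-algebra; by Schur's lemma and the isotypic decomposition $(\mathbb{C}^2)^{\otimes N} \cong \bigoplus_\lambda \mathbb{C}^{d_\lambda}\otimes\mathbb{C}^{m_\lambda}$ it equals $\bigoplus_\lambda I_{d_\lambda}\otimes M_{m_\lambda}(\mathbb{C})$, whose unitaries are $\bigoplus_\lambda I_{d_\lambda}\otimes U_\lambda$ with $U_\lambda\in U(m_\lambda)$. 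The determinant of such an operator on the full space is $\prod_\lambda (\det U_\lambda)^{d_\lambda}$, so $\mathcal{M}\SU(2^N) \cong \{(U_\lambda): \prod_\lambda(\det U_\lambda)^{d_\lambda}=1\}$. Connectivity then follows from the homotopy exact sequence of the surjection $(U_\lambda)\mapsto\prod_\lambda(\det U_\lambda)^{d_\lambda}$ onto $U(1)$, whose kernel is connected iff the induced map $\mathbb{Z}^{\#\lambda}\to\mathbb{Z}$, $(k_\lambda)\mapsto\sum_\lambda d_\lambda k_\lambda$, is surjective.

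The step I expect to be the real obstacle is exactly this last surjectivity, i.e. ruling out a nontrivial $\pi_0$ arising from the determinant constraint. The clean resolution is the observation that $\gcd_\lambda d_\lambda = 1$: the trivial representation always occurs in the qubit-permutation action (for instance $|0\rangle^{\otimes N}$ is fixed by every qubit permutation), contributing a block with $d_\lambda = 1$, which forces surjectivity and hence connectedness of the kernel. I would also \emph{flag} that in the first case ``diagonalizable'' must be read as unitarily diagonalizable (\emph{normal}): for a genuinely non-normal commuting family the common eigenspaces need not be orthogonal, the unitary commutant can collapse to a finite set, and connectivity fails — so normality, or Fuglede--Putnam to pass from $B$ to $B^\dagger$, is the hypothesis that actually drives the argument.
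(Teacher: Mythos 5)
Your proof is correct, but it takes a genuinely different route from the paper's. The paper argues by explicit paths: it diagonalizes a given $A\in\mathcal{M}\SU(2^N)$, interpolates each eigenphase $\theta_j$ linearly toward a common central phase $\varphi_k=2\pi k/2^N$ (Lemmas \ref{def:diagonalparam}--\ref{lemma:sixthlemma}, where the key technical point is that the interpolation preserves the eigenspace structure and hence commutation with every $S\in\mathcal{M}$), thereby connecting $A$ to the central element $\zeta_k\mathbbm{1}$, and then connects $\zeta_k\mathbbm{1}$ to $\mathbbm{1}$ by a diagonal path that places the phase deficit $e^{it(\varphi_k-2k\pi)}$ in the first coordinate — which lies in $\mathcal{M}\SU(2^N)$ for case 2 precisely because every qubit-permutation matrix fixes the first basis vector. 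You instead classify the whole group: commutant as a $*$-closed algebra, isotypic decomposition $\bigoplus_\lambda I_{d_\lambda}\otimes M_{m_\lambda}(\mathbb{C})$, and $\pi_0$ of the determinant-one subgroup computed from the fibration over $U(1)$, reducing connectivity to surjectivity of $(k_\lambda)\mapsto\sum_\lambda d_\lambda k_\lambda$, i.e. $\gcd_\lambda d_\lambda=1$. Note that your Bézout certificate and the paper's first-coordinate trick are the same fact in two guises: the $G$-fixed vector $|0\rangle^{\otimes N}$ is exactly the trivial subrepresentation with $d_\lambda=1$. What your approach buys is strictly more information: it identifies $\pi_0\cong\mathbb{Z}/(\gcd_\lambda d_\lambda)$ for \emph{any} adjoint-closed $\mathcal{M}$, which immediately explains the paper's own remark that $\mathcal{M}=\SU(2^N)$ yields the disconnected center ($d=2^N$, $m=1$, $\pi_0\cong\mathbb{Z}/2^N$); what the paper's approach buys is an elementary, self-contained construction with no representation theory. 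Two further points in your favor: your normality caveat is a genuine catch — the theorem's ``diagonalizable'' must mean unitarily diagonalizable, as the paper implicitly assumes when it writes $A=PD_AP^\dagger$ for the simultaneous $P$, and a single non-normal diagonalizable matrix such as $\left(\begin{smallmatrix}1 & 1\\ 0 & 2\end{smallmatrix}\right)$ has unitary commutant $\{e^{i\theta}\mathbbm{1}\}$, making the determinant-one condition disconnect it; and your torus argument handles the determinant constraint cleanly, whereas the paper's case-1 path $\Gamma(t)=P^{-1}\operatorname{diag}(e^{it\theta_j})P$ has $\det\Gamma(t)=e^{2\pi ikt}$ and so leaves $\SU(2^N)$ for type $k\neq 0$ unless one inserts the $\varphi_k$ correction of Lemma \ref{lemma:fifthlemma}.
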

\begin{proof}
    The details of this proof can be found in appendix \ref{sec:proofconnected}.
\end{proof}

\begin{theorem}
\label{thm:expissur}
 For the subsets $\mathcal{M}$ listed in theorem \ref{thm:connectivitySU}, $\mathcal{M}\su(2^N)$, as defined in equation \eqref{eq:definvariantalgebra}, is the Lie algebra of the Lie subgroup $\mathcal{M} \SU(2^N)$, and the exponential map $\exp:  \mathcal{M} \su(2^N) \rightarrow \mathcal{M} \SU(2^N)  $ is surjective.
\end{theorem}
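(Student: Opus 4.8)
The plan is to prove the theorem in two logically separate pieces. The first claim---that $\mathcal{M}\su(2^N)$ is \emph{the} Lie algebra of $\mathcal{M}\SU(2^N)$---is really an identification of an already-established subalgebra with the true tangent space at the identity, and I would establish it before touching surjectivity. The second claim---surjectivity of $\exp$---then follows immediately from connectivity (Theorem~\ref{thm:connectivitySU}) together with Lemma~\ref{lemma:Tao}, provided the first claim holds.

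For the first part, I would proceed by a two-sided inclusion between $\mathcal{M}\su(2^N)$ and the genuine Lie algebra $\operatorname{Lie}(\mathcal{M}\SU(2^N))$, which (by the earlier theorem showing $\mathcal{M}\SU(2^N)$ is an embedded Lie subgroup of $\SU(2^N)$) may be identified with the tangent space $T_{\mathbbm 1}\bigl(\mathcal{M}\SU(2^N)\bigr)\subseteq \su(2^N)$. The inclusion $\mathcal{M}\su(2^N)\subseteq \operatorname{Lie}(\mathcal{M}\SU(2^N))$ is the easier direction: given $a\in\mathcal{M}\su(2^N)$, the curve $t\mapsto \exp(ta)$ lies in $\SU(2^N)$ and satisfies $B\exp(ta)B^{-1}=\exp(tBaB^{-1})=\exp(ta)$ for every $B\in\mathcal{M}$, hence it is a smooth curve through the identity inside $\mathcal{M}\SU(2^N)$ whose derivative at $t=0$ is $a$; therefore $a$ is a tangent vector. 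The reverse inclusion uses exactly the linearisation argument already sketched in the text preceding equation~\eqref{eq:definvariantalgebra}: any tangent vector arises as $A'(0)$ for a smooth curve $A(t)\in\mathcal{M}\SU(2^N)$ with $A(0)=\mathbbm 1$, and differentiating the invariance relation $BA(t)=A(t)B$ at $t=0$ forces $Ba=aB$, so $a\in\mathcal{M}\su(2^N)$. Combining both inclusions gives the equality $\operatorname{Lie}(\mathcal{M}\SU(2^N))=\mathcal{M}\su(2^N)$.

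For the surjectivity part, I would invoke the restriction in the hypotheses: by Theorem~\ref{thm:connectivitySU}, for the listed families $\mathcal{M}$ the subgroup $\mathcal{M}\SU(2^N)$ is connected, and the preceding theorem shows it is a compact Lie subgroup. A connected compact Lie group with Lie algebra $\mathfrak g$ has surjective exponential map by Lemma~\ref{lemma:Tao}. Applying this with $G=\mathcal{M}\SU(2^N)$ and $\mathfrak g=\mathcal{M}\su(2^N)$ (the identification just proved) yields that $\exp\colon\mathcal{M}\su(2^N)\to\mathcal{M}\SU(2^N)$ is onto, completing the proof.

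The main obstacle, and the only step requiring genuine care, is the reverse inclusion in the first part---verifying that every abstract tangent vector of the embedded submanifold $\mathcal{M}\SU(2^N)$ actually satisfies the algebraic commutation condition. This is subtle precisely because $\mathcal{M}\SU(2^N)$ could in principle have fewer tangent directions than the naive constraint suggests, or the differentiation of the matrix identity $BA(t)B^{-1}=A(t)$ could be contaminated by higher-order terms; the clean way to control this is to note that the invariance condition is a \emph{closed linear} constraint on the entries of $A(t)$, so its derivative is obtained termwise and the $\mathcal{O}(t^2)$ remainder drops out, giving $B a B^{-1}=a$ exactly. Everything downstream---connectivity and Lemma~\ref{lemma:Tao}---is then quoted rather than reproved, so the argument is short once this identification is pinned down.
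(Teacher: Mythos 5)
Your proposal is correct and follows essentially the same route as the paper: the surjectivity claim is obtained exactly as in the text, by combining compactness, the connectivity guaranteed by Theorem~\ref{thm:connectivitySU}, and Lemma~\ref{lemma:Tao}. Your only addition is to make the identification $\operatorname{Lie}(\mathcal{M}\SU(2^N))=\mathcal{M}\su(2^N)$ fully rigorous via a two-sided inclusion (the curve $t\mapsto\exp(ta)$ for one direction, differentiation of the invariance relation for the other), where the paper establishes only the latter inclusion through the linearisation argument preceding equation~\eqref{eq:definvariantalgebra} and leaves the converse implicit.
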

\begin{proof}
    This follows from lemma \ref{lemma:Tao} since $\mathcal{M} \SU(2^N)$ is compact and by theorem \ref{thm:connectivitySU} also connected.
\end{proof}

In particular, the above theorem ensures that for the $\mathcal{M}$ discussed above, any element $C \in \mathcal{M} \SU(2^N)$ can be written as the exponential of an element in $\mathcal{M}\su(2^N)$. This is often overlooked in the existing literature, and many authors fail to consider the connectedness property to ensure the surjectivity of the exponential map. This property becomes relevant when studying applications of $\mathcal{M}$-invariant Lie groups, such as in quantum neural networks \cite{nguyen_theory_2024} and quantum operator controllability \cite{quantumcontrol}, where connectedness is necessary to ensure that any group element can be attained as an exponential of the Lie algebra generators. The aim of this section was to give a rigorous and complete account of the mathematical results which ensure that for some $\mathcal{M}$, the $\mathcal{M}$-invariant Lie groups are completely generated by $\mathcal{M}$-invariant Lie subalgebras. We will now discuss the special cases when $\mathcal{M}$ is a permutation group or a single matrix.

\subsection{Permutation invariant Lie subgroups and subalgebras}

\label{sec:permutationinvariance}

    In this section we will explore the case in which $\mathcal{M}$ is a set of matrices that arise from the representation of a permutation group, which we shall denote by $G$, that acts on the Lie algebra $\su(2^N)$. The resulting group will be called \textbf{permutation invariant subgroup} of $\SU(2^N)$ and denoted $G\SU(2^N)$.
 
    First, let us define the representation of the permutation group on the Lie algebra $\su(2^N)$ using the Pauli basis defined in section \ref{sec:propertiesSU}:
 
    \begin{definition}
    \label{def:rhoalpha}
         Let $G$ be a permutation group on at most $N$ symbols. Its action on $\mathfrak{su}(2^N)$ is defined to be the map
    \begin{align}
        \rho\colon G \times \mathfrak{su} \left(2^{N}\right) &\to \mathfrak{su} \left(2^{N}\right) \\
        (\alpha, a) &\mapsto \rho(\alpha, a),
    \end{align} 
    
        \noindent where $\rho(\alpha,\cdot)$ is linear in the second argument for any $\alpha\in G$ and $\rho(\alpha, a)$ is defined on each basis element as:

    \begin{equation}
        \rho\left(\alpha, \sigma_{\mu_{1}} \otimes \cdots \otimes \sigma_{\mu_{N}}\right):= \sigma_{\mu_{\alpha(1)}} \otimes \cdots \otimes \sigma_{\mu_{\alpha(N)}}.
    \end{equation}
    
    \noindent where each $\mu_i\in\{0,1,2,3\}$ for $i\in\{1,2,\ldots, N\}$.
    
    \end{definition}
    
    \begin{theorem}
    \label{th:Uimplementingsymmetry}
        For each $\alpha \in G$ there exists a unique (permutation) matrix $U_\alpha\in U(2^{N})$\footnote{$U(2^N)$ is the unitary group of $2^N\times 2^N$ matrices.} verifying:

        \begin{equation}
            \label{eq:representationpermutation}
            U_{\alpha}\left(\sigma_{\mu_{1}} \otimes \cdots \otimes \sigma_{\mu_N}\right) U_{\alpha}^{\dagger}=\sigma_{\mu_{\alpha(1)}} \otimes \cdots \otimes \sigma_{\mu_{\alpha(N)}}.
    \end{equation}
    \end{theorem}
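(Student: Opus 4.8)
The plan is to construct $U_\alpha$ explicitly as the operator that permutes the tensor factors of the Hilbert space $(\mathbb{C}^2)^{\otimes N}$, to verify the conjugation identity on the Pauli-string basis, and then to extract uniqueness from a commutant argument. Concretely, I would define $U_\alpha$ on the computational basis by $U_\alpha\,(|x_1\rangle\otimes\cdots\otimes|x_N\rangle)=|x_{\alpha(1)}\rangle\otimes\cdots\otimes|x_{\alpha(N)}\rangle$ for $x_i\in\{0,1\}$. Since this map sends each computational basis vector to another one, $U_\alpha$ is by construction a $0/1$ permutation matrix, and in particular unitary, so $U_\alpha\in U(2^N)$. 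The exact placement of $\alpha$ versus $\alpha^{-1}$ in the definition is the one genuine bookkeeping choice, which I would fix by checking a single transposition so that conjugation reproduces $\alpha$ and not $\alpha^{-1}$.

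For existence it then remains to verify \eqref{eq:representationpermutation}. Because both sides are $\mathbb{C}$-linear, it suffices to check the identity on simple tensors $\sigma_{\mu_1}\otimes\cdots\otimes\sigma_{\mu_N}$, and the key computation is that conjugation by a tensor-factor permutation simply permutes the factors of a simple tensor of operators. Applying $U_\alpha (A_1\otimes\cdots\otimes A_N) U_\alpha^{\dagger}$ to an arbitrary basis vector and tracking the relabelling gives $U_\alpha (A_1\otimes\cdots\otimes A_N) U_\alpha^{\dagger}=A_{\alpha(1)}\otimes\cdots\otimes A_{\alpha(N)}$; specialising $A_i=\sigma_{\mu_i}$ yields exactly \eqref{eq:representationpermutation}.

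For uniqueness, suppose $U,V\in U(2^N)$ both satisfy \eqref{eq:representationpermutation}. Then $W:=V^{\dagger}U$ obeys $W\,\sigma_{\vec\mu}\,W^{\dagger}=\sigma_{\vec\mu}$, equivalently $W\sigma_{\vec\mu}=\sigma_{\vec\mu}W$, for every Pauli string $\sigma_{\vec\mu}=\sigma_{\mu_1}\otimes\cdots\otimes\sigma_{\mu_N}$. Since the $4^N$ Pauli strings (now including $\sigma_0^{\otimes N}$) form a basis of $M_{2^N}(\mathbb{C})$, the operator $W$ commutes with every matrix, so $W=\lambda\,\mathbbm{1}$ with $|\lambda|=1$ by Schur's lemma; thus $U$ and $V$ coincide up to a global phase. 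Imposing in addition that both be permutation matrices removes the phase freedom, since $\lambda\,\mathbbm{1}$ times a permutation matrix has its nonzero entries equal to $\lambda$, which is $0/1$ only when $\lambda=1$, whence $U=V$.

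The main obstacle is bookkeeping rather than conceptual: one must fix the index convention in the definition of $U_\alpha$ so that conjugation produces $\alpha$ and not $\alpha^{-1}$, and one must observe that the unqualified claim of a \emph{unique} unitary is false — uniqueness holds only up to a phase in $U(2^N)$, and it is the structural requirement that $U_\alpha$ be a \emph{permutation} matrix that pins the representative down. A secondary point worth stating explicitly is that the Pauli strings span the full complex matrix algebra $M_{2^N}(\mathbb{C})$, since the earlier remark only records that they give a basis of $\mathfrak{su}(2^N)$ after discarding $i\sigma_0^{\otimes N}$; this spanning is precisely what powers the Schur's-lemma step.
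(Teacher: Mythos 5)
Your proposal is correct, and its existence half is essentially the paper's own construction: the paper builds $U_\alpha$ by writing out the Kronecker product of generic vectors $v^{(1)}\otimes\cdots\otimes v^{(N)}$, permuting the factors, and observing that the resulting reordering of rows is implemented by a permutation matrix — which is exactly your definition $U_\alpha(|x_1\rangle\otimes\cdots\otimes|x_N\rangle)=|x_{\alpha(1)}\rangle\otimes\cdots\otimes|x_{\alpha(N)}\rangle$ phrased on the computational basis, and your convention does produce $\alpha$ rather than $\alpha^{-1}$ under conjugation, as you can confirm by the componentwise computation $U_\alpha(A_1\otimes\cdots\otimes A_N)U_\alpha^{\dagger}=A_{\alpha(1)}\otimes\cdots\otimes A_{\alpha(N)}$. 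Where you genuinely depart from the paper is uniqueness: the paper merely asserts it (``this matrix is unique and unitary because it is a permutation matrix''), while you prove it via the commutant argument — $W=V^{\dagger}U$ commutes with all $4^N$ Pauli strings, which span $M_{2^N}(\mathbb{C})$ as a complex algebra, so $W=\lambda\mathbbm{1}$ with $|\lambda|=1$, and the permutation-matrix requirement then forces $\lambda=1$. This buys two things the paper's proof lacks: a correct reading of the theorem (uniqueness in $U(2^N)$ holds only up to a global phase, so the parenthetical ``permutation'' is load-bearing, not decorative), and an explicit identification of the fact that powers the argument, namely that the Pauli strings including $\sigma_0^{\otimes N}$ span the full matrix algebra — a point the paper's earlier remark about a basis of $\su(2^N)$ does not by itself supply. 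Your version is the more complete proof of the statement as written.
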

    \begin{proof}
        Let $\alpha\in G$ be a permutation. We aim to describe the procedure for implementing the action of $\rho(\alpha, \cdot)$ via a unitary matrix.
         
        Since the matrices in $\su(2)$ are $2\times2$ matrices with complex entries, we may regard them as operators on $\mathbb{C}^2$-vectors, i.e. vectors of two components with complex entries.
        
        Consider the vectors $v^{(i)}:= (\beta^{(i)}_1, \beta^{(i)}_{2})^{T}$ for $i=1,\ldots, N$ and build their tensor product $v:= v^{(1)}\otimes v^{(2)}\otimes \cdots \otimes v^{(N)}$ using the lexicographic convention described in section \ref{sec:propertiesSU}, i.e. the Kronecker product:

        \begin{equation}
            v=\begin{pmatrix}
                \beta^{(1)}_1\beta^{(2)}_1\cdots \beta^{(N-1)}_1\beta^{(N)}_1\\
                \beta^{(1)}_1\beta^{(2)}_1\cdots \beta^{(N-1)}_1\beta^{(N)}_2\\
                \beta^{(1)}_1\beta^{(2)}_1\cdots \beta^{(N-1)}_2\beta^{(N)}_1\\
                \beta^{(1)}_1\beta^{(2)}_1\cdots \beta^{(N-1)}_2\beta^{(N)}_2\\
                \vdots\\
                \beta^{(1)}_1\beta^{(2)}_2\cdots \beta^{(N-1)}_1\beta^{(N)}_1\\
                \beta^{(1)}_1\beta^{(2)}_2\cdots \beta^{(N-1)}_1\beta^{(N)}_2\\
                \beta^{(1)}_1\beta^{(2)}_2\cdots \beta^{(N-1)}_2\beta^{(N)}_1\\
                \beta^{(1)}_1\beta^{(2)}_2\cdots \beta^{(N-1)}_2\beta^{(N)}_2\\
                \vdots\\
                \beta^{(1)}_2\beta^{(2)}_1\cdots \beta^{(N-1)}_1\beta^{(N)}_1\\
                \beta^{(1)}_2\beta^{(2)}_1\cdots \beta^{(N-1)}_1\beta^{(N)}_2\\
                \vdots\\
                \beta^{(1)}_2\beta^{(2)}_2\cdots \beta^{(N-1)}_2\beta^{(N)}_2
            \end{pmatrix}.
        \end{equation}
        
        A generic permutation $\alpha\in G$, will act on the tensor product $v=v^{(1)}\otimes v^{(2)}\otimes \cdots \otimes v^{(N)}$ by means of shifting the order of the vectors:  $v^\alpha:=v^{\alpha(1)}\otimes v^{\alpha(2)}\otimes \cdots \otimes v^{\alpha(N)}$. The action of $\alpha$ yields:

     \begin{equation}
        v^\alpha=\begin{pmatrix}
            \beta^{\alpha(1)}_1\beta^{\alpha(2)}_1\cdots \beta^{\alpha(N-1)}_1\beta^{\alpha(N)}_1\\
            \beta^{\alpha(1)}_1\beta^{\alpha(2)}_1\cdots \beta^{\alpha(N-1)}_1\beta^{\alpha(N)}_2\\
            \beta^{\alpha(1)}_1\beta^{\alpha(2)}_1\cdots \beta^{\alpha(N-1)}_2\beta^{\alpha(N)}_1\\
            \beta^{\alpha(1)}_1\beta^{\alpha(2)}_1\cdots \beta^{\alpha(N-1)}_2\beta^{\alpha(N)}_2\\
            \vdots\\
            \beta^{\alpha(1)}_1\beta^{\alpha(2)}_2\cdots \beta^{\alpha(N-1)}_1\beta^{\alpha(N)}_1\\
            \beta^{\alpha(1)}_1\beta^{\alpha(2)}_2\cdots \beta^{\alpha(N-1)}_1\beta^{\alpha(N)}_2\\
            \beta^{\alpha(1)}_1\beta^{\alpha(2)}_2\cdots \beta^{\alpha(N-1)}_2\beta^{\alpha(N)}_1\\
            \beta^{\alpha(1)}_1\beta^{\alpha(2)}_2\cdots \beta^{\alpha(N-1)}_2\beta^{\alpha(N)}_2\\
            \vdots\\
            \beta^{\alpha(1)}_2\beta^{\alpha(2)}_1\cdots \beta^{\alpha(N-1)}_1\beta^{\alpha(N)}_1\\
            \beta^{\alpha(1)}_2\beta^{\alpha(2)}_1\cdots \beta^{\alpha(N-1)}_1\beta^{\alpha(N)}_2\\
            \vdots\\
            \beta^{\alpha(1)}_2\beta^{\alpha(2)}_2\cdots \beta^{\alpha(N-1)}_2\beta^{\alpha(N)}_2
        \end{pmatrix}.
    \end{equation}

        We now reorder each row of the vector to return it to its original form, i.e. the upper index of $\beta$ increases from $1$ to $N$. This time, however, the sub-indices will not generally be those that result from the initial lexicographic order. Instead, after the application of $\alpha$, each reordered row will either remain as it was before the application of $\alpha$ or correspond to another row of the initial vector $v$, since all possible combinations are listed therein. This allows us to assert that there exists a unique matrix $U_\alpha$ mapping from the original vector to the permuted one. This matrix is unique and unitary because it is a permutation matrix, permuting the rows of the vector $v$:

    \begin{equation}
        v^\alpha = U_{\alpha} v.
    \end{equation}

    If the action on vectors is given by $U_\alpha$, it follows that the action on an operator $A$ is given by $U_\alpha A U_\alpha^{-1}= U_\alpha A U_\alpha^{\dagger}$.
    
    \end{proof}

    \begin{definition}
        Given the representation of the permutation group $G$ in terms of the matrices $\mathcal{G}:=\{U_\alpha\}_{\alpha\in G}$, we can now make use of equation \eqref{eq:definvariantSU} to define the permutation-invariant Lie group $G \SU(2^N)$ as follows:
    
    \begin{equation}
        G\SU(2^N):= \mathcal{G}\SU(2^N)= \left\{ A \in \SU(2^N): U_{\alpha} A U_{\alpha}^\dagger = A, \quad \forall \alpha \in G \right\}\label{eq:pisu-definition}.
    \end{equation}
    \end{definition}

    \begin{definition}
        \noindent The \textbf{permutation invariant subalgebra} $G \mathfrak{su}(2^{N})$ is defined in a similar manner:

    \begin{align}
        G \mathfrak{su}(2^{N})&:= \mathcal{G}\su(2^N)=\left\{ a \in \su(2^N): U_{\alpha} a U_{\alpha}^\dagger = a, \quad \forall \alpha \in G \right\}=\\
        &=\operatorname{Fix}_{G}\left(\mathfrak{su}\left(2^{N}\right)\right):=\left\{a \in \mathfrak{su} \left(2^{N}\right) : \rho(\alpha, a)=a \quad \forall \alpha \in G\right\}.
    \end{align}
    \end{definition}

    %{\color{red}
    %\begin{remark}
        %Since the matrices $U_\alpha$ are permutation matrices with $a_{11}=1$, it follows from theorem \ref{thm:expissur} that the exponential map from $G\su(2^N)$ to $G\SU (2^N)$ is surjective.
    %\end{remark}
    %}

    \begin{example}
         \textbf{Recovering the SWAP Matrix. } We show how the lexicographic procedure leads to the well-known $\operatorname{SWAP}$ matrix. To this end, set $N=2$. Let $v^{(1)}:=(\beta^{(1)}_1, \beta^{(1)}_2)^{T}$ and $v^{(2)}:=(\beta^{(2)}_1, \beta^{(2)}_2)^{T}$ such that:

        \begin{equation}
            v^{(1)}\otimes v^{(2)} = \begin{pmatrix}
                \beta^{(1)}_1\beta^{(2)}_1\\
                \beta^{(1)}_1\beta^{(2)}_2\\
                \beta^{(1)}_2\beta^{(2)}_1\\
                \beta^{(1)}_2\beta^{(2)}_2\\
            \end{pmatrix}.
        \end{equation}

        Consider $G$ to be the symmetric group $S_2:=\{\operatorname{id}, (12)\}$ and let $\alpha = (12)$ act on $v^{(1)}\otimes v^{(2)}$ to get:

        \begin{equation}
           v^{\alpha(1)}\otimes v^{\alpha(2)} = \begin{pmatrix}
                \beta^{(2)}_1\beta^{(1)}_1\\
                \beta^{(2)}_1\beta^{(1)}_2\\
                \beta^{(2)}_2\beta^{(1)}_1\\
                \beta^{(2)}_2\beta^{(1)}_2\\
            \end{pmatrix}.
        \end{equation}

        \noindent Reordering as explained above one obtains
    
        \begin{equation}
            v^{\alpha(1)}\otimes v^{\alpha(2)} = \begin{pmatrix}
                \beta^{(1)}_1\beta^{(2)}_1\\
                \beta^{(1)}_2\beta^{(2)}_1\\
                \beta^{(1)}_1\beta^{(2)}_2\\
                \beta^{(1)}_2\beta^{(2)}_2\\
            \end{pmatrix}.
        \end{equation}

        \noindent The matrix $U_{\rho(\alpha,\cdot)}$ implementing the permutation of the vectors is given by:
    
        \begin{equation}
             v^{\alpha(1)}\otimes v^{\alpha(2)} = \begin{pmatrix}
                \beta^{(1)}_1\beta^{(2)}_1\\
                \beta^{(1)}_2\beta^{(2)}_1\\
                \beta^{(1)}_1\beta^{(2)}_2\\
                \beta^{(1)}_2\beta^{(2)}_2\\
            \end{pmatrix}= \underbrace{\begin{pmatrix}
                1 & 0 & 0 & 0\\
                0 & 0 & 1 & 0\\
                0 & 1 & 0 & 0\\
                0 & 0 & 0 & 1
            \end{pmatrix}}_{:= U_{\rho(\alpha,\cdot)}} \begin{pmatrix}
                \beta^{(1)}_1\beta^{(2)}_1\\
                \beta^{(1)}_1\beta^{(2)}_2\\
                \beta^{(1)}_2\beta^{(2)}_1\\
                \beta^{(1)}_2\beta^{(2)}_2\\
            \end{pmatrix}.
        \end{equation}
        
        From the equation above we can see that the matrix $U_{\rho(\alpha,\cdot)}$ implementing the permutation $\alpha=(12)$ corresponds to the matrix representation of the well-known $\operatorname{SWAP}$ matrix acting on $2$ qubits. Thus, the permutation-invariant subgroup $S_2\SU(2^2)$ corresponds to all $\SU(4)$ operators which commute with the $\operatorname{SWAP}$ gate.
    \end{example}

\subsection{Energy invariance}
\label{sec:EnergyInvariance}
    %{\color{red} Beyond the construction of discrete symmetries via the map from the discrete symmetries to $\operatorname{SWAP}$ actions on the Lie group, it is also possible to define invariant groups with respect to other actions} {\color{blue} What does this mean? what is a SWAP action?}
    In the previous section, we discussed the subgroups of $\SU(2^N)$ which remain invariant under the action of a permutation group $G$. We will now consider invariance with respect to a fixed set of matrices. An example is the energy-conserving symmetry, defined as mapping states to other states with the same Hamming distance, e.g. see  3.5 in \cite{codingtheory}. The operators invariant under this symmetry keep the number of $0$s and $1$s in the state the same, thus preserving the energy.
    
    We develop a way to assign \emph{energy} $1$ to the qubit state $|1\rangle$ and \emph{energy} $0$ to the qubit state $|0\rangle$. In order to achieve this we consider the states $|1\rangle$ and $|0\rangle$ to be the $-1$ and $1$ eigenstates of $\sigma_3$, namely:

    \begin{equation}
        \sigma_3|0\rangle = |0\rangle, \quad  \sigma_3|1\rangle = -|1\rangle.
    \end{equation}

    \noindent To achieve the desired property we can define the \emph{Hamiltonian}

    \begin{equation}
        H := \frac{1}{2}(\sigma_0-\sigma_3) = \begin{pmatrix} 0 & 0\\ 0 & 1 \end{pmatrix}.
    \end{equation}

    \noindent because $H|0\rangle = 0$ and $H|1\rangle = |1\rangle$, i.e. $|0\rangle$ and $|1\rangle$ are eigenstates of $H$ with eigenvalues $0$ and $1$ respectively. For a general state $|\psi\rangle = \alpha|0\rangle + \beta |1\rangle$, its energy is computed as:

    \begin{equation}
        \langle \psi | H |\psi\rangle.
    \end{equation}
    
    In the general framework, when dealing with a system of $N$ qubits, we will consider states of the form $|\lambda_1\rangle\otimes|\lambda_2\rangle\otimes\cdots \otimes |\lambda_N\rangle$ with $\lambda_i\in\{0,1\}$ for $i\in\{1,\ldots, N\}$ to be eigenstates of a general Hamiltonian $H^{(N)}$ whose energy is the sum of the energy of each qubit, namely $\sum_{i=1}^{N}\lambda_i$. This leads us to define

    \begin{equation}
    \label{eq:NqubitHamiltonian}
        H^{(N)}:= H\otimes \underbrace{\sigma_{0}\otimes\cdots\otimes \sigma_{0}}_{N-1\text{ times}} + \sigma_{0}\otimes H\otimes \underbrace{\sigma_{0}\otimes\cdots\otimes \sigma_{0}}_{N-2\text{ times}}+\ldots +\underbrace{\sigma_{0}\otimes\cdots\otimes \sigma_{0}}_{N-1\text{ times}}\otimes H.
    \end{equation}

    \noindent Just as for the case of one qubit, for a general state $|\Psi\rangle$ in the $N-$qubit tensor product space, the energy is computed as:

    \begin{equation}
        \langle\Psi|H^{(N)}|\Psi\rangle.
    \end{equation}

    \begin{definition}
    \label{def:energy_preserving}
        An operator $A$ is \textbf{energy preserving} if for any $N$-qubit state $|\Psi\rangle$,
    \begin{equation}
    \langle\Psi| H^{(N)} |\Psi\rangle = \langle\Psi|A^\dagger H^{(N)} A |\Psi\rangle .
    \end{equation}
    \end{definition}
    
    We are interested in finding the operators $A\in \SU(2^{N})$ that verify the previous definition, which is equivalent to finding the operators $A\in \SU(2^{(N)})$ that commute with $H^{(N)}$, i.e. $AH^{(N)} = H^{N}A$, since the property \ref{def:energy_preserving} must hold for any state $|\Psi\rangle$.
    \begin{definition}
        We denote the set of operators in $\SU(2^N)$ commuting with $H^{(N)}$ as:
        \begin{equation}
            H^{(N)}\SU(2^N) := \mathcal{M}\SU(2^N)=\{A\in \SU(2^{N})\colon AH^{(N)} = H^{(N)}A\},
        \end{equation}
        \noindent where $\mathcal{M}=\{H^{(N)}\}$.
    \end{definition}

    \begin{theorem}
    The set $H^{(N)}\SU(2^N)$ is a connected compact subgroup of $\SU(2^{N})$.
    \end{theorem}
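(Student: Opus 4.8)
The plan is to derive both assertions as direct specializations of the general results already proved in this section, applied to the singleton set $\mathcal{M} = \{H^{(N)}\}$. Compactness requires no new work: because $\{H^{(N)}\}$ is a subset of $\operatorname{M}_{2^N}(\mathbb{C})$, the theorem asserting that $\mathcal{M}\SU(2^N)$ is a compact (embedded) Lie subgroup of $\SU(2^N)$ for every such $\mathcal{M}$ applies verbatim, giving at once that $H^{(N)}\SU(2^N)$ is a compact Lie subgroup.

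For connectedness I would appeal to condition~1 of Theorem~\ref{thm:connectivitySU}, which guarantees that $\mathcal{M}\SU(2^N)$ is connected whenever $\mathcal{M}$ consists of pairwise commuting, diagonalizable matrices. Since $\mathcal{M} = \{H^{(N)}\}$ contains a single matrix, the commutativity hypothesis holds trivially ($H^{(N)}$ commutes with itself), so the whole argument reduces to checking that $H^{(N)}$ is diagonalizable.

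To verify diagonalizability I would argue that $H^{(N)}$ is Hermitian and invoke the spectral theorem. Indeed, $H = \tfrac{1}{2}(\sigma_0 - \sigma_3)$ is Hermitian, each summand in the definition \eqref{eq:NqubitHamiltonian} is a Kronecker product of the Hermitian matrices $H$ and $\sigma_0$ and is therefore Hermitian, and a finite sum of Hermitian matrices is again Hermitian. Hence $H^{(N)} = (H^{(N)})^\dagger$, so $H^{(N)}$ is unitarily diagonalizable with real eigenvalues (in fact its eigenvalues are the integers $\sum_{i} \lambda_i$ with $\lambda_i \in \{0,1\}$, counting the ones in each computational basis state). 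Condition~1 of Theorem~\ref{thm:connectivitySU} then yields that $H^{(N)}\SU(2^N)$ is connected, and combined with compactness this proves the claim.

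The argument is a verification of hypotheses rather than a new construction, so I do not expect a serious obstacle; the only point demanding any care is the diagonalizability of $H^{(N)}$, which the Hermitian structure settles immediately. As a by-product, Theorem~\ref{thm:expissur} then applies, ensuring that the exponential map $\exp\colon H^{(N)}\su(2^N) \to H^{(N)}\SU(2^N)$ is surjective.
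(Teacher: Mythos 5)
Your proposal is correct and follows essentially the same route as the paper: compactness from the general theorem on $\mathcal{M}\SU(2^N)$, and connectedness from condition~1 of Theorem~\ref{thm:connectivitySU} applied to the singleton $\mathcal{M}=\{H^{(N)}\}$, which trivially commutes with itself. The only cosmetic difference is that you establish diagonalizability via Hermiticity and the spectral theorem, whereas the paper simply observes that $H^{(N)}$ is already diagonal in the computational basis (being a sum of Kronecker products of the diagonal matrices $H$ and $\sigma_0$).
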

    \begin{proof}
        This result follows from theorem \ref{thm:connectivitySU} since $H^{(N)}$ is diagonalizable (it is already diagonal) and trivially commutes with itself.
    
       % The only non-trivial property is connectedness. To show it, we prove that $A\in H^{(N)}\SU(2^N)$ is connected to the identity. Since $A$ and $H^{(N)}$ commute, they are simultaneously diagonalizable and there exists a basis in which $A$ and $H^{(N)}$ are both diagonal, namely there exists a matrix $P_A$ such that:

       %  \begin{equation}
       %      P_AAP_A^{-1} = \text{diag}(e^{i\theta_1}, \ldots, e^{i\theta_{2^{N}}}) \quad \text{and} \quad P_AH^{(N)}P_A^{-1} = \text{diag}(\ldots):=D_H
       %  \end{equation}
        
       %  \noindent We define for $t\in [0,1]$ the matrix

       %  \begin{equation}
       %      D_A(t) := \text{diag}(e^{it\theta_1}, \ldots, e^{it\theta_{2^{N}}})
       %  \end{equation}

       %  \noindent which verifies $D_A(0)=\mathbbm{1}$ and $D_A(1)=D_A$. Let now $A(t) := P_A^{-1}D_A(t)P_A$. Note that $A(0)=\mathbbm{1}$ and $A(1)= A$. Moreover, for any $t\in (0,1)$, since diagonal matrices commute,

       %  \begin{align}
       %      A(t)H^{(N)} &= P_A^{-1}D_A(t)P_AH^{(N)}= P_A^{-1}D_A(t)P_AP_A^{-1}D_HP_A= P_A^{-1}D_A(t)D_HP_A\\
       %      &= P_A^{-1}D_H D_A(t)P_A = P_A^{-1}D_H P_AP_A^{-1} D_A(t)P_A = H^{(N)}A(t)
       %  \end{align}
        
       %  \noindent i.e. $A(t)\in \SU(2^{N})$ for any $t\in [0,1]$ and it is a path connecting $A$ to $\mathbbm{1}$, so we are done.
    \end{proof}

\section{Computing the dimension of the symmetry restricted Lie subalgebras}\label{sec:scaling}

    In the previous section, we introduced some $\mathcal{M}$-invariant subgroups of $\SU(2^N)$ and their corresponding Lie algebras. In this section, we present an explicit method for computing the dimension of permutation-invariant subalgebras for different permutation groups $G$. In many applications, it is crucial to comprehend the scaling of the underlying Lie algebras. For instance, if the invariant subalgebra corresponds to the generators of an invariant quantum neural network (QNN), i.e. its dynamical Lie algebra, the scaling of the dimension of the algebra can yield information about the trainability of such QNN \cite{ragone_unified_2024}. The methods presented in this section provide a guide to computing the dimension of permutation-invariant tensor products and can be generalized to, for instance, permutation-invariant space constructed via $N$-tensor products.

\subsection{Permutation invariant subalgebras}\label{ssec:scaling-permutation}

    We  calculate the dimension of the $G$-invariant Lie subalgebras $G\su(2^N)$ for the following permutation groups $G$:
    
    \begin{itemize}
        \item $S_N$, the group of permutations on $N$ elements, also called symmetric group.
        \item $A_N$, the alternating group on $N$ elements, which is the subgroup of $S_N$ consisting of even permutations.
        \item $D_N$, the dihedral group on $N$ vertices, or the symmetry group of an $N-$gon.
        \item $C_N$, the cyclic group on $N$ elements.
        \item $E_N$, the trivial group on $N$ elements.
    \end{itemize}

    \noindent Let $G\in \{S_N, A_N, D_N, C_N, E_N\}$ and let $a\in \mathfrak{su}(2^{N})$ be a generic element in the Lie algebra. We can write it in terms of the basis elements as

    \begin{equation}
        a=\sum_{i_1,\ldots, i_N=0}^{3}a^{i_1\ldots i_N}\sigma_{i_1}\otimes \cdots \otimes \sigma_{i_N}, 
    \end{equation}

    \noindent where $a^{i_1\ldots i_N}\in \mathbb{R}$ for all $i_1,\ldots, i_N\in\{0,1,2,3\}$, $a^{0\ldots 0} = 0$ and any element $\alpha \in G$ acts on $a$ via $\rho(\alpha, \cdot)$ as defined in \ref{def:rhoalpha}:

    \begin{equation}
        \rho(\alpha,a) = \sum_{i_1,\ldots, i_N=0}^{3} a^{i_1\ldots i_N} \rho(\alpha,\sigma_{i_1}\otimes \cdots\otimes \sigma_{i_N}) = \sum_{i_1,\ldots, i_N=0}^{3} a^{i_1\ldots i_N}\sigma_{i_{\alpha(1)}}\otimes \cdots \otimes \sigma_{i_{\alpha(N)}}.
    \end{equation}

    \noindent Requiring the element $a$, which is nothing but a tensor, to be invariant under the action of $G$ is equivalent to requiring:
    
    \begin{align}
        \sum_{i_1,\ldots, i_N=0}^{3}  a^{i_1\ldots i_N}\sigma_{i_1}\otimes \cdots \otimes \sigma_{i_N} &\stackrel{!}{=} \sum_{i_1,\ldots, i_N=0}^{3}  a^{i_1\ldots i_N}\sigma_{i_{\alpha(1)}}\otimes \cdots \otimes \sigma_{i_{\alpha(N)}}  \\
        &=\sum_{i_1,\ldots, i_N=0}^{3}  a^{i_{\alpha^{-1}(1)}\ldots i_{\alpha^{-1}(N)}}\sigma_{i_1}\otimes \cdots \otimes \sigma_{i_N} \quad \forall \alpha\in G.
    \end{align}
    
    \noindent Since the Lie algebra is a vector space, if two vectors are equal, their coordinates must be equal, and thus the components of $a$ have to be invariant, that is,
    
    \begin{equation}
    \label{eq:definition_of_invariance}
        a^{i_1\ldots i_N} = a^{i_{\alpha^{-1}(1)} \ldots i_{\alpha^{-1}(N)}}  \quad \forall \alpha\in G  \quad \quad \Longleftrightarrow \quad \quad a^{i_1\ldots i_N} = a^{i_{\alpha(1)} \ldots i_{\alpha(N)}}  \quad \forall \alpha\in G .
    \end{equation}

    \noindent 
      The dimension of the subspace invariant under G can be computed by counting the number of different reorderings of the indices of the tensor component $a^{i_1\ldots i_N}$ that satisfy the property defined in equation \eqref{eq:definition_of_invariance}. The solution to this problem is inspired by the language of necklaces and bracelets from combinatorics, for which we provide some relevant definitions below. We refer the reader to section $1$ in \cite{Necklaces} for more details.

    \begin{definition}
        Let $k,N\in \mathbb{N}$. A $k-$\textbf{alphabet} is a collection of $k$ letters and an $N-$\textbf{string} over a $k-$alphabet is a sequence of $N$ letters taken from said alphabet.
    \end{definition}

    \begin{definition}
        A \textbf{$k-$ary necklace of length $N$} is an equivalence class under rotations of $N-$character strings over an alphabet of length $k$. If the equivalence classes are taken under both rotations and reflections, then we call it a \textbf{$k-$ary bracelet of length $N$}.
    \end{definition}

    \noindent Note that a necklace is invariant under the cyclic group $C_N$, whereas a bracelet is invariant under the dihedral group $D_N$. Thus, we define in a more general context a $G-$necklace:

    \begin{definition}
        Let $k,N\in \mathbb{N}$ and let $G$ be a permutation group on $N$ elements. A \textbf{$G-$necklace over a $k-$alphabet} is an equivalence class under the action of $G$ on an $N-$string over a $k-$alphabet.
    \end{definition}

    \noindent Note that the task of counting the number of $G-$necklaces is the task of computing the number of equivalence classes under the action of $G$.

    We can adapt this language to our current problem by noticing that, for instance, computing the number of tensors $a\in \mathfrak{su}(2^N)$ invariant under the cyclic group $C_N$ is equivalent to counting the number of different $C_N-$necklaces that can be formed with an alphabet of $4$ letters. The correspondence in general is as follows:

    \begin{itemize}
        \item The possible $k$ letters are the $4$ indices of the Pauli matrices: $\{0,1,2,3\}$.
        \item The length of the $N-$string is the number of indices $i_1,\ldots, i_N$, that is, $N$.
    \end{itemize}

    It is known from the theory of combinatorics that counting the number of conjugacy classes is achieved with the help of the cycle index polynomial, in which the unknowns are replaced by the number of available letters on the alphabet \cite{Necklaces}. In our case though, since the element $\sigma_0\otimes \cdots \otimes \sigma_0$ does not belong to the Lie algebra $\su(2^N)$, we need to subtract one to the previous calculation.

    \begin{example}
        We compute the dimension of the Lie subalgebra $C_4\mathfrak{su}(2^{4})$. We know from the theory we have just developed, that we need to use the cycle index polynomial, see\cite{Polya} for details:

        \begin{equation}
            Z[C_4] (a_1,a_2,a_4) := \dfrac{1}{4}(a_1^{4} + a_2^{2} + 2a_{4}),
        \end{equation}

        \noindent in which we replace the variables $a_1,a_2, a_4$ by the number of letters in our alphabet, i.e.  $a_1=a_2=a_4=4$ and evaluate it, yielding:

        \begin{equation}
            Z[C_4](4,4,4) = \dfrac{1}{4}(4^{4} + 4^{2} + 2\cdot 4) = 70.
        \end{equation}

        \noindent Lastly, we subtract $1$ as explained above to get

        \begin{equation}
            \dim_{\mathbb{R}}\{C_4\mathfrak{su}(2^{4})\} = 69.
        \end{equation}
    \end{example}

    \begin{example}
        Consider the Lie subalgebra $E_N\su(2^N)$. Its dimension is computed as before, but this time using $Z[E_N](a_1)=a_1^N$, where $a_1=4$. Subtracting one yields:

        \begin{equation}
            \dim_\mathbb{R}\{E_N\su(2^N)\} = 4^N-1,
        \end{equation}
    
        \noindent which agrees, as expected, with the dimension of $\su(2^N)$, since $E_N$ is the trivial group.
    \end{example}
    
    \noindent In the following theorem we give the recipe to compute the dimension in the case of the groups discussed above.
    \begin{theorem}
    \label{th:dimesnion_subalgebra}
        Let $G \in\{S_N, A_N, D_N, C_N, E_N\}$. The dimension of the real Lie algebra $G\mathfrak{su}(2^{N})$ is given by:

        \begin{equation}
        \label{eq:dimension_subalgebra}
            \dim_{\mathbb{R}}\{G\mathfrak{su}(2^{N})\} = Z[G](\{4,\ldots, 4\}) - 1,
        \end{equation}
        \noindent where the cycle index polynomials for the group $G$ in the unknowns $\{a_i\}$, which we denote by $Z[G](\{a_i\})$, are:

        \begin{align}
            Z[S_N](\{a_i\}) &= \dfrac{1}{N}\sum_{l=1}^Na_lZ[S_{N-l}] \quad \text{ with } Z[S_0] :=1,\\
            Z[A_N] (\{a_i\}) &= Z[S_N](\{a_i\}) + Z[S_N](\{(-1)^{i-1}a_i\}),\\
            Z[D_N](\{a_i\}) &= \begin{cases}
                \dfrac{1}{2}Z(C_N) + \dfrac{1}{4}(a_1^2a_2^{\frac{N-2}{2}}+ a_2^{\frac{N}{2}}) & \quad \text{for}\  N \text{even},\\
                & \\
                \dfrac{1}{2}Z(C_N) + \dfrac{1}{2}a_1a_2^{\frac{N-1}{2}} & \quad \text{for}\ N \text{ odd},
            \end{cases}\\
            Z[C_N](\{a_i\}) &= \sum_{d|N}\phi(d)a_d^{\frac{N}{d}},\\
            Z[E_{N}](\{a_j\}) &= a_1^N.
        \end{align}
    \end{theorem}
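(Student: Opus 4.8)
The plan is to translate the dimension computation into counting orbits of the $G$-action on index strings, and then to evaluate that count with Burnside's lemma group by group. By the invariance condition \eqref{eq:definition_of_invariance}, an element $a = \sum a^{i_1\ldots i_N}\,\sigma_{i_1}\otimes\cdots\otimes\sigma_{i_N}$ lies in $G\su(2^N)$ exactly when its coefficient function $(i_1,\ldots,i_N)\mapsto a^{i_1\ldots i_N}$ is constant on each orbit of $G$ acting on the strings $\{0,1,2,3\}^N$. Hence the indicator functions of the orbits form a real basis of the invariant coefficient space, and $\dim_\mathbb{R}\{G\su(2^N)\}$ equals the number of $G$-orbits on $\{0,1,2,3\}^N$, minus one for the singleton orbit of the all-zero string, which corresponds to $\sigma_0^{\otimes N}\notin\su(2^N)$ and is excluded by tracelessness. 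This accounts for the $-1$ in \eqref{eq:dimension_subalgebra}.

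Next I would count the orbits using Burnside's lemma (the Cauchy--Frobenius formula), which gives the number of orbits as $\frac{1}{|G|}\sum_{\alpha\in G}|\mathrm{Fix}(\alpha)|$. A string is fixed by $\alpha$ precisely when it is constant on each cycle of $\alpha$ regarded as a permutation of the $N$ positions, so $|\mathrm{Fix}(\alpha)| = 4^{c(\alpha)}$, where $c(\alpha)$ is the number of cycles of $\alpha$. Grouping the permutations by cycle type then reproduces the cycle index polynomial $Z[G]$ with each variable $a_i$ set equal to the alphabet size $4$, which together with the correction above yields \eqref{eq:dimension_subalgebra}.

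It then remains to establish the five explicit cycle index polynomials. For $E_N$ only the identity contributes, giving $a_1^N$. For $C_N$ the rotation by $j$ splits into $\gcd(N,j)$ cycles each of length $N/\gcd(N,j)$; collecting the $\phi(d)$ rotations of each order $d\mid N$ produces the totient sum. For $D_N$ the $N$ rotations reproduce the cyclic contribution, while the $N$ reflections split according to the parity of $N$ into the stated vertex/edge cases. For $S_N$ the recursion follows from the standard generating-function recursion for symmetric-group cycle indices. The delicate case, and the step I expect to be the main obstacle, is $A_N$: here I would use that the signature of a permutation with $c_i$ cycles of length $i$ equals $\prod_i\big((-1)^{i-1}\big)^{c_i}$, so that inserting the even-permutation indicator $\tfrac{1}{2}(1+\mathrm{sgn}(\alpha))$ and summing over $S_N$ gives $Z[S_N](\{a_i\}) + Z[S_N](\{(-1)^{i-1}a_i\})$, with the factor of two between $|S_N|$ and $|A_N|$ cancelling. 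Correctly relating the parity of a permutation to its cycle type is the crux of this last identity.
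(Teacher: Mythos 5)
Your proposal is correct and takes essentially the same route as the paper: the paper likewise reduces the dimension to counting $G$-necklaces over the four-letter alphabet of Pauli indices, evaluates the cycle index polynomial at $4$ via Burnside/P\'olya counting, and subtracts one for the excluded identity string $\sigma_0^{\otimes N}$; the only difference is that you sketch derivations of the individual cycle indices (including the signature argument for $A_N$) where the paper cites the combinatorics literature. Note in passing that your Burnside derivation correctly retains the $\frac{1}{N}$ prefactor in $Z[C_N]$, which is missing (evidently a typo) from the theorem's displayed formula but present in the paper's own worked example for $C_4$.
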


    \begin{remark}
        The procedure we have just described is also useful for computing the dimension of the vector subspace of the tensors with $N$ indices taking values on $\{1,\ldots,k\}$ that are invariant under a permutation group $G$ on $N$ elements. One can use the previous theorem, but instead of substituting $4$ into the cycle index polynomial, one would have to use $k$.
    \end{remark}

    \begin{theorem}
        The real dimension of $S_N\su(2^N)$ is given by:

        \begin{equation}
            \dim_\mathbb{R} \{S_N\su(2^N)\} = \binom{N+3}{N}-1.
        \end{equation}
    \end{theorem}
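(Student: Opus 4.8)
The plan is to combine the general dimension formula of Theorem \ref{th:dimesnion_subalgebra} with a direct combinatorial evaluation of the symmetric-group cycle index. By that theorem, $\dim_{\mathbb{R}}\{S_N\su(2^N)\} = Z[S_N](\{4,\ldots,4\}) - 1$, so the entire task reduces to showing $Z[S_N](4,\ldots,4) = \binom{N+3}{N}$. The conceptual reason this should hold is the multiset interpretation developed above: by the invariance condition \eqref{eq:definition_of_invariance}, a tensor component $a^{i_1\ldots i_N}$ is $S_N$-invariant precisely when its value depends only on the unordered collection of indices, since the full symmetric group realizes every reordering. Hence the $S_N$-necklaces over the $4$-letter alphabet $\{0,1,2,3\}$ are in bijection with multisets of size $N$ drawn from $4$ symbols, and the dimension of the invariant subspace, before removing the identity, equals the number of such multisets.

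First I would count these multisets by the stars-and-bars argument: choosing a multiset of size $N$ from $4$ letters amounts to distributing $N$ indistinguishable tokens into $4$ labelled bins, which gives $\binom{N+4-1}{N} = \binom{N+3}{N}$. Subtracting $1$ to exclude the all-$\sigma_0$ string $\sigma_0^{\otimes N}$, which is the only invariant basis element lying outside $\su(2^N)$, yields the claimed value $\binom{N+3}{N} - 1$.

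To make this rigorous directly from the recursion for $Z[S_N]$ stated in Theorem \ref{th:dimesnion_subalgebra}, I would instead proceed by induction. Writing $Z_N := Z[S_N](k,\ldots,k)$ and setting every $a_l = k$, the recursion collapses to $Z_N = \tfrac{k}{N}\sum_{m=0}^{N-1} Z_m$ with $Z_0 = 1$. Assuming $Z_m = \binom{m+k-1}{m}$ for all $m < N$, the key step is the re-indexing $\binom{m+k-1}{m}=\binom{m+k-1}{k-1}$ together with the hockey-stick identity $\sum_{m=0}^{N-1}\binom{m+k-1}{k-1} = \binom{N+k-1}{k}$, after which a short factorial manipulation gives $Z_N = \tfrac{k}{N}\binom{N+k-1}{k} = \binom{N+k-1}{N}$. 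Specialising to $k=4$ closes the induction and reproduces $Z[S_N](4,\ldots,4) = \binom{N+3}{N}$.

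The main obstacle is not any single computation but justifying that the cycle-index evaluation truly counts multisets, equivalently that the $S_N$-orbits on index strings are exactly the multisets. The inductive route sidesteps this subtlety by relying only on the stated recursion and the hockey-stick identity, so I would lean on it for the formal argument while citing the multiset picture for intuition; the only delicate points there are the re-indexing needed to apply the hockey-stick summation and the careful bookkeeping of the $-1$ correction coming from $\sigma_0^{\otimes N}\notin\su(2^N)$.
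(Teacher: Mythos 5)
Your proposal is correct, and both halves of it check out. Your first argument is essentially the paper's own proof: the paper identifies the $S_N$-necklaces over the alphabet $\{0,1,2,3\}$ with weak compositions $a+b+c+d=N$ and counts them by stars and bars as $\binom{N+3}{N}$, subtracting $1$ for $\sigma_0^{\otimes N}$, exactly as you do. Where you genuinely diverge is in the route you say you would formalize: the induction $Z_N=\tfrac{k}{N}\sum_{m=0}^{N-1}Z_m$ closed by the hockey-stick identity to yield $Z[S_N](k,\ldots,k)=\binom{N+k-1}{N}$. The paper never manipulates the cycle index for this theorem at all; it counts orbits directly. Your inductive route buys self-containedness — it needs only the recursion already displayed in Theorem \ref{th:dimesnion_subalgebra} plus elementary binomial identities, and it proves the result for an arbitrary alphabet size $k$ at no extra cost, which matches the paper's remark about $k$-letter alphabets — while the paper's route buys brevity, since the orbit--multiset identification is immediate (two index strings are $S_N$-equivalent precisely when their letter multiplicities agree). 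One caveat on your framing: the ``subtlety'' you flag, namely that cycle-index evaluation counts orbits, is not actually sidestepped by your induction, because your opening step invokes Theorem \ref{th:dimesnion_subalgebra}, whose formula \eqref{eq:dimension_subalgebra} already encodes the Pólya/Burnside input; your induction only verifies that the recursion evaluates to the same number as the direct orbit count. Since that theorem is available as a stated result, citing it is legitimate, and your argument stands as written.
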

    \begin{proof}
As described before, we are interested in the number of different $S_N$-necklaces that can be formed with an alphabet of 4 letters. This problem is equivalent to finding the number of ways of writing $N$ as the sum of an ordered sequence of four integers, i.e. $a+b+c+d=N$ where $a,b,c,d \in \mathbb{Z}$, known as the \textit{integer weak composition} problem.  This is obtained by computing the number of ways to group $N$ balls using $3$ bars, see I.3 in \cite{combinatorics} for details, given by
\begin{equation}
\binom{N+3}{N},
\end{equation}
subtracting $1$ as indicated above, we obtain the desired result.
    \end{proof}
    
    We now study the scaling of $G\su(2^N)$ for different number of qubits and different groups. An overview of the growth of different symmetry-restricted subalgebras and their corresponding groups is given in table \ref{tab:scalings} and represented graphically in figure \ref{fig:graphic1}.%The results are included in the following table and represented in the subsequent graph:

%    \begin{table}[hbtp]
%        \centering
%    
%    \begin{tabular}{r | l l}
%        Lie Algebra & Growth of Symmetry & Scaling of $\dim_\mathbb{R} \{G\su(2^N)\}$ \\\hline
%        
%        $\su(2^N)$ & $\mathcal{O}(1)$ & $\mathcal{O}(4^N)$\\
%        $C_N\su(2^N)$ & $\mathcal{O}(N)$ & $\mathcal{O}(\frac{4^N}{N})$\\
%        $A_N\su(2^N)$  & $\mathcal{O}(N!)$ & $\mathcal{O}(N^3)$\\
%        $S_N\su(2^N)$ & $\mathcal{O}(N!)$ & $\mathcal{O}(N^3)$\\\hline
%    \end{tabular}
%
%    \caption{Overview of the growth of the dimensions of the subalgebras for different symmetry restrictions.}
    
%    \label{tab:scalings1}
%\end{table}

    \begin{figure}[hbtp]
        \centering
        \begin{tikzpicture}
\begin{semilogyaxis}[small,
height=7cm, width=12cm,
xlabel = {Number of qubits},
axis x line = bottom,
axis y line = left,
ymajorgrids,
major grid style = {very thin, gray!50},
major tick style = {very thin, gray!50},
%ytick align=outside,
axis line style={gray},
axis line style={gray},
axis line shift=2pt,
xmin = 0.9,
xmax = 14.1,
ylabel = {$\dim_{\mathbb{R}} (G\su(2^N))$},
ymax = 4.5e8,
%legend entries = {permutation symmetry, cyclic symmetry, cyclic symmetry 2, standard ansatz, strongly entangled standard ansatz},
%legend pos=south east,
legend style={nodes={scale=0.8, transform shape}}, 
legend pos=north west]
%
%%% data needs to be in column format
% epch, Sn, Cn, Cn2, entanglement, strong entanglement
% 1, 0.1, 0.2, 0.3, 0.4, 0.5
% 2, 0.1, 0.2, 0.3, 0.4, 0.5
%
% and so on
\addplot [teal] table [x=N, y=C_N,  col sep=semicolon, /pgf/number format/read comma as period]{DatenGruppen.csv};
\addplot [orange] table [x=N, y=D_N, col sep=semicolon, /pgf/number format/read comma as period]{DatenGruppen.csv};
\addplot [olive] table [x=N, y=A_N, col sep=semicolon, /pgf/number format/read comma as period]{DatenGruppen.csv};
\addplot [red, dashed] table [x=N, y=S_N, col sep=semicolon, /pgf/number format/read comma as period]{DatenGruppen.csv};
\addplot [purple] table [x=N, y=Unrestricted, col sep=semicolon, /pgf/number format/read comma as period]{DatenGruppen.csv};

\draw (axis cs: 12.9, 8e7) node[anchor = south east, fill=white, inner sep =0pt] {unrestricted}
	(axis cs:13, 7e6) node[anchor=south] {$C_N$}
	(axis cs:13.3, 4e5) node[circle, fill=white, inner sep=0pt, anchor=south] {$D_N$}
	(axis cs:13.3, 8e2) node[fill=white, inner sep=0pt, anchor=south] {$A_N$}
	(axis cs:13.3, 1e2) node[circle, fill=white, inner sep=0pt, anchor=south] {$S_N$};

%\legend{$C_N$,$D_N$,$A_N$,$S_N$, $E_N$ (unres.)}
\end{semilogyaxis}
%\draw[red] (0,0) rectangle (12, 8);
\end{tikzpicture}
        \caption{Comparison of the scaling of different symmetry-restricted Lie algebras, given by their restriction. The unrestricted line corresponds to the maximum Lie algebra with a scaling of $\mathcal{O}(4^N)$. $S_N$ and $A_N$ overlap, since they have the same scaling structure. The scaling as a function of $N$ is also given in table \ref{tab:scalings}.}
        \label{fig:graphic1}
    \end{figure}
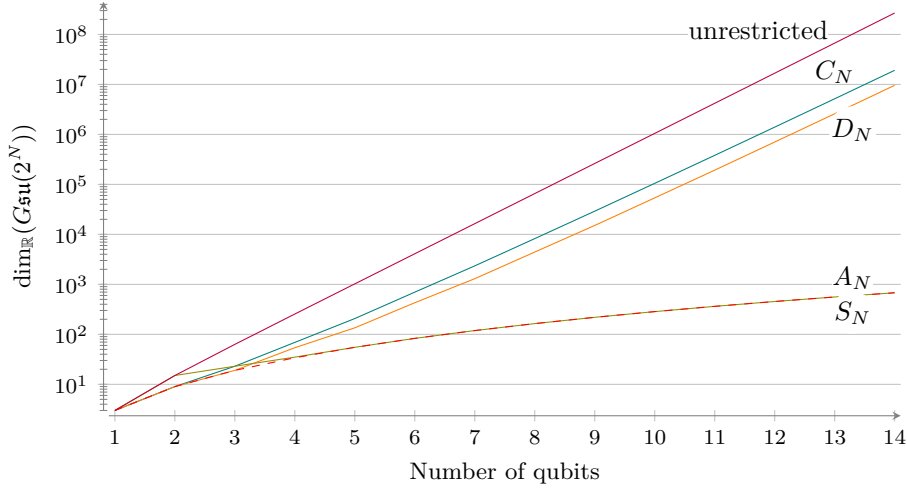

\subsection{Other permutation subgroups}

    We now discuss the dimension of the Lie subalgebra invariant under a permutation group which acts on a subset of $0<N'\leq N$ qubits or in which the permutation can act on all $N$ qubits, but in different ways, depending on the subset of qubits which is considered. Since the former case is just a corollary of the latter, we will discuss this one in detail.
    
    We partition the $N$ qubits into subsets and require each subset be invariant under a particular permutation group. To this end, let $(\lambda_1,\ldots, \lambda_k)$ be a partition of $N$, i.e.

    \begin{equation}
        \lambda_1\geq \lambda_2\geq \ldots, \geq \lambda_k \quad \text{and} \quad \sum_{i}\lambda_i =N.
    \end{equation}

    \noindent Just as before, let $a\in \su(2^N)$, whose coordinates we now write as

    \begin{equation}
        a^{i_1\ldots i_{\lambda_1}i_{\lambda_1+1}\ldots i_{\lambda_1+\lambda_2}i_{\lambda_1+\lambda_2+1}\ldots i_{\lambda_1+\lambda_2+\lambda_3}\ldots i_{\lambda_1+\lambda_2+\ldots+\lambda_{k-1}+1}\ldots i_N}.
    \end{equation}
    
    \noindent and define the following subsets

    \begin{align}
        S_1 &:= \{1,\ldots, \lambda_1\}, \\
        S_2 &:= \{\lambda_1+1,\ldots, \lambda_1+\lambda_2\},\\
        S_3 &:= \{\lambda_1+\lambda_2+1,\ldots, \lambda_1+\lambda_2+\lambda_3\},\\
        &\vdots\\
        S_k &:= \{\lambda_1+\lambda_2+\ldots \lambda_{k-1}+1, \ldots, n\},
    \end{align}

    \noindent and a group $G:=G_1\times G_2\times \ldots \times G_k$, where each $G_i\in\{S_{\lambda_i}, A_{\lambda_i}, D_{\lambda_i}, C_{\lambda_i}, E_{\lambda_i}\}$ acts on the set $S_i$, permuting its $\lambda_i-$elements.

    We now wish to compute the dimension of $G\su(2^N)$, but the situation is not very different from that of the previous section. We can use Theorem \ref{th:dimesnion_subalgebra} to compute the dimension of the space invariant under each $G_i$, having the other indices fixed, so that the product of the dimensions of each $G_i$ invariant subspace minus one (we still have to discard $\sigma_0\otimes \ldots \otimes \sigma_0$) gives the dimension of $G\su(2^N)$. Note, however, that we now may have chains of $\sigma_0\otimes \ldots \sigma_0$ of length $\lambda_i$, as long as we do not have one of length $N$. Thus, in theorem \ref{th:dimesnion_subalgebra} we set $N=\lambda_i$ and do not subtract one on equation \ref{eq:dimension_subalgebra}. We then multiply the dimensions obtained for each $G_i$ and subtract one. We summarize this in the following theorem:

    \begin{theorem}
    \label{th:dimesnion_subalgebra_fancy}
        Let $k\in \mathbb{N}$ and $(\lambda_1,\lambda_2, \ldots, \lambda_k)$ be a partition of $N$ satisfying $\lambda_1\geq\lambda_2\geq\ldots \lambda_k$ and $\sum_{i=1}^{k}\lambda_i=N$. Consider a group $G=G_1\times G_2\times \ldots \times G_k$ where $G_i \in\{S_{\lambda_i}, A_{\lambda_i}, D_{\lambda_i}, C_{\lambda_i}, E_{\lambda_i}\}$. The dimension of $G\mathfrak{su}(2^{N})$ seen as a real Lie algebra is given by \cite{harary1969graph}:

        \begin{equation}
        \label{eq:dimension_subalgebra_fancy}
            \dim_{\mathbb{R}}\{G\mathfrak{su}(2^{N})\} = \prod_{i=1}^{k} Z[G_i](\{4,\ldots, 4\}) - 1 = \prod_{i=1}^{k}\left(\dim_\mathbb{R}\{G_i\su(2^{\lambda_i})\}+1\right) - 1,
        \end{equation}
        \noindent where the cycle index polynomials are given by:

        \begin{align}
            Z[S_{\lambda_i}](\{a_j\}) &= \dfrac{1}{\lambda_i}\sum_{l=1}^{\lambda_i}a_lZ[S_{\lambda_i-l}] \quad Z[S_0] :=1,\\
            Z[A_{\lambda_i}] (\{a_j\}) &= Z[S_{\lambda_i}](\{a_j\}) + Z[S_{\lambda_i}](\{(-1)^{j-1}a_j\}),\\
            \vspace{3mm}
            Z[D_{\lambda_i}](\{a_j\}) &= \begin{cases}
                \dfrac{1}{2}Z(C_{\lambda_i}) + \dfrac{1}{4}(a_1^2a_2^{\frac{\lambda_i-2}{2}}+ a_2^{\frac{\lambda_i}{2}}) & \quad \text{for} \quad \lambda_i \text{ even},\\
                & \\
                \dfrac{1}{2}Z(C_{\lambda_i}) + \dfrac{1}{2}a_1a_2^{\frac{\lambda_i-1}{2}} & \quad \text{for} \quad \lambda_i \text{ odd},
            \end{cases}\\
            Z[C_{\lambda_i}](\{a_j\}) &= \sum_{d|\lambda_i}\phi(d)a_d^{\frac{\lambda_i}{d}},\\
            Z[E_{\lambda_i}](\{a_j\}) &= a_1^{\lambda_i}.
        \end{align}
    \end{theorem}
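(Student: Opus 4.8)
The plan is to reduce the statement to an orbit-counting problem exactly as in the proof of Theorem~\ref{th:dimesnion_subalgebra}, and then to exploit the block structure of the product group $G = G_1\times\cdots\times G_k$. First I would recall the setup of Section~\ref{ssec:scaling-permutation}: any $a\in\su(2^N)$ expands in the Pauli basis with real coordinates $a^{i_1\ldots i_N}$, and $G$-invariance is equivalent to the coordinate condition \eqref{eq:definition_of_invariance}, i.e. $a^{i_1\ldots i_N}$ must be constant on each orbit of the $G$-action on the $N$-strings $(i_1,\ldots,i_N)$ over the $4$-letter alphabet $\{0,1,2,3\}$. Consequently the dimension of the $G$-fixed subspace of the full Pauli-spanned space (of dimension $4^N$) equals the number of such orbits, and $\dim_{\mathbb{R}}\{G\su(2^N)\}$ is this orbit count minus one, the subtraction removing the singleton orbit $\{(0,\ldots,0)\}$ associated to $\sigma_0^{\otimes N}\notin\su(2^N)$.

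The key step is the factorization of the orbit count. Because each $G_i$ permutes only the indices in its own block $S_i$, the action on strings decomposes as a blockwise product action on $X = X_1\times\cdots\times X_k$ with $X_i=\{0,1,2,3\}^{\lambda_i}$. For such a product action the orbit of a string $(s_1,\ldots,s_k)$ is precisely $\mathcal{O}_{G_1}(s_1)\times\cdots\times\mathcal{O}_{G_k}(s_k)$, so orbits of $X$ under $G$ are in bijection with tuples of block orbits, giving $|X/G| = \prod_{i=1}^k |X_i/G_i|$. (Equivalently, since the $G_i$ permute disjoint blocks of the $N$ positions, the cycle index of the induced action factorizes as $Z[G]=\prod_i Z[G_i]$, whence evaluating at the alphabet size $4$ yields the same product; this can also be checked directly via Burnside's lemma applied blockwise, using $|\mathrm{Fix}(g)| = \prod_i |\mathrm{Fix}_{X_i}(g_i)|$.)

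Finally I would identify each factor using Theorem~\ref{th:dimesnion_subalgebra}: the number of $G_i$-orbits of $\lambda_i$-strings over a $4$-letter alphabet is exactly $Z[G_i](\{4,\ldots,4\})$, which by that theorem equals $\dim_\mathbb{R}\{G_i\su(2^{\lambda_i})\}+1$. Multiplying the block counts and discarding the single global all-zero orbit then gives both stated equalities. The point that explains why one subtracts $1$ only once, at the very end, rather than once per block, is that an all-zero sub-string on a proper block $S_i$ is admissible: it corresponds to a genuine nontrivial Pauli string globally whenever some other block carries a nonzero index. Only the globally all-zero string $\sigma_0^{\otimes N}$ fails to lie in $\su(2^N)$, so exactly one orbit is removed.

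I expect the main obstacle to be the rigorous justification of the orbit-count factorization together with the careful bookkeeping of admissible all-zero configurations. The factorization is elementary for a genuine blockwise product action, but one must state explicitly that the $G_i$ act on disjoint index blocks (so that no cross-block identifications arise) and argue precisely that $\sigma_0$-chains of length $\lambda_i<N$ must be retained while the full length-$N$ chain must be excluded; getting this counting boundary correct is where an otherwise routine argument can go wrong.
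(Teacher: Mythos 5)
Your proposal is correct and follows essentially the same route as the paper: reduce to counting $G$-orbits of strings via the coordinate condition \eqref{eq:definition_of_invariance}, apply Theorem \ref{th:dimesnion_subalgebra} blockwise without the per-block subtraction, multiply the counts, and subtract one only for the global $\sigma_0^{\otimes N}$. The only difference is that you spell out the orbit-count factorization for the product action (and its Burnside/cycle-index justification) explicitly, whereas the paper treats this step informally — a welcome tightening, not a different argument.
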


    \begin{example}
        Take $M\in\mathbb{N}$ and let $N=2M$. Consider $G = S_2\underbrace{\times\ldots \times}_{M \text{ times}}S_2$. Then:

        \begin{equation}
            \dim_\mathbb{R}\left\{G\mathfrak{su}(2^{N}) \right\}= \left[\binom{5}{3}\right]^M -1 = 10^{M}-1.
        \end{equation}
    
        Note that the dimension of this subspace is smaller than the dimension of $\mathfrak{su}(2^N)$, which would be $4^{N}-1=16^{M}-1$, but larger that that of $S_N\mathfrak{su}(2^N)$.
    
        We could have also considered $G = S_M\times S_M$. Then:
    
        \begin{equation}
            \dim_\mathbb{R}\left\{G\mathfrak{su}(2^{N}) \right\} =  \left[\binom{M+3}{3}\right]^2 -1 = \left[\dfrac{(M+3)(M+2)(M+1)}{6}\right]^{2}-1.
        \end{equation}
        This is roughly the order of $\mathcal{O}(M^6)=\mathcal{O}((\frac{1}{2}N)^6)=\mathcal{O}(N^6)$, whereas for $S_N\mathfrak{su}(2^N)$ it would be $\mathcal{O}(N^3)$.
    \end{example} 

\begin{figure}[hbtp]
    \centering
    \pgfdeclareplotmark{octagon*}{
\fill (0:2pt) \foreach \angle in {45, 90, ..., 315} {--(\angle:2pt)};}
\begin{tikzpicture}
\begin{semilogyaxis}[small,
height=8cm, width=12cm,
xlabel = {Number of qubits},
axis x line = bottom,
axis y line = left,
ymajorgrids,
major grid style = {very thin, gray!50},
major tick style = {very thin, gray!50},
%ytick align=outside,
axis line style={gray},
axis line shift=2pt,
xmin = .9,
xmax = 14.1,
ylabel = {$\dim_{\mathbb{R}} (G\su(2^N))$},
ymax = 3.7e8,
%legend entries = {permutation symmetry, cyclic symmetry, cyclic symmetry 2, standard ansatz, strongly entangled standard ansatz},
%legend pos=south east,
legend style={nodes={scale=0.8, transform shape}}, 
legend pos=north west]
%
%%% data needs to be in column format
% epch, Sn, Cn, Cn2, entanglement, strong entanglement
% 1, 0.1, 0.2, 0.3, 0.4, 0.5
% 2, 0.1, 0.2, 0.3, 0.4, 0.5
%
% and so on
\addplot [mark=triangle*, teal, ] table [x=N, y=C3,  col sep=semicolon, /pgf/number format/read comma as period]{DatenGruppenMix.csv};
\addplot [mark=pentagon*, teal] table [x=N, y=C5,  col sep=semicolon, /pgf/number format/read comma as period]{DatenGruppenMix.csv};

\addplot [mark=octagon*, teal, ] table [x=N, y=C8,  col sep=semicolon, /pgf/number format/read comma as period]{DatenGruppenMix.csv};

%%%%%%%%%%

\addplot [mark=triangle*, orange] table [x=N, y=D3,  col sep=semicolon, /pgf/number format/read comma as period]{DatenGruppenMix.csv};
\addplot [mark=pentagon*, orange] table [x=N, y=D5,  col sep=semicolon, /pgf/number format/read comma as period]{DatenGruppenMix.csv};

\addplot [mark=octagon*,orange] table [x=N, y=D8,  col sep=semicolon, /pgf/number format/read comma as period]{DatenGruppenMix.csv};

%%%%%%%%%%

\addplot [mark=triangle*, red] table [x=N, y=S3,  col sep=semicolon, /pgf/number format/read comma as period]{DatenGruppenMix.csv};
\addplot [mark=pentagon*, red] table [x=N, y=S5,  col sep=semicolon, /pgf/number format/read comma as period]{DatenGruppenMix.csv};

\addplot [mark=octagon*, red] table [x=N, y=S8,  col sep=semicolon, /pgf/number format/read comma as period]{DatenGruppenMix.csv};

\addplot [purple] table [x=N, y=Unrestricted, col sep=semicolon, /pgf/number format/read comma as period]{DatenGruppen.csv};
\addplot [red] table [x=N, y=S_N, col sep=semicolon, /pgf/number format/read comma as period]{DatenGruppen.csv};

\legend{$C_{N-3}\times E_3$,
$C_{N-5}\times E_5$,
$C_{N-8}\times E_8$,
$D_{N-3}\times E_3$,
$D_{N-5}\times E_5$,
$D_{N-8}\times E_8$,
$S_{N-3}\times E_3$,
$S_{N-5}\times E_5$,
$S_{N-8}\times E_8$,
$E_N$ (unres.),
$S_N$ }
\end{semilogyaxis}
%\draw[red] (0,0) rectangle (12, 8);
\end{tikzpicture}
    \caption{Comparison of the order of some symmetry restricted Lie algebras. The alternating group has the same behaviour as the permutation group, so it is not included here. For reference, we include the unrestricted ($E_N\su(2^N)$) and the $S_N\su(2^N)$ groups.}
    \label{fig:graphic2}
\end{figure}
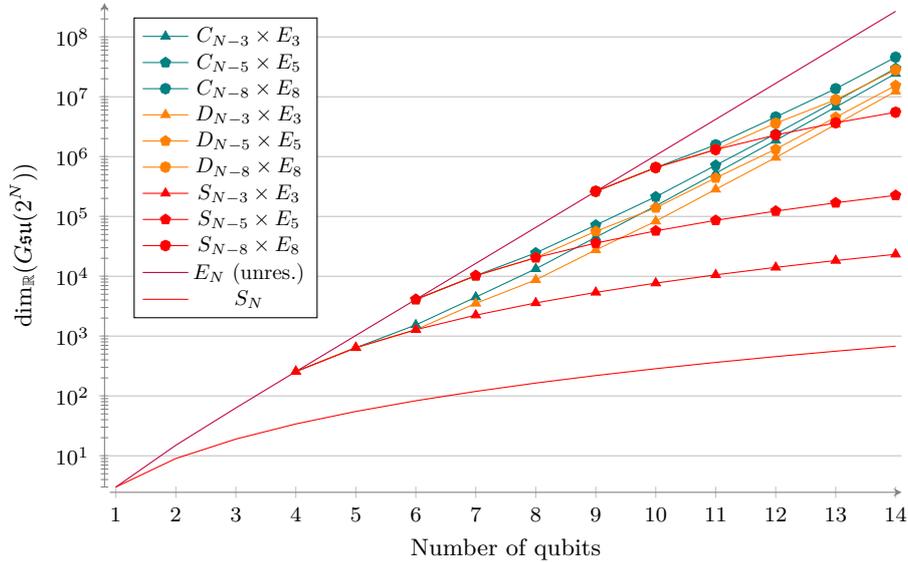

%    \begin{example}
%        We consider the simplest example of the groups that act on $N'=N-\mu$ qubits for $\mu\in\{3,5,8\}$, namely $G_{N-\mu}\times E_{\mu}$        
%    \end{example}

\subsection{Energy-preserving subalgebra}
    In the following, we are interested in the scaling of the dimension with respect to the number of qubits $N$ of the subgroup $H^{(N)}\SU{(2^N)}$, introduced in section \ref{sec:EnergyInvariance}. This can be computed easily if we consider a specific form of the elements $A \in H^{(N)}\SU{(2^N)}$. Choosing the appropriate basis, $H$  can be diagonalized and brought to an ordered form, where
\begin{equation}
H = \displaystyle \bigoplus _{i=1}^kn_iI_i,\,\,\,\,\,\,\,\, n_1 > n_2> \ldots > n_k.
\label{energyordered}
\end{equation} 
Suppose a matrix $A$ commutes with $H$. The matrix $A$ can be written as a block matrix, i.e.
    \begin{equation}
        A=\begin{pmatrix} 
        A_{11} & A_{12} & \cdots & A_{1k} \\
        A_{21} & A_{22} & \cdots & A_{2k} \\
        & & \cdots & \\
        A_{k1} & A_{k2} & \cdots & A_{kk} \\
        \end{pmatrix}.
    \end{equation}
    The size of $A_{jj}$ is equal to the size of $I_{j}$ in equation \eqref{energyordered}. Then
    \begin{equation}
        (HA)_{ij} = n_i A_{ij}, (AH)_{ij} = n_j A_{ij}.
    \end{equation}
    When $i\neq j$, since $HA=AH$ and $n_i\neq n_j$, we get that $(n_i-n_j)A_{ij} = 0\Rightarrow A_{ij}=0$. Therefore, $A$ is a diagonal block matrix. Using this we prove the following theorem regarding the dimension of the energy-preserving subspace of $\SU{(2^N)}$:
\begin{theorem}
The dimension of the energy preserving subspace on $N$ qubits, $H^{(N)}\SU{(2^N)}$ is
\begin{equation}
    \dim_\mathbb{R}\{H^{(N)}\SU{(2^N)}\}=\binom{2N}{N}-1\approx \mathcal{O}\left(\frac{4^N}{\sqrt{N}}\right).
\end{equation}
    \begin{proof}
        In the following proof, we choose to work in the energy-ordered basis where $H^{(N)}$ is given by equation \eqref{energyordered}. We start by noticing that the dimension of the eigenspace $i$ is $\dbinom{N}{i}\times \dbinom{N}{i}$, which corresponds to $i$ qubits being in a state of energy 1 and $N-i$ qubits in energy 0.
        
        Let $A \in H^{(N)}\text{SU}(2^N)$. In the energy-ordered basis, $A$ is block diagonal, of the form $A = \bigoplus_i A_i$.  Moreover, from the unitarity of $A$ we obtain that:
    
    \begin{equation}
        A^\dagger A = AA^\dagger = \bigoplus_i A_iA_i^\dagger=\bigoplus_i\mathbbm{1}_i.
    \end{equation}

    This implies that $A_i^\dagger A_i=A_iA_i^\dagger = \mathbbm{1}$ for $i=1,\ldots, N$, i.e. each block is unitary. The matrix representation of $A$ in the energy-ordered basis is of the form,
    \begin{equation}
        A = \begin{pmatrix}
            A_0 & & & \\
            & A_1 & & \\
            & & \ddots & \\
            & & & e^{-i \phi}A_N
        \end{pmatrix}.
    \end{equation}
    where, $A_i \in \text{U}(\binom{N}{i})$ and $\prod_i \det(A_i) = e^{i \phi}$. The factor of $e^{-i \phi}$ ensures that  $A \in \text{SU}(2^N)$.

    Therefore, 

    \begin{equation}
        \dim_{\mathbb{R}}\{H^{(N)}\SU{(2^N)}\}=\sum_{i=0}^N\dbinom{N}{i}^2-1=\binom{2N}{N}-1,
    \end{equation}
    where 1 is subtracted because $\det(A)=1$. Using Stirling's formula, $N! \sim \sqrt{2 \pi N} (N/e)^N$, we obtain the scaling of the dimension as,
\begin{equation}
    \dim_{\mathbb{R}}\{H^{(N)}\SU{(2^N)}\} \sim \frac{4^N}{\sqrt{N}}.
\end{equation}
    \end{proof}
\end{theorem}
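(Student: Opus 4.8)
The plan is to exploit the block-diagonal structure of $A\in H^{(N)}\SU(2^N)$ that was established just before the statement, and then reduce the dimension count to a standard binomial identity. First I would work in the energy-ordered basis in which $H^{(N)}$ is diagonal. Since the energy of a computational basis state $|\lambda_1\rangle\otimes\cdots\otimes|\lambda_N\rangle$ equals $\sum_i\lambda_i$, the distinct eigenvalues of $H^{(N)}$ are exactly $0,1,\ldots,N$, giving $N+1$ distinct eigenspaces. The eigenspace of energy $i$ is spanned by those basis states with precisely $i$ ones, so its dimension is $\binom{N}{i}$.

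Next I would invoke the block-diagonal argument from the preamble: because the eigenvalues are pairwise distinct, any $A$ commuting with $H^{(N)}$ decomposes as $A=\bigoplus_{i=0}^N A_i$, where $A_i$ acts on the energy-$i$ subspace. I would then impose unitarity: the relation $A^\dagger A=\mathbbm{1}$ decouples across blocks and forces each $A_i\in U(\binom{N}{i})$, with the blocks otherwise unconstrained. Hence $H^{(N)}\SU(2^N)$ is precisely the subgroup of $\prod_{i=0}^N U(\binom{N}{i})$ carved out by the single special-unitary condition $\prod_i\det(A_i)=1$.

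The dimension count then follows quickly. Each $U(m)$ is a real manifold of dimension $m^2$, so $\prod_{i=0}^N U(\binom{N}{i})$ has real dimension $\sum_{i=0}^N\binom{N}{i}^2$. The determinant-one constraint is a single real equation and so reduces the dimension by exactly one, yielding $\sum_{i=0}^N\binom{N}{i}^2-1$. Applying the Vandermonde identity $\sum_{i=0}^N\binom{N}{i}^2=\binom{2N}{N}$ produces the claimed value $\binom{2N}{N}-1$, and Stirling's approximation $N!\sim\sqrt{2\pi N}\,(N/e)^N$ converts it into the asymptotic $\mathcal{O}(4^N/\sqrt{N})$.

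The one step requiring genuine care --- and which I expect to be the main obstacle --- is justifying that the condition $\prod_i\det(A_i)=1$ removes exactly one real dimension and not more. The key observation is that the map $\prod_i U(\binom{N}{i})\to U(1)$ sending $(A_i)_i\mapsto\prod_i\det(A_i)$ is a surjective group homomorphism, indeed a submersion, so its kernel is a closed subgroup of codimension one. A concrete witness is to write $\prod_i\det(A_i)=e^{i\phi}$ and absorb the compensating phase into a single block, replacing $A_N$ by $e^{-i\phi}A_N$; this shows the fibre is nonempty and of codimension one. Everything else --- the eigenspace dimensions, the decoupling of the unitarity condition across blocks, and the Vandermonde identity --- is routine.
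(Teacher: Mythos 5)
Your proposal follows essentially the same route as the paper's proof: block-diagonalization of $A$ in the energy-ordered basis (using the distinctness of the eigenvalues $0,1,\ldots,N$ of $H^{(N)}$), unitarity decoupling across blocks to give $A_i\in U\bigl(\binom{N}{i}\bigr)$, a single determinant-one constraint removing one real dimension, the Vandermonde identity $\sum_{i=0}^{N}\binom{N}{i}^2=\binom{2N}{N}$, and Stirling's formula for the asymptotics. If anything, your write-up is slightly more careful at the two points the paper glosses over: you correctly identify the energy-$i$ eigenspace dimension as $\binom{N}{i}$ (the paper's ``$\binom{N}{i}\times\binom{N}{i}$'' refers to the block size, not the eigenspace dimension), and you justify that the constraint $\prod_i\det(A_i)=1$ has codimension exactly one by exhibiting the determinant map as a surjective homomorphism (submersion) onto $U(1)$, where the paper only absorbs the compensating phase $e^{-i\phi}$ into the last block.
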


\begin{example}
    \textbf{Energy preserving symmetry on two qubits}. We consider the Hamiltonian for a two-qubit system given by equation \eqref{eq:NqubitHamiltonian}, which in this particular case reduces to:

    \begin{equation}
        H^{(2)} := \begin{pmatrix} 0 & 0\\ 0 & 1 \end{pmatrix} \otimes \begin{pmatrix} 1 & 0\\ 0 & 1 \end{pmatrix} + \begin{pmatrix} 1 & 0\\ 0 & 1 \end{pmatrix} \otimes \begin{pmatrix} 0 & 0\\ 0 & 1 \end{pmatrix}= \begin{pmatrix}
            0 & 0 & 0 & 0\\
            0 & 1 & 0 & 0\\
            0 & 0 & 1 & 0\\
            0 & 0 & 0 & 2
        \end{pmatrix}.
    \end{equation}
    Note that $H^{(2)}$ is already in the energy-ordered form.

    Let us now look at a generic matrix $A\in \operatorname{M}_4(\mathbb{C})$ and impose that it commutes with $H^{(2)}$, namely:

    \begin{equation}
        \begin{pmatrix}
            a_{11} & a_{12} & a_{13} & a_{14}\\
            a_{21} & a_{22} & a_{23} & a_{24}\\
            a_{31} & a_{32} & a_{33} & a_{34}\\
            a_{41} & a_{42} & a_{43} & a_{44}\\
        \end{pmatrix}\begin{pmatrix}
            0 & 0 & 0 & 0\\
            0 & 1 & 0 & 0\\
            0 & 0 & 1 & 0\\
            0 & 0 & 0 & 2
        \end{pmatrix} =
        \begin{pmatrix}
            0 & 0 & 0 & 0\\
            0 & 1 & 0 & 0\\
            0 & 0 & 1 & 0\\
            0 & 0 & 0 & 2
        \end{pmatrix}\begin{pmatrix}
            a_{11} & a_{12} & a_{13} & a_{14}\\
            a_{21} & a_{22} & a_{23} & a_{24}\\
            a_{31} & a_{32} & a_{33} & a_{34}\\
            a_{41} & a_{42} & a_{43} & a_{44}\\
        \end{pmatrix}.
    \end{equation}

    \noindent This results in the following conditions:

    \begin{equation}
        \begin{pmatrix}
            0 & a_{12} & a_{13} & 2a_{14}\\
            0 & a_{22} & a_{23} & 2a_{24}\\
            0 & a_{32} & a_{33} & 2a_{34}\\
            0 & a_{42} & a_{43} & 2a_{44}\\
        \end{pmatrix} = 
        \begin{pmatrix}
            0 & 0 & 0 & 0\\
            a_{21} & a_{22} & a_{23} & a_{24}\\
            a_{31} & a_{32} & a_{33} & a_{34}\\
            2a_{41} & 2a_{42} & 2a_{43} & 2a_{44}\\
        \end{pmatrix},
    \end{equation}

    \noindent which implies in turn that

    \begin{equation}
        A = \begin{pmatrix}
            a_{11} & 0 & 0 & 0\\
            0 & a_{22} & a_{23} & 0 \\
            0 & a_{32} & a_{33} & 0 \\
            0 & 0 & 0 & a_{44}
        \end{pmatrix}.
    \end{equation}

    \noindent Since we are interested in the operators $A\in \SU(2^{2})$ we impose that $AA^\dagger = \mathbbm{1}$ and $\det\{A\}=1$.

    \begin{equation}
        \begin{pmatrix}
            a_{11} & 0 & 0 & 0\\
            0 & a_{22} & a_{23} & 0 \\
            0 & a_{32} & a_{33} & 0 \\
            0 & 0 & 0 & a_{44}
        \end{pmatrix} \begin{pmatrix}
            a_{11}^* & 0 & 0 & 0\\
            0 & a_{22}^* & a_{32}^* & 0 \\
            0 & a_{23}^* & a_{33}^* & 0 \\
            0 & 0 & 0 & a_{44}^*
        \end{pmatrix} = \begin{pmatrix}
            1 & 0 & 0 & 0\\
            0 & 1 & 0 & 0\\
            0 & 0 & 1 & 0\\
            0 & 0 & 0 & 1
        \end{pmatrix}.
    \end{equation}

    \noindent This implies that:

    \begin{equation}
        a_{11} = e^{i\theta_1}, \quad A_2A_2^\dagger:=\begin{pmatrix} a_{22} & a_{23} \\ a_{32} & a_{33} \end{pmatrix}\begin{pmatrix} a_{22}^* & a_{32}^*\\  a_{23}^* & a_{33}^*\end{pmatrix} = \begin{pmatrix} 1 & 0\\ 0 & 1\end{pmatrix}, \quad a_{44} = e^{i\theta_2},
    \end{equation}

    \noindent and since $\det\{A\}= 1 = e^{i(\theta_1+\theta_2)}\det\{A_2\}$ we need $\det\{A_2\}= e^{-i(\theta_1+\theta_2)}$. We can thus rewrite $A$ as:

    \begin{equation}
        A = \left(\begin{array}{ c | c | c}
                    e^{i\theta_1} &  &  \\
                    \hline
                    & e^{-i\frac{1}{2}(\theta_1+\theta_2)} U & \\
                    \hline
                    & & e^{i\theta_2}
            \end{array}\right) \quad \text{with} \quad U\in \SU(2).
    \end{equation}

    \noindent With this decomposition we obtain that $H^{(2)}\SU(4)$ is described by five parameters.
\end{example}

\subsection{Numerical evaluation of the number of dimensions}\label{ssec:simulation}

The operator space can be covered approximately by a parameterized quantum circuit. Unless the quantum circuit is made to exactly cover the whole operator space, it will only cover a connected patch thereof. An approximation of the space can be made by concatenating a series of mutually non-abelian gates, termed layers. The layered approach of small gates can never cover the whole space \cite{kazi_universality_2024}, but it provides a sufficient approximation. A direct measurement of the dimensions of the subgroup covered by an ansatz is computationally infeasible for four or more qubits. It is however possible to get an approximation of the number of dimensions. According to  \cite{ragone_unified_2024}, the variance of a circuit relates to the size of the algebra as
\begin{equation}
    \operatorname{Var}_\theta(\ell) = \frac{\mathcal{P}(\rho)\mathcal{P}(O)}{\dim(\mathcal{M}\mathfrak{su})},
\end{equation}
where $\ell$ is the loss function and $\mathcal{P}(\rho)\mathcal{P}(O)$ measure entanglement and locality of the circuit. By keeping the studied problem and quantum circuit construction constant, we restrain the  relationship of the variance to the dimensionality of the Lie algebra as
\begin{equation}
    \operatorname{Var}_\theta(\ell) \sim 1/\dim(\mathcal{M}\mathfrak{su}).
\end{equation}

The number of independent dimensions within the circuit can be approximated by the variance of the derivative of the circuit \cite{ragone_unified_2024}. We have evaluated this variance in simulation for quantum circuits ranging from 4 to 16 qubits. The variance of the circuit can be approximated by choosing random parameters and evaluating the gradient at that point, then evaluating the variance of the samples  \cite{mcclean_barren_2018}. Our results show that the scaling of the symmetry-restricted subgroup and the variance of the corresponding circuit align, shown in figure \ref{fig:simulation scaling}. 

While the cyclic-invariant and standard approach quantum circuits do not completely approximate the scaling of the whole symmetry-restricted subgroup, they do grow exponentially. For the permutation-invariant quantum circuit we recover the expected polynomial scaling. The details of the simulation are given in appendix \ref{sec:simulation-details}, together with the construction of the quantum circuits. All circuits have approximately the same number of parameters. It is therefore the construction of the circuit and its confinement to a specific symmetry-restricted subgroup that restricts its scaling. This result confirms our theoretical finding above.

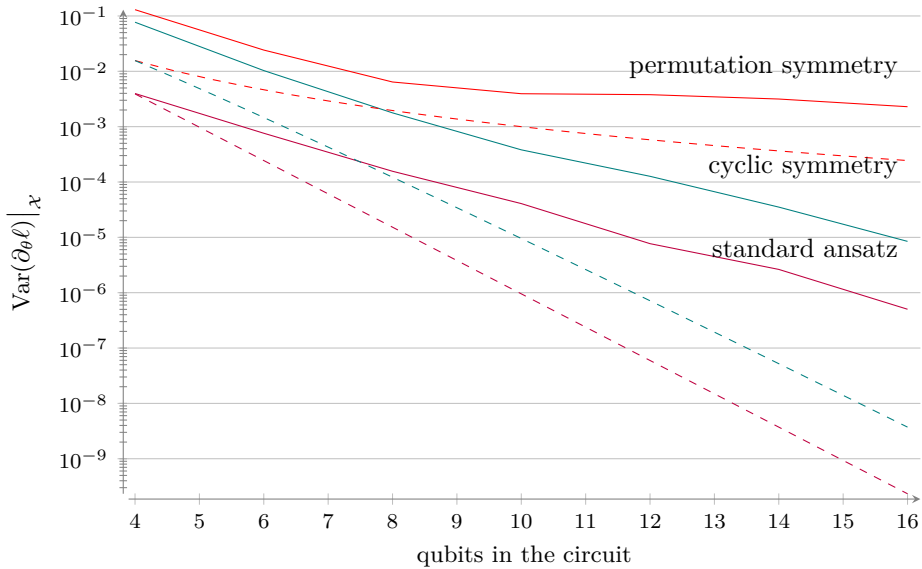
\begin{figure}
    \centering
\begin{tikzpicture}
\begin{axis}[small,
height=8cm, width=12cm,
no markers,
xlabel = {qubits in the circuit},
axis x line = bottom,
axis y line = left,
ymajorgrids,
major grid style = {very thin, gray!50},
major tick style = {very thin, gray!50},
%ytick align=outside,
axis line style={gray},
axis line shift=2pt,
ymode=log,
xmin = 3.9,
xmax = 16.2,
ylabel = {$\operatorname{Var}(\partial_\theta \ell) \big|_{\mathcal{X}}$},
%every axis y label/.style={at={(ticklabel cs:.5)},rotate = 90, anchor=center},
%ymax = 1.02,
%legend entries = {permutation symmetry, cyclic symmetry, cyclic symmetry 2, standard ansatz, strongly entangled standard ansatz},
%legend pos=south east,
]
%
%%% data needs to be in column format
% epch, Sn, Cn, Cn2, entanglement, strong entanglement
% 1, 0.1, 0.2, 0.3, 0.4, 0.5
% 2, 0.1, 0.2, 0.3, 0.4, 0.5
%
% and so on
\addplot [red] table [x=qubits, y=Sn_circuit, col sep=semicolon]{variance-loss-pivot.csv};
%\addplot [olive] table [x=qubits, y=Cn_circuit, col sep=semicolon]{variance-loss-pivot.csv};
\addplot [teal] table [x=qubits, y=Cn_circuit2, col sep=semicolon]{variance-loss-pivot.csv};
\addplot [purple] table [x=qubits, y=strongly_entanglement_circuit, col sep=semicolon]{variance-loss-pivot.csv};

\addplot[dashed, red, domain=4:16, samples=21] {1/(x^3)};
\addplot[dashed, teal, domain=4:16, samples=21] {1 * x/(4^x)};
\addplot[dashed, purple, domain=4:16, samples=21] {1 /(4^x)};
\draw (axis cs: 16, 5e-3) node[above left] {permutation symmetry}
(axis cs: 16, 8e-5) node[above left] {cyclic symmetry}
%(axis cs: 50, .53) node[above left] {strongly entangled}
(axis cs: 16, 3e-6) node[above left] {standard ansatz};
\end{axis}
%\draw[red] (0,0) rectangle (12, 8);
\end{tikzpicture}
    \caption{The variance of the derivative of the loss function on the quantum circuits constructed to confirm with a permutation-, cyclic- and no symmetry. The variance is evaluated at a set of parameters $\mathcal{X}$. The dashed lines of the same color indicate the maximal scaling of the variance in the complete symmetry-restricted subgroup.}
    \label{fig:simulation scaling}
\end{figure}

%\subsection{Finding the dimension of the subgroup}

%In the cases of a symmetry restriction that can be expressed through permutations, determining the dimension of the resulting restricted Lie group is a combinatorial exercise. 

%\subsection{Implications on barren plateaus}

%The scaling of the symmetry-invariant subgroups $\mathcal{M}SU$ has implications for the applicability of quantum computing to computations. As shown \cite{cerezo_does_2023, ragone_unified_2023, arrasmith_effect_2021, schatzki_theoretical_2024}, the scaling of the dimensions of the subgroup has implications on the appearance of barren plateaus in a variational quantum circuit setting.

\section{Overview of symmetry-restricted Lie groups}\label{ssec:summary}

The symmetry-restricted Lie groups have a smaller dimension that the non-restricted spaces. The number of dimensions of the group is equal to the number of independent elements in the algebra -- The symmetrized sums of Pauli strings. Counting the number of independent sums is  based on basic combinatorics. We show the order of growth of the dimensions of the Lie groups under the action of different symmetry groups in table \ref{tab:scalings}.

\begin{table}[h]
    \centering
    \caption{Overview the growth of dimensions of the subgroups for different symmetry restrictions. The symmetrized Lie groups are listed in order of the size of the corresponding discrete symmetry. For the energy invariance symmetry, there is no corresponding expression in the number of permutations.}
    \begin{tabular}{r | l l}
        Symmetry group & Growth of Symmetry & Scaling of dimensions \\\hline
        unconstrained Lie group & $\mathcal{O}(1)$ & $\mathcal{O}(4^N)$\\
        Cyclic group $C_N$ & $\mathcal{O}(N)$ & $\mathcal{O}(4^N/N)$\\
        Dihedral group $D_N$ & $\mathcal{O}(N)$ & $\mathcal{O}(4^N/N)$\\
        Alternating group $A_N$ & $\mathcal{O}(N!/2)$ & $\mathcal{O}(N^3)$\\
        Permutation group $S_N$ & $\mathcal{O}(N!)$ & $\mathcal{O}(N^3)$\\
        Permutation group in all entries $S_{2^N}$ & $\mathcal{O}(2^N!)$ & $\mathcal{O}(1)$\\\hline
        Energy conservation & & $\mathcal{O}(4^N/\sqrt{N})$\\\hline
    \end{tabular}
    
    \label{tab:scalings}
\end{table}

The relation between the size of the symmetry group acting on the Lie group and the dimension of the resulting restricted space is notable. In the case of the Alternating and Permutation group, the growth of the symmmetry group is faster than that of the unconstrained space, yet the dimension of the Lie group under the action of the symmetry is still growing, albeit polynominally.

%Some of these results, in particular the permutation invariance, also make an appearance in other fields of physics. The symmetry appears in a number of quantum mechanical systems, in particular the Bose gas \cite{ziff_ideal_1977}

%%%%%%%%%%%%%%%%%%%%%%%%%%%%%%%%%%%%%%%%%%%%%%%%%%%%%%%%%%%%

\section{Discussion}\label{sec:discussion}

We formalize the structure of subgroups of the special unitary group $\SU(2^N)$. Symmetry-restricted subgroups and the quantum circuits that implement them are well-behaved and scale depending on the structure and size of the restricting symmetry. We  confirm numerically that the restriction on the subgroup in turn restricts the expressibility of the quantum circuit, if it is constructed to be confined to that domain. Our results show that the scaling of the symmetry-restricted subgroup can be obtained by their relation to combinatorics for discrete symmetries and with a change of basis for matrix-based methods. For these two approaches, we also show that the structures are well-behaved. Using this result, we can directly employ the symmetry-restricted subspace, rather than skirting around the topic.

Since the scaling of the subgroup has a direct connection to barren plateaus \cite{ragone_unified_2024}, calculating the scaling for a given symmetry enables a performance estimation for training the quantum circuit. Even simple quantum circuit ansatze mirror the behaviour of their surrounding subgroup in terms of scaling, with a clear difference in the polynomial scaling of the permutation-invariant ansatz and the exponential scaling of cyclic invariance or the unconstrained standard ansatze.

Understanding the scaling of symmetry-restricted quantum circuits directly and their mathematical properties opens up avenues for direct treatment and exploitation of these structures. The fact that symmetry-restricted subgroups inherit much their supergroup's properties for wide case of applications puts prior work on sure footing and enables a more targeted treatment of phenomena like barren plateaus, but also the observed performance advantage of restricted quantum circuits.

Future work on this topic includes exploring continuous symmetries as restrictions. While quantum circuits can be constructed to include a continuous $\SU(2)$ \cite{nguyen_theory_2024}, they do not fit into the mathematical framework presented in this work. It can be expected that not all continuous symmetries can be included as restrictions, as a counterexample can be found easily, with the set $\mathcal{M}=\SU(2^N)$, because $\mathcal{M}\SU(2^N)$ equals the center of $\SU(2^N)$ \cite{lechner_private_2024}.

\section*{Acknowledgements}
Removed for peer review
%    The authors acknowledge funding from the German Federal Ministry of Education and Research (BMBF) under the funding program "Förderprogramm Quantentechnologien – von den Grundlagen zum Markt" (funding program quantum technologies – from basic research to market), project BAIQO, 13N16089.

%%%%%%%%%%%%%%%%%%%%%%%%%%%%%%%%%%%%%%%%%%%%%%%%%%%%%%%%%%%%
%%%%%%   Appendices  %%%%%%%

%\bibliography{references, zotero-library}

\appendix
\section{Lexicographic order}\label{sec:lexicographic-order}
    
    \begin{definition}
        Let $V,W$ be $\mathbb{R}-$vector spaces with bases $\{v_i\}_{i=0}^{N-1}$ and $\{w_i\}_{i=0}^{M-1}$. The \textbf{tensor product vector space} $V\otimes W$ is defined as the $\mathbb{R}-$vector space with basis given by the symbols $\{v_i\otimes w_j\}_{i,j=0}^{N-1, M-1}$.
    \end{definition}

    With $V,W$ defined as above, since $V\otimes W$ is a real vector space of dimension $N \times M$, we may consider the canonical basis $\left\{e_{k}\right\}_{k=0}^{N \times M-1}$ where $e_{k}$ are the usual column vectors having a $1$ at row $k$ and zero elsewhere. There is a \emph{natural} way to pair $v_{i} \otimes w_{j}$ to $e_{k}$, namely the so-called \textbf{lexicographic order}:

    \begin{equation}
        \begin{array}{llll}
        v_{0} \otimes w_{0} \leftrightarrow e_{0} & v_{1} \otimes w_{0} \leftrightarrow e_{M} & \cdots & v_{N} \otimes w_{0} \leftrightarrow e_{(N-1)M} \\
        v_{0} \otimes w_{1} \leftrightarrow e_{1} & v_{1} \otimes w_{1} \leftrightarrow e_{M+1} & & v_{N-1} \otimes w_{1} \leftrightarrow e_{(N-1)M+1} \\
        \vdots & \vdots & \ddots & \vdots \\
        v_{0} \otimes w_{M-1} \leftrightarrow e_{M-1} & v_{1} \otimes w_{M-1} \leftrightarrow e_{2M-1} & \cdots & v_{N-1} \otimes w_{M-1} \leftrightarrow e_{NM-1}
        \end{array}
    \end{equation}
        
    The interesting aspect of using the lexicographic order convention to identify the abstract tensor product construction with the canonical basis is that it agrees with the convention of identifying the tensor product with the Kronecker product. We can see this in the following example.
    
    \begin{example}
        Consider the real vector spaces $\mathbb{R}^{2}$ and $\mathbb{R}^3$ with the canonical bases given respectively by:
        \begin{equation}
            \left\{v_{i}\right\}_{i=0}^{1}=\left\{\left(\begin{array}{l}1 \\ 0\end{array}\right),\left(\begin{array}{l}0 \\ 1\end{array}\right)\right\} \quad \text{and} \quad \left\{w_{j}\right\}_{j=0}^{2}=\left\{\left(\begin{array}{l}1 \\ 0 \\ 0\end{array}\right),\left(\begin{array}{l}0 \\ 1 \\ 0\end{array}\right),\left(\begin{array}{l}0 \\ 0 \\ 1\end{array}\right)\right\}.
        \end{equation}
    
        \noindent If one constructs the tensor product vector space $\mathbb{R}^2\otimes \mathbb{R}^3$ with basis given by $\{v_i\otimes w_j\}_{i,j=0}^{1,2}$, one can show that the pairing defined above agrees with the Kronecker product $\otimes_K$ of the vectors. Indeed,

        \begin{align}
            & \begin{pmatrix}
            1 \\
            0
            \end{pmatrix} \otimes
            \begin{pmatrix}
            1 \\
            0 \\
            0
            \end{pmatrix} \leftrightarrow
            \begin{pmatrix}
            1 \\
            0 \\
            0 \\
            0 \\
            0 \\
            0
            \end{pmatrix} = 
            \begin{pmatrix}
            1 & \cdot\begin{pmatrix}
                1 \\
                0 \\
                0 \\
                \end{pmatrix}\\
            0 & \cdot\begin{pmatrix}
                1 \\
                0 \\
                0 \\
                \end{pmatrix}
            \end{pmatrix} = 
            \begin{pmatrix}
            1 \\
            0
            \end{pmatrix} \otimes_K
            \begin{pmatrix}
            1 \\
            0 \\
            0
            \end{pmatrix},
            \\
            & \begin{pmatrix}
            1 \\
            0
            \end{pmatrix} \otimes
            \begin{pmatrix}
            0 \\
            1 \\
            0
            \end{pmatrix} \leftrightarrow
            \begin{pmatrix}
            0 \\
            1 \\
            0 \\
            0 \\
            0 \\
            0
            \end{pmatrix} = 
            \begin{pmatrix}
            1 & \cdot\begin{pmatrix}
                0 \\
                1 \\
                0 \\
                \end{pmatrix}\\
            0 & \cdot\begin{pmatrix}
                0 \\
                1 \\
                0 \\
                \end{pmatrix}
            \end{pmatrix} = 
            \begin{pmatrix}
            1 \\
            0
            \end{pmatrix} \otimes_K
            \begin{pmatrix}
            0 \\
            1 \\
            0
            \end{pmatrix} ,
        \end{align}
    
        \noindent and the pairing continues in this fashion. We observe that the lexicographic identification matches the result of the Kronecker product of the vectors. This is, of course, a manifestation of the fact that the choice of the lexicographic order is equivalent to the choice of a basis in which the tensor product is represented by the Kronecker product of vectors.

    \end{example} 
    Having defined these concepts we proceed to construct a basis for the Lie algebra $\su(2^N)$.
    
    \begin{theorem}
        Let $1\leq N \in \mathbb{N}$. The ordered set

        \begin{equation}
            \mathcal{B}:= \left\{ \left\{i(\sigma_{\mu_1} \otimes \sigma_{\mu_2} \otimes \hdots \otimes \sigma_{\mu_N}) \right\}_{\mu_1 , \hdots , \mu_N=0}^3 \setminus \left\{i(\sigma_0 \otimes \hdots \otimes \sigma_0) \right\}   \right\}
        \end{equation}

        \noindent defines, upon the choice of the lexicographic ordering, a basis for the real vector space $\mathfrak{su}(2^{N})$.
    \end{theorem}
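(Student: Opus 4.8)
The plan is to verify the three defining properties of a basis: that every element of $\mathcal{B}$ lies in $\mathfrak{su}(2^{N})$, that the elements are $\mathbb{R}$-linearly independent, and that they span. Since $\mathcal{B}$ has exactly $4^N-1$ elements and $\dim_\mathbb{R}\mathfrak{su}(2^{N}) = 4^N-1$, proving containment together with either independence or spanning suffices; I would establish containment and independence, then close with the dimension count.

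First I would check containment. Each Pauli matrix is Hermitian, and the Kronecker product of Hermitian matrices is Hermitian, so $P_{\vec\mu}:=\sigma_{\mu_1}\otimes\cdots\otimes\sigma_{\mu_N}$ satisfies $P_{\vec\mu}^\dagger = P_{\vec\mu}$; hence $(iP_{\vec\mu})^\dagger = -iP_{\vec\mu}$ is skew-Hermitian. For tracelessness I use the multiplicativity of the trace under Kronecker products, $\operatorname{tr}(A\otimes B)=\operatorname{tr}(A)\operatorname{tr}(B)$, together with $\operatorname{tr}(\sigma_0)=2$ and $\operatorname{tr}(\sigma_1)=\operatorname{tr}(\sigma_2)=\operatorname{tr}(\sigma_3)=0$. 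This gives $\operatorname{tr}(P_{\vec\mu})=\prod_{j}\operatorname{tr}(\sigma_{\mu_j})$, which vanishes unless every $\mu_j=0$. As $\mathcal{B}$ excludes the all-$\sigma_0$ string, each $iP_{\vec\mu}\in\mathcal{B}$ is traceless and therefore lies in $\mathfrak{su}(2^{N})$.

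The technical heart is linear independence. I would use the Hilbert--Schmidt inner product $\langle A,B\rangle:=\operatorname{tr}(A^\dagger B)$ and the single-qubit orthogonality relation $\operatorname{tr}(\sigma_\mu\sigma_\nu)=2\delta_{\mu\nu}$. Combining this with the mixed-product property $(A_1\otimes\cdots\otimes A_N)(B_1\otimes\cdots\otimes B_N)=(A_1B_1)\otimes\cdots\otimes(A_NB_N)$ and multiplicativity of the trace yields
\begin{equation}
\operatorname{tr}(P_{\vec\mu}P_{\vec\nu})=\prod_{j=1}^N\operatorname{tr}(\sigma_{\mu_j}\sigma_{\nu_j})=2^N\prod_{j=1}^N\delta_{\mu_j\nu_j},
\end{equation}
so the Pauli strings are mutually orthogonal and nonzero. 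Orthogonal nonzero vectors are linearly independent, and multiplying each by the scalar $i$ preserves independence over $\mathbb{R}$; hence $\mathcal{B}$ is $\mathbb{R}$-linearly independent. Because $|\mathcal{B}|=4^N-1=\dim_\mathbb{R}\mathfrak{su}(2^{N})$, the set is a maximal independent family and thus a basis.

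I expect the main subtlety to be bookkeeping the real-versus-complex structure: the factor $i$ is precisely what converts the Hermitian Pauli strings into skew-Hermitian generators of $\mathfrak{su}(2^{N})$, and one must verify independence over $\mathbb{R}$ rather than over $\mathbb{C}$. An alternative to the dimension-count shortcut is to prove spanning directly: any skew-Hermitian $a$ yields a Hermitian $-ia$, which expands in the full orthogonal family $\{P_{\vec\mu}\}$ of $4^N$ Pauli strings (a real basis of the Hermitian matrices), and tracelessness of $a$ forces the coefficient of $P_{\vec 0}=\mathbbm{1}$ to vanish, leaving exactly a real combination of elements of $\mathcal{B}$. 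Finally, the phrase \emph{``upon the choice of the lexicographic ordering''} only fixes the identification of the abstract tensor product with the Kronecker product; it labels the basis but does not affect the basis property, which is intrinsic.
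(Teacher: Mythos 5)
Your proposal is correct and follows essentially the same route as the paper: containment in $\mathfrak{su}(2^N)$ via skew-Hermiticity and trace multiplicativity of the Kronecker product, then a count of the $4^N-1$ elements against $\dim_{\mathbb{R}}\mathfrak{su}(2^N)$. The only difference is that where the paper simply asserts linear independence ``thanks to the tensor product construction,'' you substantiate it with the explicit Hilbert--Schmidt orthogonality relation $\operatorname{tr}(P_{\vec\mu}P_{\vec\nu})=2^N\prod_j\delta_{\mu_j\nu_j}$, which is a welcome tightening of the paper's argument rather than a departure from it.
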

    \begin{proof}
        Since $\{\sigma_{i}\}_{i=0}^{3}$ are $2\times 2$ matrices, we may use the lexicographic ordering convention and take $\otimes = \otimes_K$, so that the elements in $\mathcal{B}$ can be regarded as matrices\footnote{This also explains why $i(\sigma_0\otimes\cdots\otimes\sigma_0)$ has to be removed, because once it is seen as a matrix, it is linearly dependent on the rest.} of size $2^{N}\times 2^{N}$.
        
        We note that there are a total of $4^{N}-1$ elements in $\mathcal{B}$ that are linearly independent, thanks to the tensor product construction. These constitute a basis of an $\mathbb{R}-$vector space of the same dimension as $\mathfrak{su}(2^{N})$. They also satisfy

    \begin{equation}
        \left(i\left(\sigma_{\mu_{1}} \otimes \ldots \otimes \sigma_{\mu_{N}}\right)\right)^{\dagger}=-i\left(\sigma_{\mu_{1}}^{\dagger} \otimes \cdots\otimes \sigma_{\mu_{N}}^{\dagger}\right)=-i\left(\sigma_{\mu_{1}} \otimes \cdots\otimes \sigma_{\mu_N}\right),
    \end{equation}
    
    \noindent i.e. they are skew-hermitian. Moreover, since $\operatorname{Tr}\{A \otimes B\}=\operatorname{Tr}(A) \cdot \operatorname{Tr}(B)$, all matrices have zero trace. This agrees with all the properties required in the definition \ref{def:su(N)} and thus, they form a basis of $\mathfrak{su}(2^N)$.
    \end{proof}

\section{Proof of Theorem \ref{thm:connectivitySU}}
\label{sec:proofconnected}

\subsection{Logical Steps of the proof}
\begin{itemize}
    \item Any $A \in \SU(n)$ can be diagonalized (Lemma \ref{def:diagonalbasic}). We define a diagonal matrix $D_A(t)$ in \ref{def:diagonalparam} which will be used later.
    \item There is a path in $\SU(n)$ connecting any $A \in \SU(n)$ to an element in the center $Z(\SU(n))$ (Lemma \ref{lemma:fifthlemma}).
    \item In the \ref{lemma:sixthlemma} we prove that there is a path in $\mathcal{M}\SU(n)$ which connects any $A \in \mathcal{M}\SU(n)$ to some element of the center of $\SU(n)$.
    \item In \ref{lemma:centers} we show that there are paths in $\SU(n)$ connecting any two elements of $Z(\SU(n))$.
    \item The key to proving the connectedness of $\mathcal{M}\SU(n)$ is to determine whether these paths connecting elements of $Z(\SU(n))$ fall entirely within $\mathcal{M}\SU(n)$. This depends on the properties of $\mathcal{M}$. When $\mathcal{M}$ is a set of commuting and diagonalizable matrices or $\mathcal{M}$ arises from the representation of a permutation group, then the elements of $Z(\SU(n))$ are path connected in $\mathcal{M}\SU(n)$, implying $\mathcal{M}\SU(n)$ is connected.
\end{itemize}

In this appendix we include the details of the proof of theorem \ref{thm:connectivitySU}. We start by providing some lemmas and definitions which will be used for the proof.

\subsection{The Proof}
\begin{lemma}
\label{def:diagonalbasic}
 Any $A\in SU(n)$ can be diagonalized as 
 \begin{equation}
     A=PD_AP^{\dagger},   
 \end{equation}
 where $D_A=\operatorname{diag}(e^{i\theta_1},e^{i\theta_2},\dots,e^{i\theta_{n}})$ and $\theta_j\in[0,2\pi)$ for every $j\in \{1,\ldots, n\}$.
\end{lemma}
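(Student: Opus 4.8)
The plan is to reduce the statement to the spectral theorem for normal matrices. First I would observe that every $A\in \SU(n)$ is in particular unitary, so it satisfies $AA^\dagger = A^\dagger A = \mathbbm{1}$; in other words $A$ commutes with its own adjoint and is therefore a \emph{normal} matrix. The spectral theorem for normal operators on a finite-dimensional complex inner product space then guarantees the existence of an orthonormal eigenbasis, equivalently a unitary matrix $P$ (so that $P^{-1}=P^\dagger$) and a diagonal matrix $D_A$ with $A = P D_A P^\dagger$. This is exactly the form claimed, with the eigenvalues of $A$ appearing along the diagonal of $D_A$.

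It then remains to identify the diagonal entries. The key step is to show each eigenvalue $\lambda$ of $A$ has modulus one. If $Av=\lambda v$ for some nonzero $v$, then since $A$ preserves the norm,
\begin{equation}
    \|v\|^2 = \langle Av, Av\rangle = |\lambda|^2\|v\|^2,
\end{equation}
so $|\lambda|=1$. Hence each eigenvalue lies on the unit circle and can be written as $\lambda_j = e^{i\theta_j}$ with a unique $\theta_j\in[0,2\pi)$, giving $D_A = \operatorname{diag}(e^{i\theta_1},\ldots,e^{i\theta_n})$ as desired.

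There is no real obstacle here, as the result is a direct application of standard linear algebra; the only point requiring a little care is that the diagonalizing matrix $P$ can be taken \emph{unitary}, which is precisely the content of the spectral theorem for normal matrices (an arbitrary diagonalization would only give $A = PD_AP^{-1}$). I would also remark that the determinant-one condition $\det A = 1$ forces $\sum_{j=1}^n \theta_j \equiv 0 \pmod{2\pi}$, but this constraint is not needed for the statement as written and so I would not dwell on it.
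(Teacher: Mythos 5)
Your proof is correct and follows essentially the same route as the paper, which simply cites the spectral theorem for normal matrices (unitary $\Rightarrow$ normal $\Rightarrow$ unitarily diagonalizable, Horn--Johnson \S 2.5). Your additional verification that each eigenvalue has modulus one, justifying the form $e^{i\theta_j}$ with $\theta_j \in [0,2\pi)$, is a worthwhile detail the paper leaves implicit.
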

\begin{proof}
All unitary matrices are normal matrices and all normal matrices are unitarily diagonalizable, see section 2.5 in \cite{Horn_Johnson_1985} for full proof.
\end{proof}
\begin{remark}
     In the remainder of this section, we shall use the notation $D_A$ and $P$ to refer to the matrices that diagonalize the matrix $A$.
\end{remark}

\begin{definition}
\label{def:diagonalparam}
    Given a matrix $A\in \SU(n)$, with the decomposition $A = PD_AP^\dagger$ as above, we define for each $t\in[0,1]$ the \textbf{parametrized diagonal matrix} $D_A(\cdot)\colon[0,1]\to \SU(n)$ as
    
\begin{equation}
    D_A(t)=\operatorname{diag}(e^{i(t\theta_1+(1-t)\varphi_k)},e^{i(t\theta_2+(1-t)\varphi_k)},\dots,e^{i(t\theta_{n}+(1-t)\varphi_k)}),
\end{equation}
where $\varphi_k := \frac{2\pi k}{n}$ for a fixed $k\in\{0,1,\ldots, n-1\}$, and from it define $A(t):=PD_A(t)P^\dagger$.
\end{definition}
\begin{remark}
    The reason to define $\varphi_k$ as above is that we will show that any element $A \in \SU(n)$ for which the sum of the exponents of the eigenvalues add up to $2\pi k$ for some $k\in\{0,1,\ldots, n-1\}$ (we will say such a matrix $A$ is of \textbf{type $k$}) is connected to the element of $Z\left(\SU(2^N)\right)$, of the form $\zeta_k\mathbbm{1}_{n}:=e^{i\frac{2k\pi}{n}}\mathbbm{1}_{n}$.

    We recall that the center of $\SU(n)$ consists of the matrices $\zeta_k\mathbbm{1}_{n}$ for $k\in\{0,1,\ldots,n-1 \}$ where $\zeta_k$ is the root of unity we just defined.
\end{remark}

\begin{lemma}
\label{lemma:fifthlemma}
    Let $A,P,D_A, D_A(t)$ and $\theta_i$ for $i\in \{1,\ldots, n\}$ be as defined in \ref{def:diagonalbasic} and \ref{def:diagonalparam}. If $A$ is of type $k$, namely, $\sum_{i=1}^n\theta_i=2\pi k$ for a fixed $k\in\{0,1,\dots,n-1\}$, then there is a path in $\SU(n)$ connecting $A$ and $\zeta_k\mathbbm{1}_{n}\in Z\left(\SU(n)\right)$.
\end{lemma}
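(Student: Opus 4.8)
The plan is to use the explicit path already assembled in Definition \ref{def:diagonalparam}, namely $A(t) := P D_A(t) P^\dagger$, and verify that it meets every requirement. First I would check the two endpoints. At $t=1$ each exponent reduces to $t\theta_j + (1-t)\varphi_k = \theta_j$, so $D_A(1) = D_A$ and therefore $A(1) = P D_A P^\dagger = A$. At $t=0$ every exponent collapses to the common value $\varphi_k = 2\pi k/n$, giving $D_A(0) = e^{i\varphi_k}\mathbbm{1}_n = \zeta_k\mathbbm{1}_n$; since this scalar commutes with $P$ and $PP^\dagger = \mathbbm{1}_n$, we get $A(0) = P(\zeta_k\mathbbm{1}_n)P^\dagger = \zeta_k\mathbbm{1}_n$. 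Thus the endpoints are exactly $A$ and $\zeta_k\mathbbm{1}_n$, as needed.

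Next I would confirm that $A(\cdot)$ is a continuous curve lying inside $\SU(n)$. Continuity is immediate: each diagonal entry $e^{i(t\theta_j + (1-t)\varphi_k)}$ depends continuously on $t$, and conjugation by the fixed matrix $P$ is continuous, so $t\mapsto A(t)$ is continuous on $[0,1]$. Unitarity is equally direct, since $D_A(t)$ is diagonal with entries of unit modulus and hence unitary, so $A(t) = P D_A(t) P^\dagger$ is a product of unitaries and satisfies $A(t)^\dagger A(t) = \mathbbm{1}_n$.

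The one genuinely load-bearing step, and the only place the type-$k$ hypothesis enters, is showing $\det A(t) = 1$ for \emph{all} $t$, not merely at the endpoints. Because conjugation preserves determinants, $\det A(t) = \det D_A(t) = \exp\!\big(i\textstyle\sum_{j=1}^n (t\theta_j + (1-t)\varphi_k)\big)$. I would then evaluate the exponent: using the hypothesis $\sum_{j=1}^n\theta_j = 2\pi k$ together with $n\varphi_k = 2\pi k$, the sum becomes $t\,(2\pi k) + (1-t)\,(2\pi k) = 2\pi k$, independent of $t$. Hence $\det A(t) = e^{2\pi i k} = 1$ throughout, so $A(t)\in\SU(n)$ for every $t$. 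Combining the three checks, $A(\cdot)$ is a continuous path in $\SU(n)$ from $A$ to $\zeta_k\mathbbm{1}_n$, which proves the claim. The argument is essentially a bookkeeping verification rather than a deep one; the only subtlety worth flagging is that the interpolation must keep the determinant pinned at $1$ along the whole path, which is precisely why $\varphi_k$ was defined so that the total phase $n\varphi_k$ of the central target matches the total phase $2\pi k$ of $A$.
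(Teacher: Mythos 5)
Your proposal is correct and matches the paper's own proof essentially step for step: both verify that $A(t) = P D_A(t) P^\dagger$ is unitary, that $\det A(t) = \exp\bigl(i\sum_{j=1}^n (t\theta_j + (1-t)\varphi_k)\bigr) = e^{2\pi i k} = 1$ via the type-$k$ hypothesis, and that the endpoints are $A$ and $\zeta_k\mathbbm{1}_n$. Your write-up is in fact slightly more explicit than the paper's (spelling out continuity and the endpoint computations), but there is no substantive difference in approach.
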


\begin{proof}
 We recall here that $A(t):=PD_A(t)P^{\dagger}$ for $t\in[0,1]$, where

\begin{equation}
    D_A(t)=\operatorname{diag}(e^{i(t\theta_1+(1-t)\varphi_k)},e^{i(t\theta_2+(1-t)\varphi_k)},\dots,e^{i(t\theta_{n}+(1-t)\varphi_k)}).
\end{equation}

\noindent Then $A(t)A(t)^{\dagger}=A(t)^{\dagger}A(t)=\mathbbm{1}_{n}$ and

\begin{align}
    \det\{A(t)\}&=\det\{D(t)\}=\exp\left(i\sum_{i=1}^{n}(t\theta_i+(1-t)\varphi_k)\right)\\
    &=\exp(i(t\cdot2k\pi+(1-t)\cdot2k\pi))=\exp(i2k\pi)=1.
\end{align}

\noindent Therefore, $A(t)$ is in $\SU(n)$ for all $t\in[0,1]$ and is thus a path connecting $A$ and $\zeta_k\mathbbm{1}_{n}$.
\end{proof}

\begin{lemma}
\label{lemma:sixthlemma}
Let $k\in\{0,1,\ldots, n-1\}$ be fixed. If $A\in  \mathcal{M}\SU(n)$ for any $ \mathcal{M}\subseteq M_{n}(\mathbb{C})$ is of type $k$, then $A(t)\in  \mathcal{M}\SU(n)$ is a path connecting $A$ and $\zeta_k\mathbbm{1}_{2^N}$.
\end{lemma}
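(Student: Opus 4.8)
The plan is to observe that nearly all the work is already done by Lemma \ref{lemma:fifthlemma}: that lemma gives us that the curve $A(t)=PD_A(t)P^\dagger$ stays in $\SU(n)$, is continuous in $t$, and has the correct endpoints, namely $A(1)=A$ and $A(0)=\zeta_k\mathbbm{1}_n$ (since $D_A(0)=e^{i\varphi_k}\mathbbm{1}_n$ and $e^{i\varphi_k}=\zeta_k$). Hence the only genuinely new thing to establish here is that every point of this curve commutes with each $B\in\mathcal{M}$, upgrading $A(t)\in\SU(n)$ to $A(t)\in\mathcal{M}\SU(n)$.

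The key observation I would exploit is that the phase $(1-t)\varphi_k$ is identical in every diagonal entry of $D_A(t)$, so it factors out as a global scalar:
\begin{equation}
    A(t) = e^{i(1-t)\varphi_k}\, P\operatorname{diag}\!\left(e^{it\theta_1},\ldots,e^{it\theta_n}\right)P^\dagger = e^{i(1-t)\varphi_k}\, f_t(A),
\end{equation}
where $f_t$ is the function on the spectrum of $A$ defined by $f_t(e^{i\theta_j}):=e^{it\theta_j}$, the exponents $\theta_j\in[0,2\pi)$ being fixed by the diagonalization in Lemma \ref{def:diagonalbasic}. The branch convention $\theta_j\in[0,2\pi)$ is exactly what makes $f_t$ single-valued on the spectrum: if two eigenvalues coincide, $e^{i\theta_j}=e^{i\theta_l}$, then necessarily $\theta_j=\theta_l$, so $f_t$ assigns them the same value and $f_t(A)$ is a well-defined matrix independent of the diagonalizing unitary $P$.

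The commutation would then follow from the functional calculus for the normal (indeed unitary) matrix $A$. Writing the spectral decomposition $A=\sum_j\lambda_j P_j$ over the distinct eigenvalues, each spectral projection $P_j=\prod_{l\neq j}(A-\lambda_l\mathbbm{1}_n)/(\lambda_j-\lambda_l)$ is a polynomial in $A$, so $f_t(A)=\sum_j f_t(\lambda_j)P_j$ is itself a polynomial in $A$. Consequently any $B$ commuting with $A$ commutes with $f_t(A)$; since the scalar $e^{i(1-t)\varphi_k}$ is central, it commutes with everything, and therefore $B\,A(t)=A(t)\,B$ for all $t\in[0,1]$. Because $A\in\mathcal{M}\SU(n)$ commutes with every $B\in\mathcal{M}$ by hypothesis, the same holds for $A(t)$, which together with $A(t)\in\SU(n)$ yields $A(t)\in\mathcal{M}\SU(n)$ throughout.

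The step I expect to be the main obstacle is precisely the well-definedness of $f_t(A)$ when $A$ has degenerate eigenvalues, together with its representation as a polynomial in $A$. Once the branch convention rules out multivaluedness, the Lagrange-interpolation representation makes the commutation with $\mathcal{M}$ automatic, and the remaining endpoint and continuity facts are exactly those already furnished by Lemma \ref{lemma:fifthlemma}.
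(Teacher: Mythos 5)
Your proof is correct, but it takes a genuinely different route from the paper. The paper argues at the level of eigenvectors: writing $A=PD_AP^\dagger$ with eigenspaces $\mathcal{S}_1,\ldots,\mathcal{S}_m$, it shows from $AS=SA$ that each column $(SP)^{(i)}$ lies in some $\mathcal{S}_l$, verifies (via a discussion of when interpolated eigenvalues can collide) that $A(t)$ has the same eigenspaces as $A$ with eigenvalues $e^{i(t\theta_l+(1-t)\varphi_k)}$, and then computes $A(t)(SP)^{(i)}=e^{i(t\theta_l+(1-t)\varphi_k)}(SP)^{(i)}$ to conclude $A(t)S=SA(t)$. You instead factor out the global phase, $A(t)=e^{i(1-t)\varphi_k}f_t(A)$, and invoke the functional calculus: the branch convention $\theta_j\in[0,2\pi)$ from Lemma \ref{def:diagonalbasic} makes $f_t$ single-valued on the spectrum, the Lagrange-interpolation formula exhibits $f_t(A)$ as a polynomial in $A$, and commutation with every $B\in\mathcal{M}$ is then automatic, with $\SU(n)$-membership, continuity, and endpoints already supplied by Lemma \ref{lemma:fifthlemma}. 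Both arguments rest on the same underlying fact—that $A(t)$ acts as interpolated scalars on the eigenspaces of $A$—but your packaging buys two things: it eliminates all explicit eigenvector bookkeeping, and it makes the paper's eigenvalue-collision analysis unnecessary, since a polynomial in $A$ commutes with the commutant of $A$ even if eigenvalues merge along the path (as indeed happens at $t=0$, where $A(0)=\zeta_k\mathbbm{1}_n$ is central; the paper's claim that distinct eigenspaces never share an $A(t)$-eigenvalue quietly fails at that endpoint, harmlessly for its argument but not addressed there). What the paper's hands-on computation buys in exchange is a self-contained matrix-level verification that does not presuppose familiarity with spectral projections or the functional calculus.
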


\begin{proof}

Considering the eigenvalues of $A$ may have multiplicities larger than one, we let $\mathcal{S}_1,\ldots, \mathcal{S}_m$, where $m\leq n$ denote the eigenspaces of $A$ corresponding to the  distinct eigenvalues $e^{i\theta_1},\ldots, e^{i\theta_m}$ and $\dim\{\mathcal{S}_l\} = N_l$. The diagonalization of $A$ reads $A = PD_AP^\dagger$, where

\begin{equation}
    D_A=\operatorname{diag}(\underbrace{e^{i\theta_1},\dots,e^{i\theta_1}}_{N_1},\dots,\underbrace{e^{i\theta_m},\dots,e^{i\theta_m}}_{N_m}).
\end{equation}

\noindent Since $AP=PD_A$, the $i-$th column $P^{(i)}$ of $P$, which is nothing but an eigenvector of $A$ corresponding to the eigenspace $\mathcal{S}_l$ for some $1\leq l \leq m$, satisfies $AP^{(i)}=e^{i\theta_l}P^{(i)}$. Moreover, for any element $S\in  \mathcal{M}$ it follows that

\begin{equation}
    AS=SA\Leftrightarrow AS = SPD_AP^{\dagger}\Leftrightarrow  A(SP) = (SP)D_A \Rightarrow A(SP)^{(i)}=e^{i\theta_l}(SP)^{(i)}.
\end{equation}

\noindent Since $A(SP)^{(i)}=e^{i\theta_l}(SP)^{(i)}$ for all $i\in\{1,\ldots, n\}$, then for each $1\leq i\leq n$ there exists some $1\leq l \leq m$ such that $(SP)^{(i)}\in\mathcal{S}_l$. Thus, we may rewrite $(SP)^{(i)}$ in terms of the eigenvectors of $\mathcal{S}_l$:

\begin{equation}
    (SP)^{(i)}=\sum_{P^{(s)}\in \mathcal{S}_l}c_sP^{(s)} \quad c_s\in\mathbb{C}.
\end{equation}

\noindent For $t\in[0,1]$, by the construction of $D_A(t)$ and $A(t)$ we also have $A(t)P^{(i)}=e^{i(t\theta_l+(1-t)\varphi_k)}P^{(i)}$. Note, moreover, that if $P^{(i)}, P^{(j)}\in\mathcal{S}_l$, then $A(t)P^{(i)}$ and $A(t)P^{(j)}$ share the same eigenvalue\footnote{This is not trivial, because if $A(t) = \text{diag}(e^{it\pi}, e^{-it\pi})$, for $t=1$, the eigenspace has dimension two, and the degeneracy is broken for $0<t<1$. However, thanks to our decomposition, this will not happen.} $e^{i(t\theta_l+(1-t)\varphi_k)}$.  Indeed, suppose this is not the case. Then, there exists some $\delta\in[0,2\pi)$ such that

\begin{equation}
    e^{i(t\theta_l+(1-t)\varphi_k) + \delta} =  e^{i(t\theta_l+(1-t)\varphi_k)}
\end{equation}

from which $e^{\delta}=1$, i.e. $\delta = 2k\pi$ for some $k\in\mathbb{Z}$, but the only possibility is $\delta = 0$, i.e. they are equal.

It will also not happen that two eigenvectors $P^{(i)}\in\mathcal{S}_l, P^{(j)}\in\mathcal{S}_m$ in different eigenspaces will share the same $A(t)-$eigenspace. Indeed,

\begin{equation}
    e^{i(t\theta_l+(1-t)\varphi_k)} = e^{i(t\theta_m+(1-t)\varphi_k)} \Leftrightarrow e^{it(\theta_l-\theta_m)}=1 \Leftrightarrow it(\theta_l-\theta_m) = 2\pi k \text{ for } k\in\mathbb{Z}.
\end{equation}

\noindent However, since $0\leq\theta_l,\theta_m<2\pi$, then $-2\pi<|\theta_l-\theta_m|<2\pi$ and since $t\in[0,1]$, we would only have $\theta_l = \theta_m$ as the only reasonable possibility. This contradicts our assumption that $P^{(i)}\in\mathcal{S}_l, P^{(j)}\in\mathcal{S}_m$ for $m\neq l$.

\noindent Hence, the eigenspace corresponding to the eigenvalue $e^{i\theta_l}$ of $A$ is the same as that corresponding to the eigenvalue $e^{i(t\theta_l+(1-t)\varphi_k)}$ of $A(t)$. Thus,

\begin{equation}
    A(t)(SP)^{(i)}=\sum_{P^{(s)}\in \mathcal{S}_l}c_sA(t)P^{(s)}=e^{i(t\theta_l+(1-t)\varphi_k)}\sum_{P^{(s)}\in \mathcal{S}_l}c_sP^{(s)}=e^{i(t\theta_l+(1-t)\varphi_k)}(SP)^{(i)},
\end{equation}

\noindent which implies

\begin{equation}
    A(t)SP=SPD(t)\Leftrightarrow A(t)S=SPD(t)P^{-1}=SA(t).
\end{equation}

\noindent We have thus shown that for every $t\in[0,1]$, $A(t)S = SA(t)$, i.e. $A(t)\in\mathcal{M}\SU(n)$. Moreover, since $A(0)=e^{i\varphi_k}\mathbbm{1}_{n}$ and $A(1)=A$, it follows that $A(t)$ is a path connecting $\zeta_k\mathbbm{1}_n$ and $A$ in $\mathcal{M}\SU(n)$.
\end{proof}

Now, we show that there is a path in $\SU(n)$ connecting any two elements in $Z(\SU(n))$.

\begin{lemma}
\label{lemma:centers}
For $k\neq l\in\{0,1,\ldots, n-1\}$, there are paths in $\SU(n)$ connecting $\zeta_k\mathbbm{1}_{n}$ and $\zeta_l\mathbbm{1}_{n}$, i.e. the elements of the center $Z\left(\SU(n)\right)$ are connected in $\SU(n)$.
\end{lemma}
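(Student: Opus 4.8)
The plan is to observe first that the statement is almost immediate from the path-connectedness of $\SU(n)$ established in Section \ref{sec:propertiesSU}: each $\zeta_k\mathbbm{1}_n$ lies in $\SU(n)$, since its determinant is $\zeta_k^n = e^{i2k\pi} = 1$, so any two such elements are joined by \emph{some} continuous path in $\SU(n)$. However, because the subsequent step of the proof of Theorem \ref{thm:connectivitySU} must decide whether such a path remains inside $\mathcal{M}\SU(n)$, I would instead exhibit an \emph{explicit} path through diagonal matrices, which makes the later verification tractable.

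Concretely, I would first reduce to connecting an arbitrary $\zeta_k\mathbbm{1}_n$ to the identity $\mathbbm{1}_n$ (the case $l=0$); a path between $\zeta_k\mathbbm{1}_n$ and $\zeta_l\mathbbm{1}_n$ is then obtained by concatenating the path from $\zeta_k\mathbbm{1}_n$ to $\mathbbm{1}_n$ with the reverse of the path from $\zeta_l\mathbbm{1}_n$ to $\mathbbm{1}_n$. To connect $\zeta_k\mathbbm{1}_n$ to $\mathbbm{1}_n$, I would define the diagonal path $\gamma(t) = \operatorname{diag}(e^{i\theta_1(t)},\ldots,e^{i\theta_n(t)})$ whose first $n-1$ phases decrease linearly as $\theta_j(t) = (1-t)\tfrac{2k\pi}{n}$, while the last phase increases as $\theta_n(t) = \tfrac{2k\pi}{n}\bigl(1+(n-1)t\bigr)$, chosen precisely so that $\sum_{j}\theta_j(t) = 2k\pi$ for every $t$. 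At $t=0$ all phases equal $\tfrac{2k\pi}{n}$, giving $\zeta_k\mathbbm{1}_n$, whereas at $t=1$ the first $n-1$ phases vanish and the last equals $2k\pi\equiv 0$, giving $\mathbbm{1}_n$.

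The crux of the construction, and the reason a naive choice fails, is maintaining the determinant constraint $\det\gamma(t)=1$ throughout. Interpolating all phases uniformly from $\tfrac{2k\pi}{n}$ to $0$ would produce a determinant $e^{i2\pi(1-t)k}$ that departs from $1$ for generic $t$, so that path would exit $\SU(n)$; the non-uniform interpolation above repairs this by keeping the phase sum pinned at the multiple $2k\pi$ of $2\pi$. I would then verify directly that $\gamma(t)\gamma(t)^\dagger = \mathbbm{1}_n$ and $\det\gamma(t) = e^{i\sum_{j}\theta_j(t)} = e^{i2k\pi} = 1$ for all $t\in[0,1]$, so $\gamma$ is genuinely a path in $\SU(n)$.

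I expect the genuinely important point, which this lemma sets up rather than resolves, to be that such a path cannot be routed through scalar matrices alone: the scalar matrices lying in $\SU(n)$ are exactly the discrete center $\{\zeta_m\mathbbm{1}_n\}_{m=0}^{n-1}$, so every path between two distinct center elements must pass through non-scalar matrices. Whether those non-scalar intermediate matrices still commute with all of $\mathcal{M}$, i.e. whether the path remains in $\mathcal{M}\SU(n)$, is precisely the condition that the remaining part of the proof of Theorem \ref{thm:connectivitySU} must establish for the two admissible classes of $\mathcal{M}$.
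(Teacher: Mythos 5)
Your proof is correct and essentially identical to the paper's: the paper also reduces to joining $\zeta_k\mathbbm{1}_n$ to $\mathbbm{1}_n$ and uses the explicit diagonal path $\Gamma_h(t)=\operatorname{diag}(e^{it\frac{2k\pi}{n}},\dots,e^{it(\frac{2k\pi}{n}-2k\pi)},\dots,e^{it\frac{2k\pi}{n}})$, in which one phase winds oppositely to pin the determinant at $1$. Your path $\gamma(t)$ is, after the reparametrization $s=1-t$ and using $e^{i2k\pi}=1$, literally the same family of matrices as $\Gamma_h$ with $h=n$, so the two arguments coincide.
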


\begin{proof}
Without loss of generality we may prove that for any $k$, there exists a path connecting $\mathbbm{1}_{n}$ with $\zeta_k\mathbbm{1}_{n}$. Notice that

\begin{equation}
    \zeta_k\mathbbm{1}_{n}=\operatorname{diag}(e^{i\frac{2k\pi}{n}},\dots,e^{i\frac{2k\pi}{n}},e^{i(\frac{2k\pi}{n}-2k\pi)},e^{i\frac{2k\pi}{n}},\dots,e^{i\frac{2k\pi}{n}}),
\end{equation}

\noindent where $e^{i(\frac{2k\pi}{n}-2k\pi)}$ is at the $h-$th position for any $h=1,\dots,n$. Define the path

\begin{equation}
    \Gamma_h(t)=\operatorname{diag}(e^{it\frac{2k\pi}{n}},\dots,e^{it\frac{2k\pi}{n}},e^{it(\frac{2k\pi}{n}-2k\pi)},e^{it\frac{2k\pi}{n}},\dots,e^{it\frac{2k\pi}{n}}).
\end{equation}

\noindent It is easy to see that all paths $\Gamma_h(t)$ of the form above are in $\SU(n)$ and verify $\Gamma_h(1)=\zeta_k\mathbbm{1}_n$ and $\Gamma_h(0)=\mathbbm{1}_n$.
\end{proof}

Equipped with these lemmas, we are finally in a position to prove theorem \ref{thm:connectivitySU}, in this case, setting $n=2^N$.

\begin{proof}{Proof of theorem \ref{thm:connectivitySU}.}
    \begin{enumerate}
        \item Let $\mathcal{M}:=\{C_i\}_{i\in I}$ be a subset of matrices of $M_{2^N}(\mathbb{C})$ that commute and are diagonalizable, where $I$ is some set of indices, not necessarily finite. From the definition of $\mathcal{M}\SU(2^N)$, see equation \eqref{eq:definvariantSU}, we observe that given a matrix $A\in \mathcal{M}\SU(2^N)$, the set $\{A\}\cup \mathcal{M}$ consists of commuting diagonalizable matrices. It is a known fact that they are all simultaneously diagonalizable over $\mathbb{C}$, i.e. there exists a matrix $P$ such that $PAP^{-1}=D_A$ and $PC_iP^{-1}=D_i$ ($D_A, D_i$ are diagonal matrices) for all $i\in I$. Since $A$ is in $\SU(2^N)$, we have the decomposition $A = PD_AP^\dagger$.

        %\begin{equation}
            %D_A = \operatorname{diag}\{\lambda_1,\ldots, \lambda_{2^N}\}, \quad |\lambda_i|^2 = 1\,\forall i=1,\ldots, 2^{N},\quad \prod_{i=1}^{2^N}\lambda_i = 1,
        %\end{equation}

        %where there might be repeated $\lambda$'s. The above conditions reduce to

        %\begin{equation}
            %D_A = \operatorname{diag}(e^{i\theta_1}, e^{i\theta_2},\ldots, e^{i\theta_{2^{N}-1}}, e^{i\theta_{2^{N}}}), \quad \sum_{i=1}^{2^{N}}\phi_i = 2\pi k,k\in \mathbb{N}\cup\{0\}.
        %\end{equation}

        Let us now define a path connecting $A$ to $\mathbbm{1}$ as follows:

        \begin{align}
            \Gamma\colon [0,1] &\to \SU(2^{N}) \\
                        t &\mapsto \Gamma(t) := P^{-1}D_A(t)P:=P^{-1}\operatorname{diag}(e^{it\theta_1}, e^{it\theta_2},\ldots, e^{it\theta_{2^{N}-1}}, e^{it\theta_{2^{N}}})P.
        \end{align}

        \noindent Note that $\Gamma(0)=P^{-1}P=\mathbbm{1}$ and that $\Gamma(1) = A$. Note, moreover, that $\Gamma(t)\in \mathcal{M}\SU(2^N)$ for all $t\in [0,1]$, since for any $i\in I$,

        \begin{equation}
            P[\Gamma(t), C_i] P^{-1} = P\Gamma(t)P^{-1}PC_iP^{-1} - PC_iP^{-1}P\Gamma(t)P^{-1}= [D_A(t), D_i]= 0
        \end{equation}

        Therefore, since we have found a path connecting an arbitrary matrix of $\mathcal{M}\SU(2^{N})$ with the identity that lies, for every $t\in [0,1]$ inside $\mathcal{M}\SU(2^{N})$, we are done. 

        % \item Since $\mathcal{M}$ is a set of commuting diagonalizable matrices, all elements in $G$ are simultaneously diagonalizable by some matrix $Q$. Now we claim that $QP_h(t)Q^{-1}\in G\SU(2^{N})$.

        % Indeed, for any $A\in G\SU(2^{N})$, $A$ can be diagonalized as $A=QD_A Q^{-1}$, then
        % $$(QP_h(t)Q^{-1})A=(QP_h(t)Q^{-1})(QD_A Q^{-1})=Q P_h(t)D_AQ^{-1}=Q D_AP_h(t)Q^{-1}=A(QP_h(t)Q^{-1}).$$
        % Hence the centers of $SU(2^{N})$ are path connected in $G\SU(2^{N})$ and by above lemmas we get that $G\SU(2^{N})$ is path connected.
        %\item (Proof of the first element of any $A\in\mathcal{M}$ is 1 if $\mathcal{M}$ is a SWAP group) We only need to show that the first elements of generators of SWAP are 1, because for permutation matrices obtained by multiplication and inversion of generators whose first elements are 1, their first elements must also be 1. 

        %Any permutation can be written as a product of adjacent transpositions and the adjacent transposition between the  $k$th and $(k+1)$th qubits ($k\in\{0,1,\dots, n-1\}$) can be written as
        %\begin{equation}
            %T_k = \underbrace{\mathbbm{1} \otimes \mathbbm{1} \otimes \ldots \otimes \mathbbm{1}}_{k-1} \otimes S \otimes \underbrace{\mathbbm{1} \otimes \mathbbm{1} \otimes \ldots \otimes \mathbbm{1}}_{n-k-1},
         %\end{equation}
        %where $S=\begin{pmatrix}
            %1 & 0 & 0 & 0 \\
            %0 & 0 & 1 & 0 \\
            %0 & 1 & 0 & 0 \\
            %0 & 0 & 0 & 1 \\
        %\end{pmatrix}$.
        %Notice that the first element of $T_k$ is the product of the first elements of the matrices on the right side, so we have $(T_k)_{11}=1$ for any $k\in\{0,1,\dots, n-1\}$.
        
        \item We already know that $A$ is connected to $\zeta_k\mathbbm{1}_{2^N}$ for some $k\in\{0,1,\ldots, 2^N-1\}$ through a path $\tilde{\Gamma}$. Moreover, the requirements on $\mathcal{M}$ we get that for any $A\in\mathcal{M}$, $A$ is a permutation matrix and $A_{11}=1$, as can be inferred from theorem \ref{th:Uimplementingsymmetry}. $A$ can be viewed as a block diagonal matrix, with an identity of size $1\times1$ in the upper left and a permutation matrix $\tilde{A}$ of size $(2^N-1)\times(2^N-1)$ in the lower right. Let $\Gamma(t)=\operatorname{diag}(e^{it(\varphi_k-2k\pi)},e^{it\varphi_k},\dots,e^{it\varphi_k})$ be the path in $\SU(2^N)$ connecting $\mathbbm{1}_{2^N}$ and $\zeta_k\mathbbm{1}_{2^N}$, see \ref{lemma:centers} for details. Then
        \begin{align}
            A\Gamma(t)A^{-1} &=\operatorname{diag}(e^{it(\varphi_k-2k\pi)},\tilde{A}e^{it\varphi_k}\mathbbm{1}_{2^N-1}\tilde{A}^{-1})\\
            & =\operatorname{diag}(e^{it(\varphi_k-2k\pi)},e^{it\varphi_k}\mathbbm{1}_{2^N-1})=\Gamma(t).
        \end{align}
        Therefore, $\Gamma(t)\in \mathcal{M}\SU(2^{N})$. Combining this path together with $\tilde{\Gamma}$ we conclude the proof.
    \end{enumerate}

\end{proof}

\section{Simulation details}\label{sec:simulation-details}

We document the details of the simulation performed in figure \ref{fig:simulation scaling}. We construct our circuits following \cite{larocca_group-invariant_2022} for permutation-invariant quantum circuits and generalize the structure to cyclic invariance. The  strong entangling ansatz is taken from \cite{schuld_circuit-centric_2020} and a standard building block in many quantum machine learning programming frameworks \cite{bergholm_pennylane_2022}.

The circuit structures used in these experiments are detailed in figure \ref{fig:ansatz-overview}. These quantum circuit layers are adapted to the respective number of qubits and are layered to cover a significant portion of the operator space. In the experiments, we aim to keep the number of parameters approximately the same across the different circuits. 

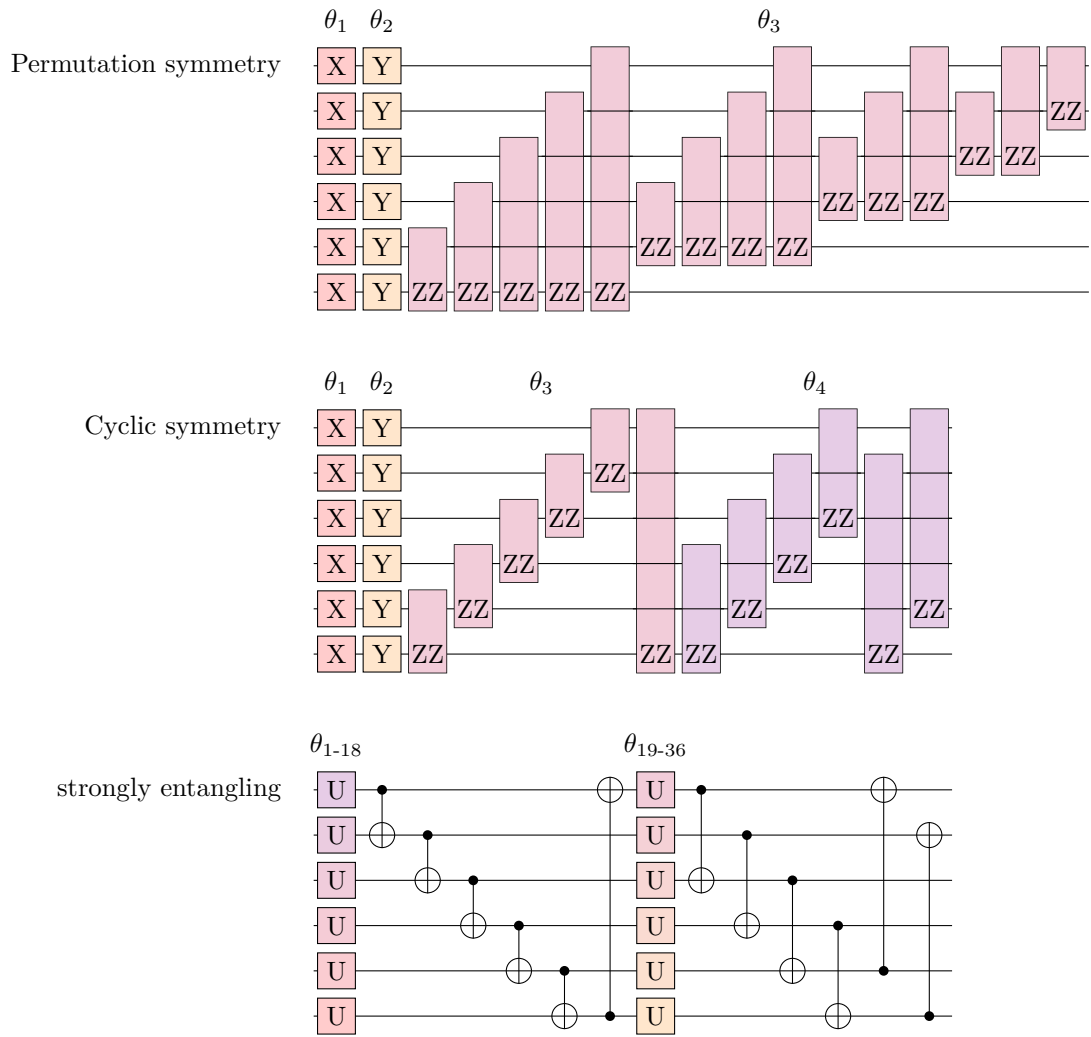
\begin{figure}
    \centering
      \begin{tikzpicture}[gate/.style={rectangle, draw=black, fill=white}, xscale=.6, yscale=.6]
    \begin{scope}[yscale=-1, yshift=-7cm]
\foreach \i/\col in {1/0, 2/20, 3/40, 4/60, 5/80, 6/100} {
\draw (0.5, \i) --+ (14, 0);
\draw (1, \i) node[gate, fill=red!\col!violet!20] {U};
	%(2, \i) node[gate, fill=orange!\col!purple!20] {Y};
	};
\foreach \i in {5,4, ..., 1} {
\path (\i + 1, \i+1) node[circle, draw=black] (target) {} (\i + 1, \i) node[circle, fill=black, inner sep=1.3pt] (control) {};
\draw (target.south) -- (control.center);};
\path (7, 1) node[circle, draw=black] (target) {} (7, 6) node[circle, fill=black, inner sep=1.3pt] (control) {};
\draw(target.north) -- (control.center);

\begin{scope}[xshift = 7cm]
\foreach \i/\col in {1/0, 2/20, 3/40, 4/60, 5/80, 6/100} {
\draw (1, \i) node[gate, fill=orange!\col!purple!20] {U};
	};
\foreach \i in {1,2, ..., 4} {
\path (\i + 1, \i+2) node[circle, draw=black] (target) {} (\i + 1, \i) node[circle, fill=black, inner sep=1.3pt] (control) {};
\draw (target.south) -- (control.center);};
\foreach \i in {1,2} {
\path (5 + \i, \i) node[circle, draw=black] (target) {} (5 + \i, 4 + \i) node[circle, fill=black, inner sep=1.3pt] (control) {};
\draw(target.north) -- (control.center);};
\end{scope}
\end{scope}
\draw (1, 7) node {$\theta_{1\text{-}18}$}  (8, 7) node {$\theta_{19\text{-}36}$};
\draw (0, 6) node[anchor=east] {strongly entangling};

\begin{scope}[yshift=8cm]
\foreach \i in {1,2,...,6} {
\draw (0.5, \i) --+ (14, 0);
\draw (1, \i) node[gate, fill=red!20] {X}
	(2, \i) node[gate, fill=orange!20] {Y};
	};
\foreach \i in {1,2, ..., 5} {
\draw[line width=.3pt, line cap=rect, double=purple!20, double distance=14pt] (\i + 2, \i) --+ (0, 1);
\draw (\i + 2, \i) node {ZZ};};
\draw[line width=.3pt, line cap=rect, double=purple!20, double distance=14pt] (8, 1) --+ (0, 5);
\draw (8,1) node {ZZ};
\foreach \i in {2, 3, ..., 5} {
\draw (7.5, \i) --+(1,0);};
\foreach \i in {1,2, ..., 4} {
\draw[line width=.3pt, line cap=rect, double=violet!20, double distance=14pt] (\i + 8, \i) --+ (0, 2);
\draw (\i + 7.5, \i+1) --+ (1,0);
\draw (\i + 8, \i) node {ZZ};};
\foreach \i in {1,2} {
\draw[line width=.3pt, line cap=rect, double=violet!20, double distance=14pt] (12 + \i, \i) --+ (0, 4);
	\foreach \j in {1, 2, 3} {
	\draw (11.5 + \i, \i + \j) --+ (1,0);};
\draw (12 + \i,\i) node {ZZ};};
\draw (1, 7) node {$\theta_1$} (2, 7) node {$\theta_2$} (5.5, 7) node {$\theta_3$} (11.5, 7) node {$\theta_4$};
\draw (0, 6) node[anchor=east] {Cyclic symmetry};
\end{scope}

\begin{scope}[yshift=16cm]
\foreach \i in {1,2,...,6} {
\draw (0.5, \i) --+ (17, 0);
\draw (1, \i) node[gate, fill=red!20] {X}
	(2, \i) node[gate, fill=orange!20] {Y};
	\foreach \j in {3, 4, ..., 8} {
	\node (p\j\i) at (\j, \i) {\phantom{$ZZ$}};};};
\foreach \i in {1,2, ..., 5} {
\draw[line width=.3pt, line cap=rect, double=purple!20, double distance=14pt] (\i + 2, 1) --+ (0, \i);
\draw (\i + 2, 1) node {ZZ};};

\foreach \i in {1,2, ..., 4} {
\draw[line width=.3pt, line cap=rect, double=purple!20, double distance=14pt] (\i + 7, 2) --+ (0, \i);
\draw (\i + 7, 2) node {ZZ};
\draw (\i + 2.5, \i + 1) --+ (5 - \i, 0);};

\foreach \i in {1,2, ..., 3} {
\draw[line width=.3pt, line cap=rect, double=purple!20, double distance=14pt] (\i + 11, 3) --+ (0, \i);
\draw (\i + 11, 3) node {ZZ};
\draw (\i + 7.5, \i + 2) --+ (4 - \i, 0);};

\foreach \i in {1,2} {
\draw[line width=.3pt, line cap=rect, double=purple!20, double distance=14pt] (\i + 14, 4) --+ (0, \i);
\draw (\i + 14, 4) node {ZZ};
\draw (\i + 11.5, \i + 3) --+ (3 - \i, 0);};

\foreach \i in {1} {
\draw[line width=.3pt, line cap=rect, double=purple!20, double distance=14pt] (\i + 16, 5) --+ (0, \i);
\draw (\i + 16, 5) node {ZZ};
\draw (\i + 14.5, \i + 4) --+ (2 - \i, 0);};
\draw (1, 7) node {$\theta_1$} (2, 7) node {$\theta_2$} (10.5, 7) node {$\theta_3$};
\draw (0, 6) node[anchor=east] {Permutation symmetry};
\end{scope}
  \end{tikzpicture}
    \caption{The different circuits used in the experiment with the assignment of parameters. The ansatz is repeated in layers until the number of parameters is the same across all ansatzes. Parameters are not shared between layers. Same color in the gates indicate shared parameters.}
    \label{fig:ansatz-overview}
\end{figure}
%\begin{figure}
%    \centering
%    \input{pi-circuit}
%    \caption{Structure of the permutation-invariant circuit used in the experiment shown in figure \ref{fig:simulation scaling}. The structure is repeated as a layer of the quantum machine learning ansatz. The parameters of each layer are independent.}
%    \label{fig:pi-circuit}
%\end{figure}

%\begin{figure}
%    \centering
%    \input{ci-circuit}
%    \caption{Structure of the cyclic-invariant quantum circuit used in the experiment of figure \ref{fig:simulation scaling}. The structure is repeated as a layer in the ansatz. Parameters are not shared between layers.}
%    \label{fig:ci-circuit}
%\end{figure}

%\begin{figure}
%    \centering
%    \input{se-circuit}
%    \caption{The standard strong entangling ansatz used for comparison. }
%    \label{fig:se-circuit}
%\end{figure}

Each $ZZ$ gate in the quantum circuit corresponds to a parameterized element of the algebra, $\exp(-i\theta/2 \sigma_z \otimes \sigma_z)$ distributed across the qubits. These gates can be implemented as quantum circuits as shown in figure \ref{fig:zz-gate}.

\begin{figure}
    \centering
    \begin{tikzpicture}
\foreach \i in {1,2} {
\draw (-.5,\i) --+ (3,0);};
\path (0,1) node[circle, fill=black, inner sep=2pt] {} (0,2) node[circle, draw=black] (c1){}
	(2,1) node[circle, fill=black, inner sep=2pt] {} (2,2) node[circle, draw=black] (c2){};
\draw (c1.north) -- (0,1) (c2.north) -- (2,1);
\draw (1, 2) node[rectangle, draw=black, fill=white] {$R_Z(\alpha)$};
\end{tikzpicture}

    \caption{An implementation of the $ZZ$ gate with elementary quantum circuit elements.}
    \label{fig:zz-gate}
\end{figure}
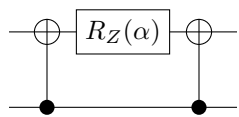
We compare the impact of the growth of two symmetry groups with different scalings in the context of supervised quantum machine learning.
To be precise, we have a dataset $\{(x_i, y_i)\}_{i=1}^M$ consisting of classical features $x_i$ and targets $y_i$.
The corresponding quantum input states $\rho_i$ are obtained via a feature map $f:x_i\mapsto\rho_i$.
A label is predicted by applying an ansatz $U(\boldsymbol\theta)$ to $\rho_i$ with trainable parameters $\boldsymbol\theta$ and estimating the expectation value $\ell_{\boldsymbol\theta}=\mathrm{Tr[U(\boldsymbol\theta)\rho_i U(\boldsymbol\theta)^\dagger O]}$ for an observable $O$.
The quality of the predictions are expressed through a loss function $\mathcal L(\boldsymbol\theta)=\frac{1}{N}\sum_{i=1}^Mf(\ell_{\boldsymbol\theta}(\rho_i),y_i)$.
For instance, choosing $f(\ell_{\boldsymbol\theta}(\rho_i), y_i)=(\ell_{\boldsymbol\theta}(\rho_i) - y_i)^2$ we obtain the mean squarer error $\mathcal L_{\mathrm{mse}}$.
More detailed information can be found in \cite{Larocca2022diagnosingbarren}.

In figure \ref{fig:simulation scaling} we compare the scaling of the variance of the partial derivatives of the mean squared error $\mathrm{Var}[\frac{\partial}{\partial\theta_i}\mathcal{L}_{\mathrm{mse}}(\boldsymbol\theta)]$, for increasing system sizes $n=4,6,\dots,16$.
Typically, gradient-based methods are used to find the minimizing direction for the weights.
Thus analysing how the derivatives concentrate provides insight into the trainability of the circuit \cite{Thanasilp2023}. The exact structure of the quantum circuit has a negligible impact on the quantum circuit. We have tried the numerical experiment with different variations of the cyclic invariant quantum circuit as well as different structures of the standard ansatz. 

\begin{figure}
    \centering
    \begin{tikzpicture}
\begin{axis}[small,
height=6cm, width=12cm,
no markers,
xlabel = {qubits in the circuit},
axis x line = bottom,
axis y line = left,
ymajorgrids,
major grid style = {very thin, gray!50},
major tick style = {very thin, gray!50},
%ytick align=outside,
axis line style={gray},
axis line shift=2pt,
ymode=log,
xmin = 3.9,
xmax = 16.1,
ylabel = {Variance of the derivative of the circuit},
%every axis y label/.style={at={(ticklabel cs:.5)},rotate = 90, anchor=center},
%ymax = 1.02,
%legend entries = {permutation symmetry, cyclic symmetry, cyclic symmetry 2, standard ansatz, strongly entangled standard ansatz},
%legend pos=south east,
]
%
%%% data needs to be in column format
% epch, Sn, Cn, Cn2, entanglement, strong entanglement
% 1, 0.1, 0.2, 0.3, 0.4, 0.5
% 2, 0.1, 0.2, 0.3, 0.4, 0.5
%
% and so on
\addplot [red, thick] table [x=qubits, y=Sn_circuit, col sep=semicolon]{variance-loss-pivot.csv};
\addplot [olive] table [x=qubits, y=Cn_circuit, col sep=semicolon]{variance-loss-pivot.csv};
\addplot [teal] table [x=qubits, y=Cn_circuit2, col sep=semicolon]{variance-loss-pivot.csv};
\addplot [purple] table [x=qubits, y=strongly_entanglement_circuit, col sep=semicolon]{variance-loss-pivot.csv};
\addplot [magenta] table [x=qubits, y=se2, col sep=semicolon]{variance-loss-pivot.csv};
\addplot [red!30!blue] table [x=qubits, y=simple_se, col sep=semicolon]{variance-loss-pivot.csv};

%\addplot[dashed, red, domain=3:16, samples=21] {2/(x^3)};
%\addplot[dashed, teal, domain=3:16, samples=21] {1 * x/(4^x)};
%\addplot[dashed, purple, domain=3:16, samples=21] {1 /(4^x)};
\draw (axis cs: 16, 3e-3) node[above left] {\small permutation symmetry}
(axis cs: 16, 6e-5) node[above left] {\small cyclic symmetry}
%(axis cs: 50, .53) node[above left] {strongly entangled}
(axis cs: 13, 1e-6) node[above left] {\small standard ansatz};
\end{axis}
%\draw[red] (0,0) rectangle (12, 8);
\end{tikzpicture}
    \caption{Extended numerical results for the variance simulation of figure \ref{fig:simulation scaling}. The different standard ansatzes refer to different $\CNOT$ patterns. For the two cyclic invariant curves, the difference is in whether the fourth parameter $\theta_4$ is present or not. The construction makes little difference, indicating that the dimension of the subspace dominates the result.}
    \label{fig:simulation-extended-results}
\end{figure}
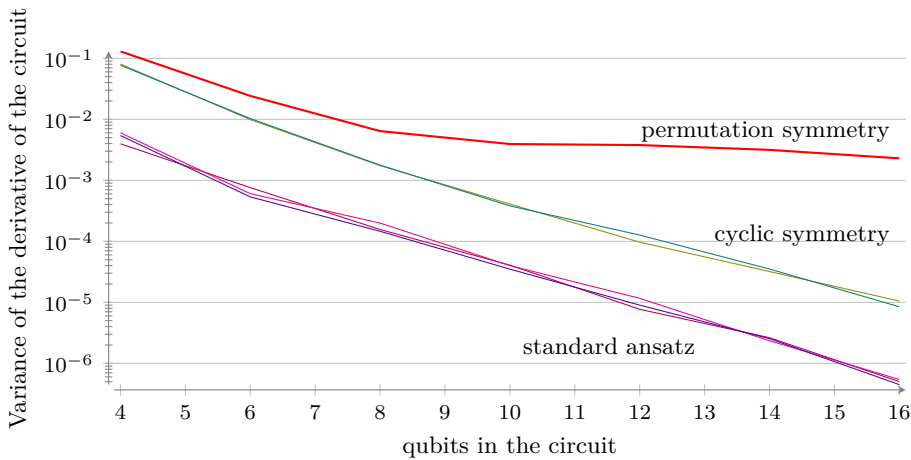

To this end, we considered the simple problem of classifying connected and disconnected graphs $G = (V, E)$ with $|V| = n$, similar to \cite{schatzki_theoretical_2024}. The classical graphs were embedded as quantum graph states. Specifically, the states \cite{Raussendorf2003}
\begin{equation}
    \ket{G}=\prod_{(a,b)\in E}CZ^{\{a, b\}}\ket{+}^{\otimes n}
\end{equation}
were prepared where $CZ^{\{a,b\}}$ is the controlled $Z$ gate acting on the qubits $a$ and $b$.
The figure \ref{fig:ansatz-overview} shows the construction of the permutation equivariant ansatz, cyclic equivariant ansatz and a standard ansatz \cite{schuld_circuit-centric_2020} as a base reference with polynomial and exponential scalings.
We measure the system with the permutation-invariant observable 

\begin{equation} 
Z^{\otimes n}=\sum_{i:\text{even}}\ket{i}\bra{i}-\sum_{i:\text{odd}}\ket{i}\bra{i}
\end{equation}

As a consistent estimator, we used the sample variance to estimate $\mathrm{Var}[\frac{\partial}{\partial\theta_i}\mathcal{L}_{\mathrm{mse}}(\boldsymbol\theta)]$.
We sampled $200$ parameter vectors uniformly from $[-2\pi,2\pi]$ for the respective circuit.
To evaluate the loss, a dataset of size $50$ with balanced classes was generated for each system size according to the Rènyi-Erd\H{o}s model with edge probability $0.4$.
The figure shows the scaling for a representative parameter in the middle of the circuit.
Similar scalings were observed for almost all parameters.

\end{document}